\newtheorem{theorem}{Theorem}
\newtheorem{lemma}[theorem]{Lemma}
\begin{document}

\newcommand{\be}{\begin{equation}}
\newcommand{\ee}{\end{equation}}
\newcommand{\bea}{\begin{eqnarray}}
\newcommand{\eea}{\end{eqnarray}}
\newcommand{\beaa}{\begin{eqnarray*}}
\newcommand{\eeaa}{\end{eqnarray*}}
\newcommand{\p}[1]{\left(#1\right)}
\newcommand{\pp}[1]{\left[#1\right]}
\newcommand{\ppp}[1]{\left\{#1\right\}}

\title{Capacity Region of Finite State Multiple-Access Channel with Delayed  State Information at the Transmitters}
\author{Uria Basher,  Avihay Shirazi, and Haim Permuter}
\maketitle \vspace{-1.4cm}
\begin{abstract}%
A single-letter characterization is provided for
the capacity region of finite-state multiple access channels.
The channel state is a Markov process, the transmitters have access to
delayed state information, and channel
state information is available at the receiver. The delays
of the channel state information are assumed to be asymmetric
at the transmitters.
We apply the result to obtain the capacity region for a finite-state Gaussian MAC, and for a finite-state multiple-access fading channel.
We derive  power control strategies that maximize the capacity region for these channels.
\end{abstract}
\begin{keywords}
Capacity region, Delayed feedback, Directed information, Finite-state channel, Gaussian Multiple-Access channel, Multiple-Access channel,
Multiplexing coding scheme, Successive decoding.
\end{keywords}
\vspace{-0.0cm}
\section{INTRODUCTION}
Wireless communication is an example of channels where the channel
characteristics are time-varying. In a wireless setting, the user's motion and the changes in the environment, as well
as the interference, may lead to temporal changes in the
channel quality. Such channel variation models can include fast fading due to multi-path and
slow fading due to shadowing. In fast fading, the channel state is assumed
to be changing for every channel use, while in slow fading, the channel is assumed to
be constant for each finite block length.

In such communication problems, the channel state information (CSI) can be transmitted to the transmitters either explicitly, or through output CSI feedback. Frequently, the CSI feedback is not instantaneous;
the transmitters have only delayed information regarding the state of the channel.
The availability of the delayed CSI at the transmitters will possibly  increase the
capacity region. The increase in the capacity region due to
CSI depends on the CSI delays relative to the rate at which
the channel is time-varying.
When a channel is slowly time-varying and the delays are small, CSI may significantly
increase the capacity region. However, if the channel is changing rapidly relative
to the CSI delays, the transmitters can no longer adapt to
the channel variations. Hence, availability of delayed CSI  may not result in any
significant capacity region improvement.  Therefore, we are motivated to study the
effect of channel memory and delays on the multiple access channel (MAC)  capacity region.

Let us now present a brief literature review. We are modeling a time-varying channel as a
finite-state Markov channel (FSMC)\cite{Capacity_mutual_information},\cite{H.S._Wang_and_N.Moayeri}.
The FSMC is a channel with a finite number of states. During each symbol transmission, the channel's state is fixed.
The channel transition probability function is determined by the channel state.
The time variation in the
channel characteristics is modeled by the statistics of the underlying state process.

Capacity of memoryless channels, with different cases of state
information being available in a causal or  non causal manner at the 
transmitter  and at the receiver, has been studied by Shannon \cite{C.Shannon} and by Gelfand and Pinsker \cite{Gelfand_Pinsker}.
In \cite{Goldsmith_Varaiya}, Goldsmith and  Varaiya consider the  fading channels with perfect CSI at the transmitter and at the receiver. They proved that with instantaneous and perfect state information, the transmitter
can adapt the data rates for each channel state to  maximize the average transmission rate.
Viswanathan\cite{Viswanathan} loosened  this assumption of perfect instantaneous
CSI, and gave a single letter characterization of the capacity of Markov
channels with delayed CSI.
Caire and Shamai \cite{Shamai99}  consider the case  that  the
channel state is identically distributed (i.i.d.), and the CSI at the transmitter is a
deterministic function of the CSI at the receiver.
They showed that optimal coding is particularly simple.
Chen and Berger in \cite{Chen_Berger} found the capacity of an FSC with
inter-symbol interference (ISI), where current CSI is available at the transmitter and the receiver.
For a comprehensive survey on channel coding with state information see \cite{GuyKeshet_YossefSteinberg_NeriMerhav}.

The MAC with state has received much attention in recent years due to its importance in wireless communication systems.
On the one hand, complete knowledge of the CSI at the transmitters is an unrealistic assumption in wireless communications.
On the other hand, it is reasonable to assume that the receiver does possess full knowledge of the CSI. This practical
consideration has motivated the investigation of a MAC where
each transmitter is informed with its own CSI,  while the receiver is informed with the full CSI.%

Our work is also related to \cite{Como},\cite{Das_Narayan}, and \cite{Permuter_Weissman}. In \cite{Como}   Como and Y\"{u}ksel found the capacity region of FS-MAC,
where the channel state process is i.i.d.,  the transmitters have access to
partial (quantized) CSI, and complete CSI is available at the receiver.
In \cite{Das_Narayan} the capacity of general FS-MAC with varying degrees of causal CSI at the transmitters
is characterized in non-single-letter formulas.
In \cite{Permuter_Weissman} the capacity region of the FS-MAC with feedback that may be an
arbitrary time-invariant function of the channel output has been derived.
Recent related work also includes \cite{Lapidoth_Steinberg},  which studies the state-dependent MAC
with causal and strictly causal side information at the transmitters.

In this work, we consider the capacity region of a finite state Markov Multiple-access channel (FSM-MAC) with CSI at the decoder (receiver) and delayed CSI at the encoders (transmitters) with delays $d_1$ and $d_2$
as illustrated in
Fig. \ref{Multiple-Access Channel with Receiver CSI and Asymmetrical Delayed Feedback}.
\begin{figure}[h]
\begin{center}
\begin{psfrags}
    \psfragscanon
    \psfrag{A}{$Encoder1$}
    \psfrag{B}{$Encoder2$}
    \psfrag{C}{$\ p(y|x_1,x_2,s)$}
    \psfrag{D}{$\ Decoder$}
    \psfrag{E}{$ \ State \ Process$}
    \psfrag{F}{$M_1$}
    \psfrag{G}{$M_2$}
    \psfrag{H}{$\ \ \ X_{1}^{n}$}
    \psfrag{J}{$\ \ \ X_{2}^{n}$}
    \psfrag{K}{$Y^{n}$}
     \psfrag{L}{$\hat{M}_1,\hat{M}_2$}
     \psfrag{P}{$S_{i}$}
     \psfrag{T}{$S_{i}$}
     \psfrag{U}{$S_{i}$}
     \psfrag{Q}{$\ \ \ Channel$}
     \psfrag{Y}{$S_{i-d_2}$}
     \psfrag{Z}{$S_{i-d_2}$}
     \psfrag{W}{$Feedback \ Delay$}
     \psfrag{R}{$S_{i-d_1}$}
      \psfrag{X}{$S_{i-d_1}$}

\includegraphics[scale=0.7]{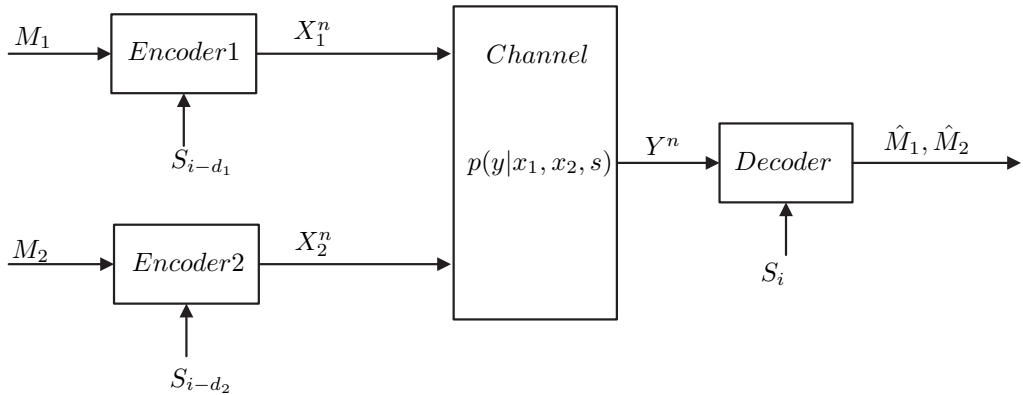}
\caption{ FSM-MAC with CSI at the decoder and delayed CSI at the encoders with delays $d_1$ and $d_2$-
The state process has memory and is assumed to be FSM.
The CSI is fed back to the encoders through a noiseless feedback channel.
CSI from the decoder is received at Encoder $1$ and Encoder $2$ after  time
delays of $d_1 $ and $d_2$ symbol durations, respectively. We are considering the above problem setting in the cases where $d_1>d_2$, $d_1=d_2$, and $d_2<d_1=\infty$.} \label{Multiple-Access Channel with Receiver CSI and Asymmetrical Delayed Feedback}
\psfragscanoff
\end{psfrags}
\end{center}
\end{figure}
The channel probability function at each time instant depends on the state of an underlying  finite-state Markov process.
The decoder, in addition to the channel output, also receives the channel state at each time instant (perfect CSI).
The channel state is fed back to the encoders through a noiseless feedback channel.
CSI from the decoder is received at Encoder $1$ and Encoder $2$ after  time delays of $d_1$ and $d_2$ symbol durations, respectively.
Each encoder, at each time instant, chooses the channel input based on the message to be transmitted and
the CSI that he possesses.
A formal description of the system model is presented in Section \ref{s_preliminary}.
The main result of this paper  is  a computable characterization of the capacity region for this
channel model.

The remainder of the paper is organized as follows: In Section \ref{s_preliminary}, we concretely describe
the communication model. In Section \ref{Main results}, we state our main results, which are the capacity regions for different cases of time delays.
Section \ref{CONVERSE} provides the upper bound on the capacity region of FSM-MAC with CSI at the decoder  and asymmetrical delayed  CSI at the encoders.
In Section \ref{ACHIEVABILITY }, we complete the proof of the capacity region, by providing the proof of the  achievability.
In Section \ref{ALTERNATIVE PROOF}, we provide alternative proof for capacity region. The alternative proof is based on a multi-letter expression for the capacity region of FS-MAC with time-invariant feedback \cite{Permuter_Weissman}.
In Section \ref{EXAMPLES}, we apply the general results of Section \ref{Main results} to obtain the capacity region for a finite-state Gaussian MAC, and for a finite-state multiple-access fading channel. We derive optimization problems on
the power allocation that maximize the capacity region for these channels. This power allocation would be the optimal power control policy for maximizing throughput in the presence of delayed CSI.
We conclude in Section \ref{summery} with a summary of this work.
\section{CHANNEL MODEL AND NOTATION \label{s_preliminary}}
\subsection{Channel Model}
In this paper, we consider the communication system of FSM-MAC with CSI at the decoder and delayed CSI at the encoders with delays $d_1$ and $d_2$, respectively,
as illustrated in Fig. \ref{Multiple-Access Channel with Receiver CSI and Asymmetrical Delayed Feedback}.
The MAC setting consists of two senders and one receiver. Each
sender $j\in\{1,2\}$ chooses an index $m_j$ uniformly from
the set $\left\{1, ..., 2^{nR_j}\right\}$ and independent of the other sender. The input to the channel from encoder $j\in\{1,2\}$ is denoted by $\{X_{j,1},X_{j,2},X_{j,3}, ...\}$, and the output of the channel is denoted
by $\{Y_1, Y_2, Y_3, ...\}$. We use the notation $V^n$ to denote the sequence $(V_1,...,V_n)$, therefore,
$X_{j}^n$, $Y^n$ denote the sequences  $\{X_{j,1},...,X_{j,n}\}$, $\{Y_1,..., Y_n\}$, respectively.
A finite-state Markov channel is, at each time instant, in one of a
finite number of states $\mathcal{S}=\{s_1,s_2,...,s_k\}$. In each state, the channel
is a DMC with inputs alphabet $\mathcal{X}_1,\mathcal{X}_2$ and output alphabet $\mathcal{Y}$. Let
the random variables $S_i$ ,$S_{i-d}$ denote the channel state at times $i$ and $i-d$, respectively.
Similarly, denote by $X_{1,i},X_{2,i}$, and $Y_i$ the inputs and the output of the channel at time $i$.
The channel transition probability function at time $i$ depends on the
state $S_i$, and the inputs $X_{1,i},X_{2,i}$  at time $i$, and is given by $P(y_i|x_{1,i},x_{2,i},s_i)$.
The channel output at any time $i$ is assumed to depend only on the channel inputs and
state at time $i$. Hence
\begin{eqnarray}
P(y_i|x_{1}^{i},x_{2}^{i},s_{1}^{i})=P(y_i|x_{1,i},x_{2,i},s_i).
\end{eqnarray}
The state process $\{S_i\}$ is assumed to be an irreducible, aperiodic,
finite-state homogeneous Markov chain and hence is ergodic.
The state process is independent of the channel inputs and output when
conditioned on the previous states, i.e.,
\begin{eqnarray}
P(s_i|s^{i-1},x_{1}^{i-1},x_{2}^{i-1},y^{i-1})=P(s_i|s_{i-1}).
\end{eqnarray}
Furthermore, we assume that the state process is independent of $M_1$ and $M_2$,
\begin{eqnarray}
P(s^n,m_1,m_2)=P(s^n)P(m_1)P(m_2)=\prod_{i=1}^{n}P(s_i|s_{i-1})P(m_1)P(m_2).
\end{eqnarray}

Now, let $K$ be the one step state transition probability matrix of the Markov process, and let $\pi$ be the steady state probability distribution of the Markov process. The $(S_{i},S_{i-d})$ joint distribution is stationary and is given by
\begin{eqnarray}
\pi_d(S_i=s_{l},S_{i-d}=s_j)=\pi(s_j)K^{d}(s_{l},s_{j}),
\end{eqnarray}
where $K^{d}(s_{l},s_j)$ is the $(l,j)$th element of the d-step transition
probability matrix $K^{d}$ of the Markov state process.
For simplicity, let us define  $S,\tilde{S}_1$ as the variables that have the same joint distribution as $(S_{i},S_{i-d_1})$, i.e.,
\begin{eqnarray}
P(S=s_l,\tilde{S}_1=s_j)=\pi_{d_1}(S_i=s_{l},S_{i-d_1}=s_j)=\pi(s_j)K^{d_1}(s_{l},s_{j}). \label{s1s2sdefinition}
\end{eqnarray}
Similarly, we define $S,\tilde{S}_2$ as the variables that have the same joint distribution as $(S_{i},S_{i-d_2})$.
\subsection{Code Description}
An $(n,2^{nR_1},2^{nR_2},d_1,d_2)$ code for FSM-MAC with CSI at the decoder and delayed CSI at the encoders with delay $d_1$ and $d_2$  consists of
\begin{enumerate}
\item Two sets of integers $\mathcal{M}_1=\{1,2,...,2^{nR_1}\}$ and $\mathcal{M}_2=\{1,2,...,2^{nR_2}\}$, called the { \it{message sets}}.
\item For each encoder, an encoding function $f_j$, $j\in \{1,2\}$,  maps the set of messages
 to channel input words of block length $n$. Each $f_j$ works through
a sequence of functions $f_{j,i}$ that depend only on the message $M_j$
and the  channel states up to time $i-d_j$. For encoder $1$ ($j=1$):
\begin{eqnarray}
X_{1,i}=\left\{ \begin{array}{rcl}
f_{1,i}(M_1),& 1\leq i\leq d_1\\
f_{1,i}(M_1,S^{i-d_1}),& d_1+1\leq i\leq n\\
\end{array}\right\}
\end{eqnarray}
Similarly for encoder $2$ ($j=2$):
\begin{eqnarray}
X_{2,i}=\left\{ \begin{array}{rcl}
f_{2,i}(M_2),& 1\leq i\leq d_2\\
f_{2,i}(M_2,S^{i-d_2}),& d_2+1\leq i\leq n\\
\end{array}\right\}
\end{eqnarray}

\item A decoding function $\psi$ that maps a received sequence of $n$
channel outputs and channel states to the messages set
\begin{eqnarray}
\psi : Y^{n}\times S^{n} \rightarrow \mathcal{M}_{1}\times \mathcal{M}_{2}.
\end{eqnarray}
We define the average probability of error for the  $(n,2^{nR_1},2^{nR_2},d_1,d_2)$ code as follows:
\begin{eqnarray}
P_{e}^{(n)}= \frac{1}{2^{n(R_1+R_2)}}\sum_{m_1,m_2}\sum_{s_{1}^{n}}P_{S^{n}}(s^{n})\Pr\{ \psi(y^{n},s^{n})\neq(m_1,m_2)|(m_1,m_2) \mbox{was sent}\}.
\end{eqnarray}
\end{enumerate}
We use standard definitions \cite{Cover_Thomas} of achievability and capacity region, namely, a pair rate $(R_1,R_2)$ is {\it{achievable}} for FSM-MAC with CSI at the decoder and delayed CSI at the encoders with delays $d_1$ and $d_2$, if there exists a sequence of $(n,2^{nR_1},2^{nR_2},d_1,d_2)$  codes with $P_{e}^{(n)}\rightarrow0$ as $n$ goes to infinity. The {\it{capacity region}} is the closure of the set of achievable $(R_1,R_2)$ rate pairs.
\section{MAIN RESULTS}\label{Main results}
Here we present the main results of this paper. Recall, that the joint distributions of ($S,\tilde{S}_1$), and ($S,\tilde{S}_2$) is given in (\ref{s1s2sdefinition}). Without loss of generality, let us assume that $d_1\geq d_2$.
\begin{theorem}\label{Capacity region-  MAC with delayed CSI feedback t1}{\it{(Capacity region of FSM-MAC with delayed CSI  $d_1 \geq d_2$)}}\\
The capacity region of FSM-MAC with CSI at the decoder and asymmetrical delayed  CSI  at the encoders with delays $d_1$ and $d_2$ as showen in Fig. \ref{Multiple-Access Channel with Receiver CSI and Asymmetrical Delayed Feedback}
is given by:
\begin{eqnarray}
\mathcal{R}=\bigcup_{P(u|\tilde{s}_1)P(x_{1}|\tilde{s}_1,u)P(x_{2}|\tilde{s}_1,\tilde{s}_2,u)} \left( \begin{array}{rcl}
R_{1}<I(X_{1};Y|X_{2},S,\tilde{S}_1,\tilde{S}_2,U),\\
R_{2}<I(X_{2};Y|X_{1},S,\tilde{S}_1,\tilde{S}_2,U),\\
R_{1}+R_{2}<I(X_{1},X_{2};Y|S,\tilde{S}_1,\tilde{S}_2,U),
\end{array}\right),
\end{eqnarray}
where $U$ is an auxiliary  random variable with cardinality  $|{\cal U}|\leq 3$.
\end{theorem}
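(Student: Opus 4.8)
The plan is to prove that $\mathcal{R}$ is the capacity region by establishing achievability and a matching converse. The structural fact I will lean on throughout is the assumption $d_1 \geq d_2$: at time $i$ encoder $2$ knows $S^{i-d_2}$ and hence also $S^{i-d_1}$, so the delay-$d_1$ symbol $\tilde{S}_1 = S_{i-d_1}$ is common knowledge to both encoders. This is exactly why the auxiliary $U$ may be generated from $\tilde{S}_1$ alone, why $X_1$ is allowed to see only $(\tilde{S}_1,U)$, and why $X_2$ additionally sees the fresher symbol $\tilde{S}_2$; the variable $U$ serves to correlate the two inputs through the information the encoders share.

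For achievability I would use a conditional (multiplexing/superposition) random-coding argument. Fixing a law $P(u|\tilde{s}_1)P(x_1|\tilde{s}_1,u)P(x_2|\tilde{s}_1,\tilde{s}_2,u)$, both encoders agree on a common auxiliary sequence $U_i$ drawn from $P(u|S_{i-d_1})$; encoder $1$ transmits $X_{1,i}\sim P(x_1|S_{i-d_1},U_i)$ indexed by $m_1$, and encoder $2$ transmits $X_{2,i}\sim P(x_2|S_{i-d_1},S_{i-d_2},U_i)$ indexed by $m_2$. The first $d_1$ symbols, where no state information is yet available, contribute only a vanishing rate loss. Since $\{S_i\}$ is irreducible and aperiodic, hence ergodic, the empirical joint type of $(S_i,S_{i-d_1},S_{i-d_2})$ concentrates around $\pi_{d_1},\pi_{d_2}$, so the transmitted sequences are jointly typical with the realized state with high probability. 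The receiver observes $S^n$ and thus reconstructs $\tilde{S}_1^n,\tilde{S}_2^n,U^n$ and decodes by joint typicality conditioned on the full state; the usual MAC packing bounds then yield precisely $R_1<I(X_1;Y|X_2,S,\tilde{S}_1,\tilde{S}_2,U)$, $R_2<I(X_2;Y|X_1,S,\tilde{S}_1,\tilde{S}_2,U)$, and $R_1+R_2<I(X_1,X_2;Y|S,\tilde{S}_1,\tilde{S}_2,U)$.

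For the converse I would begin with Fano's inequality and the mutual independence $P(s^n,m_1,m_2)=P(s^n)P(m_1)P(m_2)$ to write, for instance, $nR_1\leq I(M_1;Y^n|M_2,S^n)+n\epsilon_n$ together with the analogous bounds for $R_2$ and $R_1+R_2$, all conditioned on the receiver's full state $S^n$. Expanding by the chain rule over time and using the conditional memorylessness $P(y_i|x_1^i,x_2^i,s^i)=P(y_i|x_{1,i},x_{2,i},s_i)$ reduces each summand to a single-letter mutual information in $(X_{1,i},X_{2,i},Y_i)$ conditioned on $S_i$ and on delayed state symbols. The decisive step is to introduce the time-sharing index together with an auxiliary $U_i$ built from the information common to both encoders at time $i$, namely the delay-$d_1$ past, and then to verify that the induced single-letter law factors as $P(u|\tilde{s}_1)P(x_1|\tilde{s}_1,u)P(x_2|\tilde{s}_1,\tilde{s}_2,u)$. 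Here the Markov property of $\{S_i\}$ is what lets the entire history $S^{i-d_1}$ collapse to the single sufficient symbol $\tilde{S}_1=S_{i-d_1}$ (and $S^{i-d_2}$ to $\tilde{S}_2$), so that $X_{1,i}$ depends on the state only through $\tilde{S}_1$ and $X_{2,i}$ only through $(\tilde{S}_1,\tilde{S}_2)$.

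The step I expect to be the main obstacle is precisely this auxiliary-variable construction in the converse: because of the delay asymmetry and the Markov (non-i.i.d.) memory of the state, a naive choice of $U_i$ does not immediately deliver the required conditional independence $X_1-(\tilde{S}_1,U)-(\tilde{S}_2,X_2)$, and one must argue, using the Markov property together with the independence of $\{S_i\}$ from $(M_1,M_2)$, that conditioning on $(S_i,\tilde{S}_1,\tilde{S}_2)$ absorbs all residual dependence carried by the deeper past. Once the factorization holds, the cardinality bound $|\mathcal{U}|\leq 3$ follows from a standard support-lemma (Carath\'eodory) argument preserving the three mutual-information functionals that cut out $\mathcal{R}$. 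As a cross-check, the alternative route indicated later in the paper---specializing the multi-letter feedback capacity of \cite{Permuter_Weissman}---should reproduce the same region.
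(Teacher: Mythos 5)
Your proposal is correct and follows essentially the same route as the paper: a multiplexing random-coding achievability keyed to the delayed states (exploiting that $d_1\ge d_2$ makes $\tilde{S}_1$ common knowledge at both encoders), a Fano-based converse that single-letterizes with a time-sharing index $Q$ and takes the auxiliary to be the pre-$d_1$ state history together with that index (the paper's $U=(S^{Q-d_1-1},Q)$), verifying precisely the factorization $P(u|\tilde{s}_1)P(x_1|\tilde{s}_1,u)P(x_2|\tilde{s}_1,\tilde{s}_2,u)$ via the Markov property of the state and the independence of $(M_1,M_2)$ from $S^n$, followed by a Fenchel--Eggleston--Carath\'eodory argument giving $|\mathcal{U}|\le 3$ and the Permuter--Weissman multi-letter cross-check. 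The only cosmetic deviations are that the paper uses successive decoding to reach the corner points (rather than joint typicality decoding) and bounds $I(M_1;Y^n|S^n)$ before introducing $X_2^n$ through the conditional independence of the inputs given $S^n$ (rather than conditioning on $M_2$ directly), neither of which changes the substance.
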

The proof of Theorem \ref{Capacity region-  MAC with delayed CSI feedback t1} is presented in Sections \ref{CONVERSE}, and \ref{ACHIEVABILITY }. In Section \ref{CONVERSE} we prove the  upper bound of the capacity region, and  Section \ref{ACHIEVABILITY } is devoted to the proof of the  achievability. The proof of the  achievability is based on a multiplexing coding scheme, and successive decoding. In addition, we provide alternative proof of Theorem \ref{Capacity region-  MAC with delayed CSI feedback t1} in Section \ref{ALTERNATIVE PROOF}.
The proof for the cardinality bound of $U$ is presented in Appendix \ref{cardinality_proof}.

Now, directly from Theorem \ref{Capacity region-  MAC with delayed CSI feedback t1} we can derive the capacity region in the case of $d_1=d_2$. Since $d_1=d_2$ we have  $\tilde{S}_1=\tilde{S}_2$, hence we denote $\tilde{S}=\tilde{S}_1=\tilde{S}_2$. Using Theorem \ref{Capacity region-  MAC with delayed CSI feedback t1} we get,
\begin{theorem} \label{Capacity region-  MAC with delayed CSI feedback t2}{\it{(Capacity region of FSM-MAC with symmetrical delayed CSI  $d_1=d_2$)}}\\
The capacity region of FSM-MAC with CSI at the decoder and symmetrical delayed  CSI  at the encoders with delay $d$ is given by:
\begin{eqnarray}
\mathcal{R}=\bigcup_{P(u|\tilde{s})P(x_{1}|\tilde{s},u)P(x_{2}|\tilde{s},u)} \left( \begin{array}{rcl}
R_{1}<I(X_{1};Y|X_{2},S,\tilde{S},U),\\
R_{2}<I(X_{2};Y|X_{1},S,\tilde{S},U),\\
R_{1}+R_{2}<I(X_{1},X_{2};Y|S,\tilde{S},U),
\end{array}\right),
\end{eqnarray}
where $U$ is an auxiliary  random variable with cardinality  $|{\cal U}|\leq 3$.
\end{theorem}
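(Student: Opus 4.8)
The plan is to obtain Theorem \ref{Capacity region-  MAC with delayed CSI feedback t2} as a direct specialization of Theorem \ref{Capacity region-  MAC with delayed CSI feedback t1}, by setting $d_1=d_2=d$ and showing that this collapses the two delayed-state variables into a single one. First I would make precise the claim that $\tilde{S}_1=\tilde{S}_2$. By the definition in (\ref{s1s2sdefinition}), the pair $(S,\tilde{S}_1)$ is distributed as $(S_i,S_{i-d_1})$ and $(S,\tilde{S}_2)$ as $(S_i,S_{i-d_2})$, and the region in Theorem \ref{Capacity region-  MAC with delayed CSI feedback t1} is governed by the full joint law of the triple $(S,\tilde{S}_1,\tilde{S}_2)$ inherited from $(S_i,S_{i-d_1},S_{i-d_2})$ in the underlying Markov chain. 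When $d_1=d_2=d$, the variables $S_{i-d_1}$ and $S_{i-d_2}$ are literally the same state $S_{i-d}$, so this joint law is concentrated on the diagonal $\{\tilde{s}_1=\tilde{s}_2\}$. Hence $\Pr\{\tilde{S}_1=\tilde{S}_2\}=1$, and we may define $\tilde{S}:=\tilde{S}_1=\tilde{S}_2$.

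Next I would substitute this identity into the union over input distributions. The factorization indexing the union in Theorem \ref{Capacity region-  MAC with delayed CSI feedback t1} is $P(u|\tilde{s}_1)P(x_{1}|\tilde{s}_1,u)P(x_{2}|\tilde{s}_1,\tilde{s}_2,u)$. Restricting to the diagonal by putting $\tilde{s}_1=\tilde{s}_2=\tilde{s}$ yields $P(u|\tilde{s})P(x_{1}|\tilde{s},u)P(x_{2}|\tilde{s},\tilde{s},u)=P(u|\tilde{s})P(x_{1}|\tilde{s},u)P(x_{2}|\tilde{s},u)$, which is exactly the class of distributions over which the union in Theorem \ref{Capacity region-  MAC with delayed CSI feedback t2} is taken. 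Thus the feasible input distributions match term for term.

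Finally I would handle the three mutual-information constraints. Since $\tilde{S}_2$ is an almost-sure copy of $\tilde{S}_1$, conditioning on $\tilde{S}_2$ in addition to $\tilde{S}_1$ contributes no further information, so for the first constraint $I(X_{1};Y|X_{2},S,\tilde{S}_1,\tilde{S}_2,U)=I(X_{1};Y|X_{2},S,\tilde{S},U)$, and the same reduction applies verbatim to the $R_2$ bound and to the sum-rate bound. Replacing each constraint of Theorem \ref{Capacity region-  MAC with delayed CSI feedback t1} by its reduced form produces precisely the three constraints of Theorem \ref{Capacity region-  MAC with delayed CSI feedback t2}, so the two regions coincide and the cardinality bound $|\mathcal{U}|\leq 3$ carries over unchanged.

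The main obstacle, and it is a mild one, is the first step: one must justify that $\tilde{S}_1$ and $\tilde{S}_2$ are not merely equal in marginal distribution but are in fact the same random variable. This requires checking that the region genuinely depends on the joint law of $(S,\tilde{S}_1,\tilde{S}_2)$ induced by the Markov chain, rather than on some arbitrary coupling of $\tilde{S}_1$ and $\tilde{S}_2$ that only shares the prescribed marginals. Once the degeneracy onto the diagonal is established, every subsequent step is a mechanical substitution, which is why the result follows directly from Theorem \ref{Capacity region-  MAC with delayed CSI feedback t1}.
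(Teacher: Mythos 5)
Your proposal is correct and follows exactly the route the paper takes: the paper derives Theorem \ref{Capacity region-  MAC with delayed CSI feedback t2} directly from Theorem \ref{Capacity region-  MAC with delayed CSI feedback t1} by observing that $d_1=d_2$ forces $\tilde{S}_1=\tilde{S}_2$ (both are literally $S_{i-d}$, so the joint law sits on the diagonal, as you verify) and then collapsing the input factorization and the conditioning accordingly. Your write-up merely makes explicit the degeneracy and substitution steps that the paper states in one line, including the point that the region depends on the Markov-induced joint law of $(S,\tilde{S}_1,\tilde{S}_2)$ rather than on an arbitrary coupling.
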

Now we consider the case that encoder $1$ does not have state information at all, i.e., $d_1=\infty$.
\begin{theorem}\label{Capacity region-  MAC with delayed CSI feedback only to one encoder}{\it{( Capacity region of FSM-MAC with delayed CSI  only to one encoder)}}\\
The capacity region of FSM-MAC with CSI at the decoder and  delayed  CSI  only to one encoder
is given by :       %
\begin{eqnarray}
\mathcal{R}=\bigcup_{P(q)P(x_{1}|q)P(x_{2}|\tilde{s},q)} \left( \begin{array}{rcl}
R_{1}<I(X_{1};Y|X_{2},S,\tilde{S},Q),\\
R_{2}<I(X_{2};Y|X_{1},S,\tilde{S},Q),\\
R_{1}+R_{2}<I(X_{1},X_{2};Y|S,\tilde{S},Q),
\end{array}\right),
\end{eqnarray}
where $Q$ is an auxiliary  random variable with cardinality  $|{\cal Q}|\leq 3$.
\end{theorem}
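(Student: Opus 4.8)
The plan is to obtain Theorem~\ref{Capacity region-  MAC with delayed CSI feedback only to one encoder} as the $d_1\to\infty$ specialization of Theorem~\ref{Capacity region-  MAC with delayed CSI feedback t1}, exactly in the spirit in which Theorem~\ref{Capacity region-  MAC with delayed CSI feedback t2} was deduced from Theorem~\ref{Capacity region-  MAC with delayed CSI feedback t1}. The single ingredient that makes this work is ergodicity of the state chain: since $\{S_i\}$ is irreducible and aperiodic, $K^{m}(s_l,s_j)\to\pi(s_l)$ as $m\to\infty$. Applying this to the Markov triple $S_{i-d_1}\to S_{i-d_2}\to S_i$ (valid because $d_1\geq d_2$), whose joint law factors as $\pi(s_{i-d_1})\,K^{d_1-d_2}(s_{i-d_2},s_{i-d_1})\,P(s_i\mid s_{i-d_2})$, the factor $K^{d_1-d_2}(s_{i-d_2},s_{i-d_1})\to\pi(s_{i-d_2})$ as $d_1\to\infty$. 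Hence in the limit $\tilde S_1$ becomes independent of the pair $(S,\tilde S_2)$, and I would first record this independence as a short lemma, since it is the structural fact everything rests on.

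Next I would rewrite the Theorem~\ref{Capacity region-  MAC with delayed CSI feedback t1} region under this limit by collapsing $\tilde S_1$ and $U$ into one auxiliary variable. Setting $Q:=(\tilde S_1,U)$ and $\tilde S:=\tilde S_2$, the admissible input law $P(u\mid\tilde s_1)P(x_1\mid\tilde s_1,u)P(x_2\mid\tilde s_1,\tilde s_2,u)$ of Theorem~\ref{Capacity region-  MAC with delayed CSI feedback t1} factors as $P(q)P(x_1\mid q)P(x_2\mid\tilde s,q)$: the marginal $P(q)=P(\tilde s_1)P(u\mid\tilde s_1)$ is free of $(S,\tilde S)$ precisely because $\tilde S_1\perp(S,\tilde S)$ in the limit, $X_1$ depends only on $(\tilde s_1,u)=q$, and $X_2$ depends on $(\tilde s_1,\tilde s_2,u)=(q,\tilde s)$. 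This is exactly the class appearing in Theorem~\ref{Capacity region-  MAC with delayed CSI feedback only to one encoder}. Substituting $\{\tilde S_1,U\}=Q$ and $\tilde S_2=\tilde S$ into the conditioning sets turns each bound $I(\cdot\,;Y\mid\cdots,S,\tilde S_1,\tilde S_2,U)$ into the corresponding $I(\cdot\,;Y\mid\cdots,S,\tilde S,Q)$, which gives the inclusion of the limiting region into the Theorem~\ref{Capacity region-  MAC with delayed CSI feedback only to one encoder} region. For the reverse inclusion I would start from an arbitrary $P(q)P(x_1\mid q)P(x_2\mid\tilde s,q)$ and realize it inside Theorem~\ref{Capacity region-  MAC with delayed CSI feedback t1} by taking $U:=Q$ with $P(u\mid\tilde s_1)$ constant in $\tilde s_1$, $P(x_1\mid\tilde s_1,u):=P(x_1\mid u)$, and $P(x_2\mid\tilde s_1,\tilde s_2,u):=P(x_2\mid\tilde s_2,u)$; since $\tilde S_1$ is independent of $(S,\tilde S_2)$ it drops out of every conditioning set, so the two regions coincide. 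The cardinality bound $|\mathcal Q|\le 3$ would then follow by the same Fenchel--Eggleston/Carath\'eodory support-reduction argument used in Appendix~\ref{cardinality_proof} for $|\mathcal U|\le 3$, applied to $Q$.

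The step I expect to be the main obstacle is making the passage $d_1\to\infty$ rigorous at the level of regions rather than integrands: one must confirm that the converse and achievability of Theorem~\ref{Capacity region-  MAC with delayed CSI feedback t1} degrade continuously in $d_1$, and that the literal $d_1=\infty$ code (encoder~1 ignoring the state, $X_{1,i}=f_{1,i}(M_1)$) is captured exactly by the independence of $\tilde S_1$, with no residual gain from the now channel-independent common randomness $\tilde S_1$ beyond the convexification already supplied by $U$. If this continuity argument turns out to be awkward, the fallback is to rerun the converse of Section~\ref{CONVERSE} and the multiplexing and successive-decoding achievability of Section~\ref{ACHIEVABILITY } directly for $d_1=\infty$, identifying $Q$ in the single-letterization as the time-sharing variable and verifying the same three bounds; the independence lemma above is what guarantees that this direct derivation returns precisely the claimed region.
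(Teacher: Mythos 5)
Your limit plan contains one salvageable half and one genuinely broken half. The salvageable half is the converse: since a code with $X_{1,i}=f_{1,i}(M_1)$ is a legitimate $(n,2^{nR_1},2^{nR_2},d_1,d_2)$ code for \emph{every} finite $d_1\geq d_2$, any rate pair achievable in the no-CSI problem lies in $\bigcap_{d_1\geq d_2}\mathcal{R}(d_1)$, where $\mathcal{R}(d_1)$ is the region of Theorem \ref{Capacity region-  MAC with delayed CSI feedback t1}; your ergodicity observation $K^{d_1-d_2}(\tilde{s}_2,\tilde{s}_1)\to\pi(\tilde{s}_2)$ and the identification $Q=(\tilde{S}_1,U)$ are correct, and the missing continuity/compactness lemma (finite alphabets, $|\mathcal{U}|\leq 3$, mutual information continuous in the joint law, plus closure of the regions) can indeed be supplied, though you have not supplied it. The genuinely broken half is achievability: the operational inclusion runs the \emph{wrong way}, $\mathcal{C}(d_1=\infty)\subseteq\mathcal{C}(d_1)$ for every finite $d_1$, so no limiting argument applied to Theorem \ref{Capacity region-  MAC with delayed CSI feedback t1} can certify that points of the claimed region are achievable when encoder $1$ has no CSI. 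Concretely, the multiplexing scheme of Section \ref{ACHIEVABILITY } has encoder $1$ switch among its $k$ codebooks according to the realized $\tilde{s}_1=s_{i-d_1}$; even if you choose $P(x_1|\tilde{s}_1,u)$ constant in $\tilde{s}_1$, the \emph{time ordering} in which symbols are drawn from the component codebooks still depends on the state sequence, so the resulting channel-input process is not implementable by any map $X_{1,i}=f_{1,i}(M_1)$. This is precisely why $d_1=\infty$ is not a specialization of Theorem \ref{Capacity region-  MAC with delayed CSI feedback t1} in the way $d_1=d_2$ is for Theorem \ref{Capacity region-  MAC with delayed CSI feedback t2}: the symmetric case lies inside the finite-delay model, while infinite delay lies outside it.

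Your fallback is, in fact, the paper's actual proof (Appendix \ref{PROOF OF THEOREM3}): the converse is rerun directly with Fano's inequality, single-letterized with the time-sharing variable $Q$ (now genuinely independent of the state process, which is why $P(q|s,\tilde{s})=P(q)$ holds without any limiting argument), using the Markov chain $(M_2,S^{i-d})-S_{i-d}-(M_1,S_i)$ to get $P(x_2|x_1,s,\tilde{s},q)=P(x_2|\tilde{s},q)$; and achievability is proved with a \emph{single} length-$n$ i.i.d.\ codebook $\sim\prod_{l}P(x_{1,l})$ for encoder $1$, $k$ multiplexed codebooks $P(x_2|\tilde{s})$ for encoder $2$, successive decoding, and time-sharing. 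So the correct assessment of your proposal is: the primary route fails on the achievability side and is incomplete on the converse side, while the fallback route you sketch in one sentence is the proof the theorem actually requires; to make your write-up correct you would either promote the fallback to the main argument (as the paper does), or keep the limit argument for the converse only, attach the compactness lemma, and still prove achievability directly in the $d_1=\infty$ model.
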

The proof of Theorem \ref{Capacity region-  MAC with delayed CSI feedback only to one encoder} is quite similar to the proof of Theorem \ref{Capacity region-  MAC with delayed CSI feedback t1}; the details are presented in Appendix \ref{PROOF OF THEOREM3}.
\section{CONVERSE  \label{CONVERSE}}
In this section we provide the upper bound on the capacity region of MAC with receiver CSI and asymmetrical delayed  CSI feedback, i.e., we give the converse proof for Theorem \ref{Capacity region-  MAC with delayed CSI feedback t1}. Without loss of generality let us assume that $d_1\geq d_2$.
\begin{proof}
Given an  achievable rate $(R_{1},R_{2})$ we  need to show that there exists joint distribution of the form
$P(s,\tilde{s}_1,\tilde{s}_2)P(u|\tilde{s}_1)P(x_{1}|\tilde{s}_1,u)P(x_{2}|\tilde{s}_1,\tilde{s}_2,u)P(y|x_{1},x_{2},s)$ such that,
\begin{eqnarray}
R_{1}<I(X_{1};Y|X_{2},S,\tilde{S}_1,\tilde{S}_2,U),\nonumber \\
R_{2}<I(X_{2};Y|X_{1},S,\tilde{S}_1,\tilde{S}_2,U),\nonumber \\
R_{1}+R_{2}<I(X_{1},X_{2};Y|S,\tilde{S}_1,\tilde{S}_2,U),\nonumber
\end{eqnarray}
where $U$ is an auxiliary  random variable with cardinality  $|{\cal U}|\leq 3$. The proof for the cardinality bound is presented in Appendix \ref{cardinality_proof}. Since $(R_1,R_2)$ is an achievable pair-rate, there exists a code $(n,2^{nR_1},2^{nR_2},d_1,d_2)$ with a probability of error  $P_{e}^{(n)}$ arbitrarily small. By Fano's inequality,
\begin{eqnarray}
H(M_1,M_2|Y^n,S^n)\leq n(R_1+R_2)P_{e}^{(n)}+H(P_{e}^{(n)})\triangleq n\varepsilon_n,
\end{eqnarray}
and it is clear that $\varepsilon_n\rightarrow0$ as $P_{e}^{(n)}\rightarrow\infty$. Then we have
\begin{eqnarray}
H(M_1|Y^n,S^n)\leq H(M_1,M_2|Y^n,S^n)\leq \varepsilon_n, \\
H(M_2|Y^n,S^n)\leq H(M_1,M_2|Y^n,S^n)\leq \varepsilon_n.
\end{eqnarray}

We can now bound the rate $R_1$ as
  \begin{eqnarray}
  nR_{1}&=& H(M_1)  \nonumber\\
        &=& H(M_1)+H(M_1|Y^{n},S^{n})-H(M_1|Y^{n},S^{n})                     \nonumber\\
        &\stackrel{(a)}\leq& I(M_1;Y^n,S^{n})+n \varepsilon_n                    \nonumber\\
        &\stackrel{(b)}=& I(M_1;Y^n|S^{n})+I(M_1;S^{n})+n \varepsilon_n     \nonumber\\
        &\stackrel{(c)}=& I(M_1;Y^n|S^{n})+n \varepsilon_n                       \nonumber\\
        &\stackrel{(d)}=&  I(X_{1}^{n};Y^n|S^{n})+n \varepsilon_n             \nonumber\\
        &=& H(X_{1}^{n}|S^{n})-H(X_{1}^{n}|Y^n,S^{n})+n \varepsilon_n        \nonumber\\
        &\stackrel{(e)}=& H(X_{1}^{n}|X_{2}^{n},S^{n})-H(X_{1}^{n}|Y^n,S^{n})+n \varepsilon_n       \nonumber\\
        &\stackrel{(f)}\leq& H(X_{1}^{n}|X_{2}^{n},S^{n})-H(X_{1}^{n}|Y^n,X_{2}^{n},S^{n})+n \varepsilon_n  \nonumber\\
        &=& I(X_{1}^{n};Y^n|X_{2}^{n},S^{n})+n \varepsilon_n        \nonumber\\
        &=& H(Y^{n}|X_2^{n},S^{n})-H(Y^{n}|X_1^{n},X_2^{n},S^{n})+n \varepsilon_n  \nonumber\\
        &=& \sum_{i=1}^{n} H(Y_{i}|Y^{i-1},X_{2}^{n},S^{n})- H(Y_{i}|Y^{i-1},X_{1}^{n},X_{2}^{n},S^{n})
        +n \varepsilon_n \nonumber\\
        &\stackrel{(g)}\leq& \sum_{i=1}^{n} H(Y_{i}|X_{2,i},S_{i},S_{i-d_2},S_{i-d_1},S^{i-d_1-1})- H(Y_{i}|Y^{i-1},X_{1}^{n},X_{2}^{n},S^{n})
        +n \varepsilon_n \nonumber\\
        &\stackrel{(h)}=&  \sum_{i=1}^{n} H(Y_{i}|X_{2,i},S_{i},S_{i-d_2},S_{i-d_1},S^{i-d_1-1})- H(Y_{i}|X_{1,i},X_{2,i},S_{i},S_{i-d_2},S_{i-d_1},S_{1}^{i-d_1-1})
        +n \varepsilon_n \nonumber\\
        &=& \sum_{i=1}^{n}I(Y_{i};X_{1,i}|X_{2,i},S_{i},S_{i-d_2},S_{i-d_1},S_{1}^{i-d_1-1})+n \varepsilon_n , \nonumber
  \end{eqnarray}
  where\\
  (a) follows from Fano's inequality.\\
  (b) follows from chain rule.\\
  (c) follows from the fact that  $M_1$ and $S^{n}$ are independent.\\
  (d) follows from the fact that $X_{1}^{n}$ is a deterministic function of $(M_1,S^n)$ and the Markov chain $(M_1,S^n) - (X_{1}^n,S^n) - Y^n$.\\
  (e) follows from the fact that $X_{1}^{n}$ and $M_2$ are independent, and the fact that $X_{2}^{n}$ is a deterministic  function of $(M_2,S^n)$. Therefore,  $X_{1}^{n}$ and $X_{2}^{n}$ are independent given $S^{n}$ .\\
  (f) and (g) follow from the fact that conditioning reduces entropy.\\
  (h) follows from the fact  that the channel output at time $i$ depends only on the state $S_i$ and the the inputs $X_{1,i}$ and $X_{2,i}$.\\
  Hence, we have
  \begin{eqnarray}
  R_{1}\leq \frac{1}{n}\sum_{i=1}^{n}I(Y_{i};X_{1,i}|X_{2,i},S_{i},S_{i-d_2},S_{i-d_1},S_{1}^{i-d_1-1})+ \varepsilon_n.  \label{r1}
  \end{eqnarray}
  Similarly, we have
  \begin{eqnarray}
  R_{2}\leq \frac{1}{n}\sum_{i=1}^{n}I(Y_{i};X_{2,i}|X_{1,i},S_{i},S_{i-d_2},S_{i-d_1},S_{1}^{i-d_1-1})+ \varepsilon_n.  \label{r2}
  \end{eqnarray}
  To bound the sum of the rates, consider
  \begin{eqnarray}
   n(R_{1}+R_{2})&=& H(M_1,M_2)  \nonumber\\
                 &=& H(M_1,M_2)+H(M_1,M_2|Y^{n},S^{n})-H(M_1,M_2|Y^{n},S^{n})  \nonumber\\
                 &\stackrel{(a)}\leq& I(M_1,M_2;Y^n,S^{n})+n \varepsilon_n                    \nonumber\\
                 &\stackrel{(b)}=& I(M_1,M_2;Y^n|S^{n})+I(M_1,M_2;S^{n})+n \varepsilon_n     \nonumber\\
                 &\stackrel{(c)}=& I(M_1,M_2;Y^n|S^{n})+n \varepsilon_n                       \nonumber\\
                 &\stackrel{(d)}=& I(X_{1}^{n},X_{2}^{n};Y^n|S^{n})+n \varepsilon_n           \nonumber\\
                 &=&H(Y^{n}|S^{n})-H(Y^{n}|X_{1}^{n},X_{2}^{n},S^{n})   \nonumber\\
                 &=& \sum_{i=1}^{n} H(Y_{i}|Y^{i-1},S^{n})-H(Y_{i}|Y^{i-1},X_{1}^{n},X_{2}^{n},S^{n})
                        +n \varepsilon_n \nonumber\\
                 &\stackrel{(e)}=&\sum_{i=1}^{n} H(Y_{i}|Y^{i-1},S^{n})-
                 H(Y_{i}|X_{1,i},X_{2,i},S_{i},S_{i-d_2},S_{i-d_1},S^{i-d_1-1})
                        +n \varepsilon_n \nonumber\\
                 &\stackrel{(f)}\leq&  \sum_{i=1}^{n} H(Y_{i}|S_{i},S_{i-d_2},S_{i-d_1},S^{i-d_1-1})-
                H(Y_{i}|X_{1,i},X_{2,i},S_{i},S_{i-d_2},S_{i-d_1},S^{i-d_1-1})
                        +n \varepsilon_n \nonumber\\
                 &=& \sum_{i=1}^{n}I(Y_{i};X_{1,i},X_{2,i}|S_{i},S_{i-d_2},S_{i-d_1},S^{i-d_1-1})+n \varepsilon_n , \nonumber          \end{eqnarray}
 where\\
  (a) follows from Fano's inequality.\\
  (b) follows from chain rule.\\
  (c) follows from the fact that  $M_1,M_2$, and $S^{n}$ are independent.\\
  (d) follows from the fact that $X_{1}^{n},X_{2}^{n}$ is a deterministic function of $(M_1,M_2,S^n)$ and the Markov chain $(M_1,M_2,S^n) - (X_{1}^n,X_{2}^{n},S^n) - Y^n$.\\
  (e) follows from the fact that the channel output at time $i$ depends only on the state $S_{i}$,  and the inputs $X_{1,i}$, and $X_{2,i}$.\\
  (f) follows from the fact that conditioning reduces entropy.\\
  Hence, we have
  \begin{eqnarray}
   R_{1}+R_{2}\leq \frac{1}{n}\sum_{i=1}^{n}I(Y_{i};X_{1,i},X_{2,i}|S_{i},S_{i-d_2},S_{i-d_1},S^{i-d_1-1})+ \varepsilon_n. \label{r1+r2}
  \end{eqnarray}
  The expressions in (\ref{r1}), (\ref{r2}), and (\ref{r1+r2}) are the average of the mutual informations calculated at
  the empirical distribution in column $i$ of the codebook. We can rewrite these equations with the new variable $Q$,
  where $Q=i\in \{1,2,...,n\}$ with probability $\frac{1}{n}$. The equations become
  \begin{eqnarray}
  R_{1}&\leq& \frac{1}{n}\sum_{i=1}^{n}I(Y_{i};X_{1,i}|X_{2,i},S_{i},S_{i-d_2},S_{i-d_1},S^{i-d_1-1})+ \varepsilon_n \nonumber\\
        &=&  \frac{1}{n}\sum_{i=1}^{n}I(Y_{Q};X_{1,Q}|X_{2,Q},S_{Q},S_{Q-d_2},S_{Q-d_1},S^{Q-d_1-1},Q=i)+ \varepsilon_n \nonumber\\
        &=& I(Y_{Q};X_{1,Q}|X_{2,Q},S_{Q},S_{Q-d_2},S_{Q-d_1},S^{Q-d_1-1},Q)+ \varepsilon_n \label{Uconverse}
  \end{eqnarray}
  Now let us denote $ X_{1}\triangleq X_{1,Q}, X_{2} \triangleq X_{2,Q}, Y \triangleq Y_{Q}, S\triangleq S_{Q}, \tilde{S}_1\triangleq S_{Q-d_1},\tilde{S}_2\triangleq S_{Q-d_2}$, and  $U\triangleq (S^{Q-d_1-1},Q)$.
  \\We have,
   \begin{eqnarray}
   R_{1}&\leq& I(X_{1};Y|X_{2},S,\tilde{S}_1,\tilde{S}_2,U)+ \varepsilon_n , \nonumber\\
   R_{2} &\leq& I(X_{2};Y|X_{1},S,\tilde{S}_1,\tilde{S}_2,U) + \varepsilon_n, \nonumber\\
   R_{1}+R_{2}&\leq& I(X_{1},X_{2};Y|S,\tilde{S}_1,\tilde{S}_2,U) + \varepsilon_n. \nonumber
   \end{eqnarray}
  To complete the converse proof we need to show the following Markov relations hold:
  \begin{enumerate}
  \item $P(u|s,\tilde{s}_1,\tilde{s}_2)=P(u|\tilde{s}_1)$ .
  \item $P(x_1|s,\tilde{s}_1,\tilde{s}_2,u)=P(x_1|\tilde{s}_1,u)$.
  \item $P(x_2|x_1,s,\tilde{s}_1,\tilde{s}_2,u)=P(x_2|\tilde{s}_1,\tilde{s}_2,u)$.
  \item $P(y|x_1,x_2,s,\tilde{s}_1,\tilde{s}_2,u)=P(y|x_1,x_2,s)$.
  \end{enumerate}
  We prove the above using the following claims:
  \begin{enumerate}
  \item  follows from the fact that $S^{i-d_1-1}-S_{i-d_1}-S_{i-d_2}-S_{i}$ and so is ($S_{1}^{Q-d_1-1},Q) - S_{Q-d_1} -S_{Q-d_2} - S_{Q}$.
  \item  follows from the fact that $X_{1,i}=f_{1,i}(M_1,S^{i-d_1})$ and that $M_1$ and $S^{n}$ are independent. Hence
  \begin{eqnarray}
  P(x_{1,q}|s_{q},s_{q-d_1},s_{q-d_2},s_{1}^{q-d_1-1},q=i)=P(x_{1,q}|s_{q-d_1},s_{1}^{q-d_1-1},q=i).\nonumber
  \end{eqnarray}
  Since this is true for all $i$,
  \begin{eqnarray}
  P(x_{1,q}|s_{q},s_{q-d_1},s_{q-d_2},s_{1}^{q-d_1-1},q)=P(x_{1,q}|s_{q-d_1},s_{1}^{q-d_1-1},q). \nonumber
  \end{eqnarray}
   Therefore we have,
    \begin{eqnarray}
    P(x_1|s,\tilde{s}_1,\tilde{s}_2,u)=P(x_1|\tilde{s}_1,u).\nonumber
    \end{eqnarray}
  \item  We assume that $d_1\geq d_2$, since $M_2$ and $(M_1,S^{n})$ are independent, and the state process is Markov chain, we have
  \begin{eqnarray}
   P(m_{2},s^{i-d_2}|s_{i},s_{i-d_1},s_{i-d_2},s^{i-d_1},m_{1})&=&P(m_2,s^{i-d_2}|s_{i-d_1},s_{i-d_2},s^{i-d_1}).\nonumber
  \end{eqnarray}
  Therefore, we have the Markov chain $(M_{2},S^{i-d_2})-(S_{i-d_1},S_{i-d_2},S^{i-d_1})-(M_1,S_i,S^{i-d_1})$. Since $X_{1,i}=f_{1,i}(M_1,S^{i-d_1})$ and $X_{2,i}=f_{2,i}(M_2,S^{i-d_2})$ where $f_{1,i},f_{2,i}$ are deterministic functions, we obtain the following Markov chain,
   \begin{eqnarray}
  X_{2,i}-(M_{2},S^{i-d_2})-(S_{i-d_1},S_{i-d_2},S^{i-d_1})-(M_1,S_i,S^{i-d_1})-X_{1,i}.
  \end{eqnarray}
  Which implies,
  \begin{eqnarray}
  P(x_{2,i}|x_{1,i},s_{i},s_{i-d_1},s_{i-d_2},s^{i-d_1-1})&=&P(x_{2,i}|s_{i-d_1},s_{i-d_2},s^{i-d_1-1}). \nonumber
  \end{eqnarray}
 Since this is true for all $i$,
  \begin{eqnarray}
  P(x_{2,q}|x_{1,q},s_{q},s_{q-d_1},s_{q-d_2},s_{1}^{q-d_1-1},q)&=&P(x_{2,q}|s_{q-d_1},s_{q-d_2},s_{1}^{q-d_1-1},q).\nonumber
  \end{eqnarray}
  Therefore we have $P(x_2|x_1,s,\tilde{s}_1,\tilde{s}_2,u)=P(x_2|\tilde{s}_1,\tilde{s}_2,u)$.
  \item  follows from the fact that the channel output at any time $i$ is assumed to depend only on the channel inputs and
state at time $i$.
  \end{enumerate}
  Hence, taking the limit as $n \rightarrow\infty$, $P_{e}^{(n)}\rightarrow 0$, we have the following converse:
  \begin{eqnarray}
   R_{1}&\leq& I(X_{1};Y|X_{2},S,\tilde{S}_1,\tilde{S}_2,U), \nonumber\\
   R_{2} &\leq& I(X_{2};Y|X_{1},S,\tilde{S}_1,\tilde{S}_2,U), \nonumber\\
   R_{1}+R_{2}&\leq& I(X_{1},X_{2};Y|S,\tilde{S}_1,\tilde{S}_2,U), \nonumber
  \end{eqnarray}
  for some choice of joint distribution $P(s,\tilde{s}_1,\tilde{s}_2)P(u|\tilde{s}_1)P(x_{1}|\tilde{s}_1,u)P(x_{2}|\tilde{s}_1,\tilde{s}_2,u)P(y|x_{1},x_{2},s)$
  and for some choice of auxiliary random variable $U$ defined on  $|{\cal U}|\leq 3$. This completes the proof of the converse.
  \end{proof}
  \section{PROOF OF THE ACHIEVABILITY OF THEOREM \ref{Capacity region-  MAC with delayed CSI feedback t1}  \label{ACHIEVABILITY }}
  In the previous section we proved the converse of the capacity region of Theorem \ref{Capacity region-  MAC with delayed CSI feedback t1}. In this section we prove the achievability part.
  The main idea of the proof is using multiplexing coding, i.e., multiplexing  the input of the channel at each encoder (the multiplexer is controlled by the delayed CSI),  then, using the CSI known at the decoder, demultiplexing  the output at the decoder.
  \begin{proof}
  To prove the achievability of the capacity region, we need to show that for a fix $P(x_{1}|\tilde{s}_1)P(x_{2}|\tilde{s}_1,\tilde{s}_2)$ and ($R_1,R_2$) that satisfy
   \begin{eqnarray}
   R_{1}&\leq& I(X_{1};Y|X_{2},S,\tilde{S}_1,\tilde{S}_2), \nonumber\\
   R_{2} &\leq& I(X_{2};Y|X_{1},S,\tilde{S}_1,\tilde{S}_2), \nonumber\\
   R_{1}+R_{2}&\leq& I(X_{1},X_{2};Y|S,\tilde{S}_1,\tilde{S}_2), \nonumber
  \end{eqnarray}
  there exists a sequence of $(n,2^{nR_{1}},2^{nR_{2}},d_1,d_2)$ codes where $P_{e}^{(n)}\rightarrow0$ as $n\rightarrow\infty$.
  Without loss of generality, we assume that the finite-state space $\mathcal{S}=\left\{1,2,...,k\right\}$, and that the steady state probability   $\pi(l)>0$ for all $l\in \mathcal{S}$.

 \emph{Encoder 1}: construct $k$  codebooks $\mathcal{C}^{\tilde{s}_1}_{1}$  (where the subscript is for Encoder $1$) for all $\tilde{S}_1\in \mathcal{S}$, when in each codebook $\mathcal{C}^{\tilde{s}_1}_{1}$ there are $2^{n_{1}(\tilde{s}_1)R_{1}(\tilde{s}_1)}$ codewords, where $n_{1}(\tilde{s}_1)=(P(\tilde{S}_1=\tilde{s}_1)-\epsilon')n$, for $\epsilon'>0$. Every codeword $\mathcal{C}^{\tilde{s}_1}_{1}(i)$ when $i\in \{1,2,..., 2^{n_{1}(\tilde{s}_1)R_{1}(\tilde{s}_1)}\}$ has a length of $n_{1}(\tilde{s}_1)$ symbols.
 Each codeword  from the $\mathcal{C}^{\tilde{s}_1}_{1}$  codebook is built $X^{\tilde{s}_1}_{1}\thicksim$ i.i.d. $ P(x^{\tilde{s}_1}_{1}|\widetilde{S}_1=\tilde{s}_1)$ (where the subscript is for Encoder $1$).
 A message $M_1$ is chosen according to a uniform distribution $\Pr (M_1=m_1)=2^{-nR_1}$, $m_1\in\left\{1,2,...,2^{nR_1}\right\}$.
 Every message $m_1$ is mapped into $k$ sub messages  $\mathcal{V}_{1}(m_1)=\left\{V^{1}_{1}(m_1),V^{2}_{1}(m_1),...,V^{k}_{1}(m_1)\right\}$ (one message from each codecook). Hence, every message  $m_{1}$ is specified by a $k$ dimensional vector.
 For a fix block length $n$, let $N_{\tilde{s}_1}$ be the number of times during the $n$ symbols
for which the feedback information at  encoder $1$ regarding the
channel state is $\tilde{S}_1=\tilde{s}_1$.
 Every time that the delayed CSI is $\tilde{S}_1=\tilde{s}_1$, encoder $1$ sends the next symbol from the $\mathcal{C}^{\tilde{s}_1}_{1}$ codebook.
 Since $N_{\tilde{s}_1}$ is not necessarily
  equivalent to $n_1(\tilde{s}_1)$, an error is declared if $N_{\tilde{s}_1}<n_1(\tilde{s}_1)$, and the code is zero-filled
if $N_{\tilde{s}_1}>n_1(\tilde{s}_1)$.
 Therefore, we can send a total of
 $2^{nR_{1}}=2^{\sum_{\tilde{s}_1\in \mathcal{S} }n_{1}(\tilde{s}_1)R_1(\tilde{s}_1)}$ messages.
%

\emph{Encoder 2}: construct  $k\times k$  codebooks $\mathcal{C}^{\tilde{s}_1,\tilde{s}_2}_{2}$  (where the subscript is for Encoder $2$) for all $(\tilde{s}_1,\tilde{s}_2)\in\{ \mathcal{S}\times\mathcal{S}\}$, when in each codebook $\mathcal{C}^{\tilde{s}_1,\tilde{s}_2}_{2}$ there are $2^{n_2(\tilde{s}_1,\tilde{s}_2)R_2(\tilde{s}_1,\tilde{s}_2)}$ codewords, where $n_{2}(\tilde{s}_1,\tilde{s}_2)=(P(\tilde{S}_1,\tilde{S}_2=\tilde{s}_1,\tilde{s}_2)-\epsilon')n$, for $\epsilon'>0$. Every codeword $\mathcal{C}^{\tilde{s}_1,\tilde{s}_2}_{2}(i)$ when $i\in \{1,2,..., 2^{n_{2}(\tilde{s}_1,\tilde{s}_2)R_{2}(\tilde{s}_1,\tilde{s}_2)}\}$ has a length of $n_2(\tilde{s}_1,\tilde{s}_2)$ symbols.
 Each codeword from the $\mathcal{C}^{\tilde{s}_1,\tilde{s}_2}_{2}$ codebook is built $X^{\tilde{s}_1,\tilde{s}_2}_{2}\thicksim$ i.i.d. $ P(x^{\tilde{s}_1,\tilde{s}_2}_{2}|(\widetilde{S}_1,\widetilde{S}_2)=(\tilde{s}_1,\tilde{s}_2))$ (where the subscript is for Encoder $2$).
 A message $M_2$ is chosen according to a uniform distribution $\Pr (M_2=m_2)=2^{-nR_2}$, $m_2\in\left\{1,2,...,2^{nR_2}\right\}$.
 Every message $m_2$ is mapped into $k\times k$ sub messages
 $\mathcal{V}_{2}(m_2)=\left\{V^{1,1}_{2}(m_1),V^{1,2}_{2}(m_2),...,V^{k,k}_{2}(m_2)\right\}$ (one message from each codecook). Hence, every message  $m_{2}$ is specified by a $k\times k$ dimensional vector.
For a fix block length $n$, let $N_{\tilde{s}_1,\tilde{s}_2}$ be the number of times during the $n$ symbols
for which the feedback information at  encoder $2$ regarding the
channel state is $(\tilde{S}_1,\tilde{S}_2)=(\tilde{s}_1,\tilde{s}_2)$.
 Every time that the delayed CSI is $(\tilde{S}_1,\tilde{S}_2)=(\tilde{s}_1,\tilde{s}_2)$, encoder $2$ sends the next symbol from the $\mathcal{C}^{\tilde{s}_1,\tilde{s}_2}_{2}$ codebook.
 Since $N_{\tilde{s}_1,\tilde{s}_2}$ is not necessarily
  equivalent to $n_2(\tilde{s}_1,\tilde{s}_2)$, an error is declared if $N_{\tilde{s}_1,\tilde{s}_2}<n_2(\tilde{s}_1,\tilde{s}_2)$, and the code is zero-filled
if $N_{\tilde{s}_1,\tilde{s}_2}>n_2(\tilde{s}_1,\tilde{s}_2)$.
 Therefore, we can send a total of $2^{nR_{2}}=2^{\sum_{\tilde{s}_1,\tilde{s}_2\in\mathcal{S}\times\mathcal{S}}n_{2}(\tilde{s}_1,\tilde{s}_2)R_2(\tilde{s}_1,\tilde{s}_2)}$ messages.

\emph{Decoding }: We use successive decoding; in this method, instead of decoding the two messages simultaneously, the decoder first decodes one of the messages by itself, where the other user's message is considered as noise.
 After decoding the first user's message, the decoder turns to decode the second message. When decoding  the second message, the decoder uses the information about the first message as side information.
 This decoding rule aims to achieve the two corner points of the rate region, i.e., ($R_1=I(X_1;Y|X_2,S,\tilde{S}_1,\tilde{S}_2)-\epsilon, R_2=I(X_2;Y|S,\tilde{S}_1,\tilde{S}_2)-\epsilon$), and ($R_1=I(X_1;Y|S,\tilde{S}_1,\tilde{S}_2)-\epsilon, R_2=I(X_2;Y|X_1,S,\tilde{S}_1,\tilde{S}_2)-\epsilon$).
 The rate region is illustrated in Fig. \ref{Sum rate}.
\begin{figure}[h!]{
  \begin{center}
\psfrag{R}[][][0.8]{ $R_2$ }  \psfrag{P}[][][0.8]{$R_1$}
\psfrag{A}[][][0.8]{ $I(X_2;Y|X_1,S,\tilde{S}_1,\tilde{S}_2)\rightarrow$ }
\psfrag{B}[][][0.8]{ $\ \ \ \ I(X_2;Y|S,\tilde{S}_1,\tilde{S}_2)\rightarrow$ }
\psfrag{C}[][][0.8]{ $I(X_1;Y|S,\tilde{S}_1,\tilde{S}_2)$ }
\psfrag{D}[][][0.8]{ $\ \ \ \ \ \ \ \ \ \ \ I(X_1;Y|X_2,S,\tilde{S}_1,\tilde{S}_2)$ }
\psfrag{E}[][][0.8]{ $\uparrow$ }
\psfrag{F}[][][0.8]{ $\uparrow$ }
\includegraphics[width=9cm]{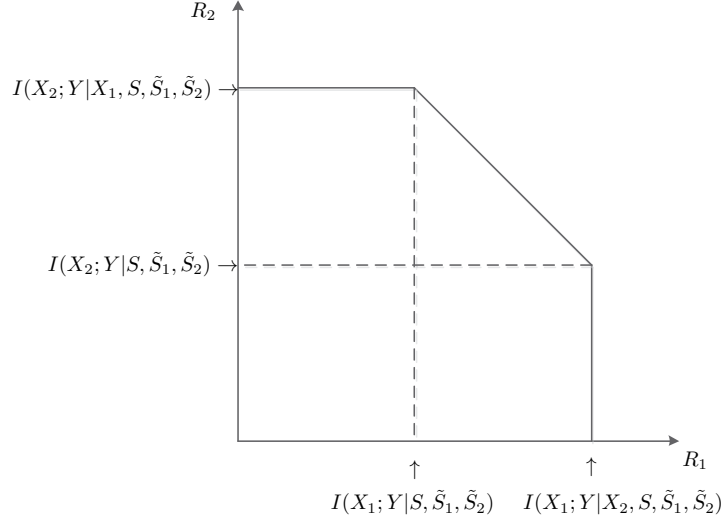}\ \ \ \ \ \ \ \
  \end{center}
\caption{The rate region}
 \label{Sum rate}
}\end{figure}

  To achieve the first point, let us analyze the case where the decoder first decodes $X_{2}^n$.
 The information  $\tilde{S}_1,\tilde{S}_2$  used to multiplex the codewords at the encoder is also available at the decoder. Hence, upon receiving a block of channel outputs and states $(Y^n,S^n)$, the decoder first demultiplexes  it into outputs corresponding to the component codebooks of encoder $2$. Then, the decoder separately decodes each component codeword $V^{\tilde{s}_1,\tilde{s}_2}_{2}$ where $(\tilde{s}_1,\tilde{s}_2)\in \mathcal{S}\times\mathcal{S}$. For each codebook $\mathcal{C}^{\tilde{s}_1,\tilde{s}_2}_{2}$, the decoder has $(Y^{n_2(\tilde{s}_1,\tilde{s}_2)},S^{n_2(\tilde{s}_1,\tilde{s}_2)})$ and searches $(X_{2}^{n_{2}(\tilde{s}_1,\tilde{s}_2)})$ such that
  $(X_{2}^{n_{2}(\tilde{s}_1,\tilde{s}_2)},Y^{n_2(\tilde{s}_1,\tilde{s}_2)},S^{n_2(\tilde{s}_1,\tilde{s}_2)})$ are strongly jointly typical sequences \cite{Cover_Thomas}, i.e.,
  $(X_{2}^{n_{2}(\tilde{s}_1,\tilde{s}_2)},Y^{n_2(\tilde{s}_1,\tilde{s}_2)},S^{n_2(\tilde{s}_1,\tilde{s}_2)}\in A_{\varepsilon}^{*(n_2(\tilde{s}_1,\tilde{s}_2))}(X_{2},Y,S))$ given $(\tilde{S}_1,\tilde{S}_2)=(\tilde{s}_1,\tilde{s}_2)$.
    The decoder declares that $\hat{m_2}$ is sent if it is a unique message such that $(X_{2}^{n_{2}(\tilde{s}_1,\tilde{s}_2)}(\hat{m_2}),Y^{n_2(\tilde{s}_1,\tilde{s}_2)},S^{n_2(\tilde{s}_1,\tilde{s}_2)}\in A_{\varepsilon}^{*(n_2(\tilde{s}_1,\tilde{s}_2))}(X_{2},Y,S))$ given $(\tilde{S}_1,\tilde{S}_2)=(\tilde{s}_1,\tilde{s}_2)$ for all $\tilde{s}_1,\tilde{s}_2\in \mathcal{S}\times\mathcal{S}$, otherwise it declares an error.
    If such $\hat{m_2}$ is found, the decoder has
    $X_{2}^{n}(\hat{m_2})$, but now the decoder is using the information  $\tilde{S}_1$ to demultiplex $(Y^n,S^n)$ into outputs corresponding to the component codebooks of encoder $1$ (which have $k$ codebooks). The decoder declares that $\hat{m_1}$ is sent if it is a unique message such that $(X_{1}^{n_{1}(\tilde{s}_1)}(\hat{m_1}),X_{2}^{n_{1}(\tilde{s}_1)}(\hat{m_2}),Y^{n_1(\tilde{s}_1)},S^{n_1(\tilde{s}_1)}\in A_{\varepsilon}^{*(n_1(\tilde{s}_1))}(X_1,X_{2},Y,S))$ given $\tilde{S}_1=\tilde{s}_1$ for all $\tilde{s}_1\in \mathcal{S}$, otherwise it declares error.

    \emph{Analysis of the probability of error}:
     First, we analyze the probability of error for the component codeword $V^{\tilde{s}_1,\tilde{s}_2}_{2}$ at encoder $2$, i.e., $\Pr(N_{\tilde{s}_1,\tilde{s}_2}<n_2(\tilde{s}_1,\tilde{s}_2))$.
    Since that the state process is stationary and ergodic
$\lim_{n\rightarrow\infty}\frac{N(\tilde{s}_1,\tilde{s}_2)}{n}=P(\tilde{s}_1,\tilde{s}_2)$ in probability. Therefore, $\Pr(N_{\tilde{s}_1,\tilde{s}_2}<n_2(\tilde{s}_1,\tilde{s}_2))\rightarrow 0$ as $n\rightarrow\infty$.
%
     Now, we analyze the probability to decode incorrectly the  component codeword $V^{\tilde{s}_1,\tilde{s}_2}_{2}$ that was sent from the $\mathcal{C}^{\tilde{s}_1,\tilde{s}_2}_{2}$ codebook of encoder $2$.
     Without loss of generality, we can assume that the first codeword was sent from the $\mathcal{C}^{\tilde{s}_1,\tilde{s}_2}_{2}$ codebook of encoder $2$, which we denote by   $\mathcal{C}^{\tilde{s}_1,\tilde{s}_2}_{2}(1)$. Since $S^{n_2(\tilde{s}_1,\tilde{s}_2)}$ is ergodic and by using the L.L.N. as $n_2(\tilde{s}_1,\tilde{s}_2) \rightarrow \infty$ we have $\Pr\left\{ S^{n_2(\tilde{s}_1,\tilde{s}_2)} \in A_{\varepsilon}^{*(n_2(\tilde{s}_1,\tilde{s}_2))}(S)\right\}  \rightarrow 1$. By  the construction of the codebook $\mathcal{C}^{\tilde{s}_1,\tilde{s}_2}_{2}(1)$,  $X_2$ and $S$ are independent given $(\tilde{S}_1,\tilde{S}_2)=(\tilde{s}_1,\tilde{s}_2)$. Hence $X_{2}^{n_2(\tilde{s}_1,\tilde{s}_2)}(1)$ and $S^{n_2(\tilde{s}_1,\tilde{s}_2)}$ are strongly jointly typical sequences with probability $1$. Finally from the codebooks construction and the channel transition probability we have that,
    \begin{eqnarray}
    p(y_i|x_{2}^i,s^i,\tilde{s}_1,\tilde{s}_2)&=&
    \sum_{x_{1,i}\in X_{1,i}}p(x_{1,i}|x_{2}^i,s^i,\tilde{s}_1,\tilde{s}_2)p(y_i|x_{1,i},x_{2}^i,s^i,\tilde{s}_1,\tilde{s}_2) \nonumber\\
    &=&\sum_{x_{1,i}\in X_{1,i}}p(x_{1,i}|\tilde{s}_1,\tilde{s}_2)p(y_i|x_{1,i},x_{2,i},s_i,\tilde{s}_1,\tilde{s}_2) \nonumber\\
    &=&p(y_i|x_{2,i},s_i,\tilde{s}_1,\tilde{s}_2).
    \end{eqnarray}
    Now using the fact that $p(y_i|x_{2}^i,s^i,\tilde{s}_1,\tilde{s}_2)=p(y_i|x_{2,i},s_i,\tilde{s}_1,\tilde{s}_2)$, and the L.L.N. we have  $\Pr\left\{ X_{2}^{n_2(\tilde{s}_1,\tilde{s}_2)}(1),S^{n_2(\tilde{s}_1,\tilde{s}_2)},Y^{n_2(1)}\in A_{\epsilon}^{*(n_2(\tilde{s}_1,\tilde{s}_2))}(X_{2},Y,S)|(\tilde{S}_1,\tilde{S}_2)=(\tilde{s}_1,\tilde{s}_2) \right\}\rightarrow 1$ as $n_2(\tilde{s}_1,\tilde{s}_2)\rightarrow\infty$. A decoding error occurs only if
    \begin{eqnarray}
   E_1&=&\ppp{\p{X_{2}^{n_2(\tilde{s}_1,\tilde{s}_2)}\p{1},Y^{n_2(\tilde{s}_1,\tilde{s}_2)},S^{n_2(\tilde{s}_1,\tilde{s}_2)}}\notin A_{\epsilon}^{*(n_2(\tilde{s}_1,\tilde{s}_2))}(X_{2},Y,S)|(\tilde{S}_1,\tilde{S}_2)=(\tilde{s}_1,\tilde{s}_2)},\\
   E_2&=&\ppp{ \exists i\neq1:\p{X_{2}^{n_2(\tilde{s}_1,\tilde{s}_2)}\p{i},Y^{n_2(\tilde{s}_1,\tilde{s}_2)},S^{n_2(\tilde{s}_1,\tilde{s}_2)}}\in A_{\epsilon}^{*(n_2(\tilde{s}_1,\tilde{s}_2))}(X_{2},Y,S)|(\tilde{S}_1,\tilde{S}_2)=(\tilde{s}_1,\tilde{s}_2)}.
\end{eqnarray}
    Then by the union of events bound,
   \begin{eqnarray}
   P_{e}^{\p{n_2(\tilde{s}_1,\tilde{s}_2)}}&=&\Pr\p{E_1\cup E_2 }\nonumber\\
   &\leq& P\p{E_1}+P\p{E_2}.
   \end{eqnarray}
   Now let us find the probability of each event,
\begin{enumerate}

   \item $P\p{E_1}$- As mentioned above as $n_2(\tilde{s}_1,\tilde{s}_2)\rightarrow\infty$ we have,
         \begin{eqnarray*}
         P\p{E_1}\rightarrow0.
         \end{eqnarray*}
    \item  $P\p{E_2}$- for $i\neq1$ the probability of error,
         \begin{eqnarray}
         P\p{E_2}&=&\Pr\p{\p{X_{2}^{n_2(\tilde{s}_1,\tilde{s}_2)}\p{i},Y_{1}^{n_2(\tilde{s}_1,\tilde{s}_2)},S_{1}^{n_2(\tilde{s}_1,\tilde{s}_2)}}\in A_{\epsilon}^{*(n_2(\tilde{s}_1,\tilde{s}_2))}|(\tilde{S}_1,\tilde{S}_2)=(\tilde{s}_1,\tilde{s}_2)}\nonumber\\
                 &\leq&\sum_{i=2}^{2^{n_2(\tilde{s}_1,\tilde{s}_2)R_{2}(\tilde{s}_1,\tilde{s}_2)}}P\p{E_{2,i}}\nonumber\\
                 &\leq& 2^{n_2(\tilde{s}_1,\tilde{s}_2)R_{2}(\tilde{s}_1,\tilde{s}_2)}\cdot 2^{-n_2(\tilde{s}_1,\tilde{s}_2)\p{I\p{X_{2};Y,S|\tilde{S}_1=\tilde{s}_1,\tilde{S}_2=\tilde{s}_2}-\epsilon}}.
         \end{eqnarray}
         For $P\p{E_2}\rightarrow0$ as $n_2(\tilde{s}_1,\tilde{s}_2)\rightarrow\infty$,  we need to choose,
         \begin{eqnarray}
         R_{2}(\tilde{s}_1,\tilde{s}_2)&<&I(X_{2};Y,S|\tilde{S}_1=\tilde{s}_1,\tilde{S}_2=\tilde{s}_2)-\epsilon\nonumber\\
               &=& I(X_{2};Y|S,\tilde{S}_1=\tilde{s}_1,\tilde{S}_2=\tilde{s}_2))+I(X_{2};S|\tilde{S}_1=\tilde{s}_2,\tilde{S}_2=\tilde{s}_2)         -\epsilon\nonumber\\
               &\stackrel{(a)}=&I(X_{2};X_{1},Y|S,\tilde{S}_1=\tilde{s}_1,\tilde{S}_2=\tilde{s}_2)- \epsilon,
         \end{eqnarray}
          where (a) follows from the independence of $X_2$ and $S$ given $(\tilde{S}_1=\tilde{s}_1,\tilde{S}_2=\tilde{s}_2)$.
\end{enumerate}
Similarly, we can analyze the probability of error to the rest of the codebooks of encoder $2$, i.e., $\mathcal{C}^{\tilde{s}_1,\tilde{s}_2}_{2}$ for every $(\tilde{s}_1,\tilde{s}_2)\in\{ \mathcal{S}\times\mathcal{S}\}$.
Therefore, as $n\rightarrow\infty$
\begin{eqnarray}
R_2&\leq&\sum_{\tilde{s}_1,\tilde{s}_2}\frac{n(\tilde{s}_1,\tilde{s}_2)}{n}R_{2}(\tilde{s}_1,\tilde{s}_2)\nonumber\\
&\leq& \sum_{\tilde{s}_1,\tilde{s}_2}\frac{n(\tilde{s}_1,\tilde{s}_2)}{n}(I(X_2;Y|S,\tilde{S}_1=\tilde{s}_1,\tilde{S}_2=\tilde{s}_2)-\epsilon)\nonumber\\
&=&\sum_{\tilde{s}_1,\tilde{s}_2}(P(\tilde{s}_1,\tilde{s}_2)-\epsilon')(I(X_2;Y|S,\tilde{S}_1=\tilde{s}_1,\tilde{S}_2=\tilde{s}_2)-\epsilon)\nonumber\\
&=& I(X_2;Y|S,\tilde{S}_1,\tilde{S}_2)-\epsilon'',
\end{eqnarray}
where $\epsilon''=\epsilon+\epsilon'\sum_{\tilde{s}_1,\tilde{s}_2}I(X_2;Y|S,\tilde{S}_1=\tilde{s}_1,\tilde{S}_2=\tilde{s}_2)-\epsilon\epsilon'$.
%

 Let us analyze the probability of error for the component codeword $V^{\tilde{s}_1}_{1}$.
As mention above, since that the state process is stationary and ergodic
$\lim_{n\rightarrow\infty}\frac{N(\tilde{s}_1)}{n}=P(\tilde{s}_1)$ in probability. Therefore,
 the probability that an error is declared at encoder $1$,  $\Pr(N_{\tilde{s}_1}<n_1(\tilde{s}_1))\rightarrow 0$ as $n\rightarrow\infty$.
Now, we analyze the probability to decode incorrectly the component codeword $V^{\tilde{s}_1}_{1}$, that was sent from the $\mathcal{C}^{\tilde{s}_1}_{1}$  codebook of encoder $1$ after $\hat{M_2}$ was decoded correctly. Without loss of generality, we can assume that the first codeword was sent from the $\mathcal{C}^{\tilde{s}_1}_{1}$  codebook of encoder $1$, i.e., $\mathcal{C}^{\tilde{s}_1}_{2}(1)$ was sent. Again from the ergodicity of $S^{n_1(\tilde{s}_1)}$, the construction of the codebooks, and channel transition probability we have that $\Pr \left\{(X_{1}^{n_1(\tilde{s}_1)}\p{1},X_{2}^{n_1(\tilde{s}_1)}(\hat{M_2}),Y^{n_1(\tilde{s}_1)},S^{n_1(\tilde{s}_1)}) \in   A_{\epsilon}^{*(n_1(\tilde{s}_1))}(X_{1},X_{2},Y,S)|\tilde{S}_1=\tilde{s}_1 \right\}\rightarrow 1  $ as $n_1(\tilde{s}_1)\rightarrow\infty$. A decoding error occurs only if
\begin{eqnarray}
   E_3&=&\ppp{\p{X_{1}^{n_1(\tilde{s}_1)}\p{1},X_{2}^{n_1(\tilde{s}_1)}(\hat{M_2}),Y^{n_1(\tilde{s}_1)},S^{n_1(\tilde{s}_1)}}\notin A_{\epsilon}^{*(n_1(\tilde{s}_1))}(X_{1},X_{2},Y,S)|\tilde{S}_1=\tilde{s}_1},\\
   E_4&=&\ppp{ \exists i\neq1:\p{X_{1}^{n_1(\tilde{s}_1)}\p{i},X_{2}^{n_1(\tilde{s}_1)}(\hat{M_2}),Y^{n_1(\tilde{s}_1)},S^{n_1(\tilde{s}_1)}}\in A_{\epsilon}^{*(n_1(\tilde{s}_1))}(X_{1},X_{2},Y,S)|\tilde{S}_1=\tilde{s}_1}.
\end{eqnarray}
Then by the union of events bound,
   \begin{eqnarray}
   P_{e}^{\p{n_1(\tilde{s}_1)}}&=&\Pr\p{E_3\cup E_4 }\nonumber\\
   &\leq& P\p{E_3}+P\p{E_4}.
   \end{eqnarray}
   Now let us find the probability of each event,
\begin{enumerate}

   \item $P\p{E_3}$- As mentioned above as $n_1(\tilde{s}_1)\rightarrow\infty$ we have,
         \begin{eqnarray*}
         P\p{E_3}\rightarrow0.
         \end{eqnarray*}
    \item  $P\p{E_4}$- for $i\neq1$ the probability of error,
         \begin{eqnarray}
         P\p{E_4}&=&\Pr\p{\p{X_{1}^{n_1(\tilde{s}_1)}\p{i},X_{2}^{n_1(\tilde{s}_1)}(\hat{M_2}),Y^{n_1(\tilde{s}_1)},S^{n_1(\tilde{s}_1)}}\in A_{\epsilon}^{(n_1(\tilde{s}_1))}|\tilde{S}_1=\tilde{s}_1}\nonumber\\
                 &\leq&\sum_{i=2}^{2^{n_1(\tilde{s}_1)R_{1}(1)}}P\p{E_{4,i}}\nonumber\\
                 &\leq& 2^{n_1(\tilde{s}_1)R_{1}(\tilde{s}_1)}\cdot 2^{-n_1(\tilde{s}_1)\p{I\p{X_{1};X_{2},Y,S|\tilde{S}_1=\tilde{s}_1}-\epsilon}}.
         \end{eqnarray}
         For $P\p{E_4}\rightarrow0$ as $n_1(1)\rightarrow\infty$,  we need to choose ,
         \begin{eqnarray}
         R_{1}(\tilde{s}_1)&<&I(X_{1};X_2,Y,S|\tilde{S}_1=\tilde{s}_1)-\epsilon\nonumber\\
               &=& I(X_{1};Y|X_2,S,\tilde{S}_1=\tilde{s}_1)+I(X_{1};X_2,S|\tilde{S}_1=\tilde{s}_1)         -\epsilon\nonumber\\
               &\stackrel{(a)}=&I(X_{1};Y|X_2,S,\tilde{S}_1=\tilde{s}_1)- \epsilon, \nonumber\\
               &=&H(Y|X_2,S,\tilde{S}_1=\tilde{s}_1)-H(Y|X_1,X_2,S,\tilde{S}_1=\tilde{s}_1)- \epsilon, \nonumber\\
               &\stackrel{(b)}=&H(Y|X_2,S,\tilde{S}_1=\tilde{s}_1,\tilde{S}_2)-H(Y|X_1,X_2,S,\tilde{S}_1=\tilde{s}_1,\tilde{S}_2)- \epsilon, \nonumber\\
               &=&I(X_{1};Y|X_2,S,\tilde{S}_1=\tilde{s}_1,\tilde{S}_2)- \epsilon, \nonumber
         \end{eqnarray}
          where (a) follows from the independence of $X_1$ and $(X_2,S)$ given $\tilde{S}_1=\tilde{s}_1$, and (b) follows from the independence of $Y$ and $\tilde{S}_2$ given $(X_2,S,\tilde{S}_1=\tilde{s}_1)$.
   \end{enumerate}
         Similarly, we can analyze the probability of error to the rest of the codbooks of encoder $1$, i.e., $\mathcal{C}^{\tilde{s}_1}_{1}$ for every $\tilde{s}_1\in\{ \mathcal{S}\}$. Therefore, as $n\rightarrow\infty$
         \begin{eqnarray}
R_1&\leq&\sum_{\tilde{s}_1}\frac{n(\tilde{s}_1)}{n}R_{1}(\tilde{s}_1)\nonumber\\
&\leq& \sum_{\tilde{s}_1}\frac{n(\tilde{s}_1)}{n}(I(X_1;Y|X_2,S,\tilde{S}_1=\tilde{s}_1,\tilde{S}_2)-\epsilon)\nonumber\\
&=&\sum_{\tilde{s}_1}(P(\tilde{s}_1)-\epsilon')(I(X_1;Y|X_2,S,\tilde{S}_1=\tilde{s}_1,\tilde{S}_2)-\epsilon)\nonumber\\
&=& I(X_1;Y|X_2,S,\tilde{S}_1,\tilde{S}_2)-\epsilon'',
\end{eqnarray}
where $\epsilon''=\epsilon+\epsilon'\sum_{\tilde{s}_1,\tilde{s}_2}I(X_1;Y|X_2,S,\tilde{S}_1=\tilde{s}_1,\tilde{S}_2)-\epsilon\epsilon'$.
%

Thus the total average probability of decoding error  $P_{e}^{(n)}\rightarrow0$ as $n\rightarrow\infty$ if
$ R_1<I(X_1;Y|X_2,S,\tilde{S}_1,\tilde{S}_2), R_2<I(X_2;Y|S,\tilde{S}_1,\tilde{S}_2)$. The achievability of the other corner point follows by changing the decoding order. To show achievability of other points in $\mathcal{R}(X_1,X_2)$, we use time sharing between corner points and points on the axes.
Thus, the probability of error, conditioned on a particular codeword being sent, goes to zero if the conditions of the following  are met:
\begin{eqnarray}
R_{1}&\leq& I(X_{1};Y|X_{2},S,\tilde{S}_1,\tilde{S}_2), \nonumber\\
R_{2} &\leq& I(X_{2};Y|X_{1},S,\tilde{S}_1,\tilde{S}_2), \nonumber\\
R_{1}+R_{2}&\leq& I(X_{1},X_{2};Y|S,\tilde{S}_1,\tilde{S}_2) \nonumber.
\end{eqnarray}
The above bound shows that the average probability of error, which by symmetry is equal to the probability for an individual pair of codewords $(m_1,m_2)$, averaged over all choices of codebooks in the random code construction, is arbitrarily small. Hence, there exists at least one code $(n,2^{nR_{1}},2^{nR_{2}},d_1,d_2)$ with arbitrarily small probability of error. To complete the proof we use time-sharing to allow any $(R_1,R_2)$ in the convex
hull to be achieved.
\end{proof}
 \section{ALTERNATIVE PROOF}\label{ALTERNATIVE PROOF}
%
 In this section we provide an alternative proof for Theorem \ref{Capacity region-  MAC with delayed CSI feedback t1}.
 The alternative proof is based on a multi-letter expression for the capacity region of FS-MAC with time-invariant feedback \cite{Permuter_Weissman}. In order to use the capacity region of FS-MAC with time-invariant feedback,
 we treat the knowledge of the state at the encoders as being part of the feedback from the decoder to the encoders.

 Throughout this section we use the causal conditioning notation $(\cdot||\cdot)$. We denote the probability mass function (pmf) of $Y^n$ causally conditioned on $X^{n-d}$, for some integer $d \geq 0$, as $P(y^n||x^{n-d})$ which is defined as
   \begin{eqnarray}
 P(y^n||x^{n-d})=\prod_{i=1}^nP(y_i|y^{i-1}, x^{i-d}),
 \end{eqnarray}
 (if $i - d \leq 0$ then $x^{i-d}$ is set to null).
 The directed information $I(X^n \rightarrow Y^n)$ was defined by Massey in \cite{Massey90} as
 \begin{eqnarray}
 I(X^n \rightarrow Y^n)\triangleq \sum_{i=1}^nI(X^i; Y_i|Y^{i-1}).
 \end{eqnarray}
 Directed information has been widely used in the characterization of capacity of point-to-point channels \cite{Chen_Berger}, \cite{Kim07_feedback}, \cite{Tatikonda00}, \cite{Yang05}, \cite{Permuter}, \cite{PermuterCuffVanRoyWeissman08}, compound channels \cite{Shrader:2009}, network capacity \cite{Kramer98}, rate distortion \cite{PradhanVenkataramananIT_feedforward07},\cite{zamir06}, and broadcast channel \cite{Dabora2010}. Directed information can also be expressed in terms of causal conditioning as
 \begin{eqnarray}
 I(X^n \rightarrow Y^n)&=&\sum_{i=1}^nI(X^i; Y_i|Y^{i-1})\nonumber\\
 &=& \mathbf{E} \left[ \log \frac{P(Y^n||X^n)}{P(Y^n)} \right],
 \end{eqnarray}
 where $\mathbf{E}$ denotes expectation. Directed information between $X^n_{1}$ to $Y^n$ causally conditioned on $X^{n}_{2}$ is defined as
 \begin{eqnarray}
 I(X_{1}^n \rightarrow Y^n||X_{2}^n)&\triangleq&\sum_{i=1}^nI(X_1^i; Y_i|Y^{i-1},X_{2}^i)\nonumber\\
 &=& \mathbf{E} \left[ \log \frac{P(Y^n||X_1^n,X_2^n)}{P(Y^n||X_2^n)} \right],
 \end{eqnarray}
 where $P(y^n||x_1^n,x_2^n)=\prod_{i=1}^n P(y_i|y^{i-1},x_1^i,x_2^i)$.

 Now let us present a result from \cite{Permuter_Weissman} that we need for the proof.
 Consider the FS-MAC with time-invariant feedback as illustrated in Fig. \ref{f_1}.
 The channel is characterized by a conditional probability $P(y_i,s_{i+1}|x_{1,i},x_{2,i},s_i)$
 that satisfies,
  \begin{eqnarray}
    P(y_i,s_{i+1}|x_{1}^{i},x_{2}^{i},s^i,y^{i-1})&=&P(y_i,s_{i+1}|x_{1,i},x_{2,i},s_i).
\end{eqnarray}
In addition, we assume that the channel is stationary, indecomposable, and without ISI, i.e.,
 \begin{eqnarray}
   P(y_i,s_{i+1}|x_{1,i},x_{2,i},s_i)=p(s_{i+1}|s_i)p(y_i|x_{1,i},x_{2,i},s_i)
  \end{eqnarray}
and
 \begin{eqnarray}
   P(s_0)=\pi(s_0),
  \end{eqnarray}
where $\pi(s_0)$ is the unique stationary distribution, i.e., $\lim_{n\rightarrow\infty}\Pr(S_n=s|s_0)=\pi(s_0)$, $\forall s_0 \in \mathcal{S}$.
\begin{figure}[h]{
 \psfrag{A}[][][0.8]{Encoder 1}
\psfrag{B}[][][0.8]{\ $x_{1,i}(m_1,z_1^{i-d_1})$}
\psfrag{C}[][][0.8]{Encoder 2}
\psfrag{D}[][][0.8]{\ $x_{2,i}(m_2,z_2^{i-d_2})$}
\psfrag{m1}[][][0.8]{$m_1$} \psfrag{m2}[][][0.8]{$\in\{
1,...,2^{nR_1}\}\;\;$} \psfrag{m3}[][][0.8]{$m_2$}
\psfrag{m4}[][][0.8]{$\in\{ 1,...,2^{nR_2}\}\;\;$}
 \psfrag{M}[][][0.8]{Finite State
MAC} \psfrag{P}[][][0.8]{$P(y_i,s_{i+1}|x_{1,i},x_{2,i},s_{i})$}
\psfrag{f}[][][0.8]{$z_{2,i}=f_2(y_{i})$}
\psfrag{f}[][][0.8]{Time-Invariant} \psfrag{i}[][][0.8]{Function}
\psfrag{zf2}[][][0.8]{$z_{2,i}(y_i)$}
\psfrag{zf1}[][][0.8]{$z_{1,i}(y_i)$}
\psfrag{z2}[][][0.8]{$z_{2,i-d_2}$}
\psfrag{z1}[][][0.8]{$z_{1,i-d_1}$} \psfrag{W}[][][0.8]{Decoder}

\psfrag{g1}[][][0.8]{Delay} \psfrag{h1}[][][0.8]{$d_1$}
\psfrag{g2}[][][0.8]{Delay} \psfrag{h2}[][][0.8]{$d_2$}
\psfrag{X}[][][0.8]{$\hat m_1(y^N)$} \psfrag{U}[][][0.8]{$\hat
m_2(y^N)$} \psfrag{Y}[][][0.8]{$\hat m_1, \hat m_2$}
\psfrag{D6}[][][0.8]{Function}
\psfrag{v7}[][][0.8]{$z_{i-1}(y_{i-1})$}
 \psfrag{Yi}[][][0.8]{$y_i$}
\psfrag{v6 }[][][0.8]{$\hat m$}\psfrag{w6a}[][][0.8]{Estimated}
\psfrag{w6b\r}[][][0.8]{Message} \centering
\includegraphics[width=13cm]{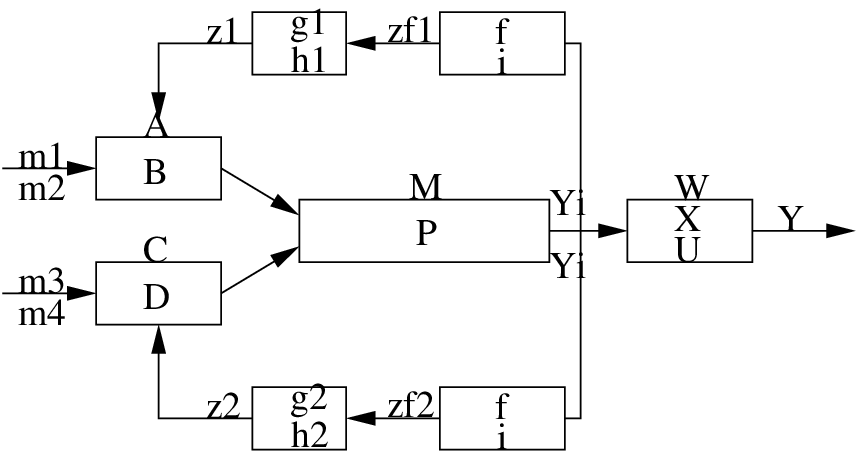}
\centering \caption{Channel with feedback, where the feedback  is a time-invariant
deterministic function of the output. } \label{f_1}
}\end{figure}
\begin{lemma}\label{lemma1}{\cite[Theorem 13]{Permuter_Weissman}}
The capacity of a stationary, indecomposable FS-MAC without ISI and with time-invariant feedback, as illustrated in Fig. \ref{f_1}, is  $ \hat{\mathcal{R}} = \lim_{n\rightarrow \infty} \hat{\mathcal{R}}_{n}$, where $\hat{\mathcal{R}}_n$ is the following region in $\mathbb{R}_{+}^2$:
\begin{eqnarray}\label{e_def_Rn}
\hat{{\mathcal R}}_n=\bigcup_{P(x_1^n||z_1^{n-d_1})P(x_2^n||z_2^{n-d_2})}\left( \begin{array}{rcl}
R_1 \leq  \frac{1}{n}I(X_1^n \to Y^n ||X_2^{n}),\\
R_1 \leq  \frac{1}{n}I(X_2^n \to Y^n ||X_1^{n}),\\
R_1+R_2 \leq  \frac{1}{n}I((X_1,X_2)^n \to Y^{n}).
\end{array}\right).
\end{eqnarray}
\end{lemma}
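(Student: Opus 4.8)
The plan is to establish Lemma~\ref{lemma1} by the two standard halves of a coding theorem---a converse and an achievability---each adapted to the directed-information structure forced by the feedback, and then to pass to the limit in $n$. Since the region is defined as $\hat{\mathcal{R}}=\lim_{n\to\infty}\hat{\mathcal{R}}_n$, the essential point is that for every blocklength $n$ the single-block region $\hat{\mathcal{R}}_n$ is an outer bound (converse) and, in the limit, achievable; the indecomposability hypothesis is what guarantees that the limit exists and that the contribution of the unknown initial state $S_0$ washes out.

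For the converse I would begin, as in Section~\ref{CONVERSE}, with Fano's inequality. To bound $R_1$ I condition throughout on $M_2$ (legitimate since $M_1$ and $M_2$ are independent), obtaining $nR_1\le I(M_1;Y^n|M_2)+n\varepsilon_n$. Writing the mutual information as $\sum_i I(M_1;Y_i|Y^{i-1},M_2)$, the key observation is that $X_{2,i}$ is a deterministic function of $(M_2,z_2^{i-d_2})$ and $z_2^{i-d_2}$ is itself a function of $Y^{i-d_2}\subseteq Y^{i-1}$, so conditioning on $(M_2,Y^{i-1})$ already fixes $X_2^i$; likewise $X_{1,i}$ is a function of $(M_1,Y^{i-d_1})$. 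Data processing together with the no-ISI channel law then upper-bounds each term by $I(X_1^i;Y_i|Y^{i-1},X_2^i)$, whose sum over $i$ is exactly the causally conditioned directed information $I(X_1^n\to Y^n||X_2^n)$. The same manipulation with the roles reversed, and a version without conditioning on either message, yields the $R_2$ and sum-rate bounds, placing every achievable pair in $\hat{\mathcal{R}}_n$ up to $\varepsilon_n$.

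For achievability I would treat one block of $n$ channel uses as a single super-symbol and run a random-coding argument in which the codewords are drawn according to the causally conditioned input laws $P(x_1^n||z_1^{n-d_1})$ and $P(x_2^n||z_2^{n-d_2})$ appearing in the union defining $\hat{\mathcal{R}}_n$. Decoding is by joint directed typicality, and the error analysis follows the usual three-way MAC union bound, with the directed-information quantities replacing the ordinary mutual informations because the inputs are produced causally from the delayed feedback. Concatenating many such super-blocks and invoking indecomposability to render successive blocks asymptotically independent of their initial states shows that every interior point of $\hat{\mathcal{R}}_n$ is achievable, hence so is every point of $\bigcup_n \hat{\mathcal{R}}_n$ and of its closure $\hat{\mathcal{R}}$.

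I expect the main obstacle to lie in the achievability rather than the converse. One must show that the region $\hat{\mathcal{R}}_n$, whose information quantities involve the \emph{feedback-dependent} joint law, is genuinely attainable by a causal code, and that block-concatenation loses no rate; this calls for a superadditivity (Fekete-type) argument to guarantee that $\lim_n \hat{\mathcal{R}}_n$ exists and equals $\sup_n \hat{\mathcal{R}}_n$, together with a uniform control---furnished by indecomposability---of the gap introduced by the unknown initial state at each block boundary. The converse, by contrast, is essentially the chain-rule and data-processing bookkeeping sketched above.
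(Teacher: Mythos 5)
The first thing to say is that the paper itself does not prove Lemma~\ref{lemma1} at all: it is imported verbatim from \cite[Theorem 13]{Permuter_Weissman}, the only added remark being that the case $d_1=d_2=1$ treated there extends straightforwardly to arbitrary delays. So your proposal cannot match ``the paper's proof''; what it amounts to is a reconstruction of the proof living in the cited reference, and it should be judged against that. Your converse sketch is essentially the correct one and coincides with the argument of \cite{Permuter_Weissman}: Fano's inequality, conditioning on $M_2$, the observation that $X_2^i$ is determined by $(M_2,Y^{i-1})$ because $z_2^{i-d_2}$ is a time-invariant function of $Y^{i-d_2}\subseteq Y^{i-1}$ (this is also exactly why general delays $d_1,d_2\geq 1$ come for free, which is the one point the paper waves at without proof), and the Markov chain $(M_1,M_2)-(X_1^i,X_2^i,Y^{i-1})-Y_i$ to collapse each term to $I(X_1^i;Y_i|Y^{i-1},X_2^i)$, whose sum is $I(X_1^n\to Y^n||X_2^n)$. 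One bookkeeping step you gloss over but which is needed: a deterministic code induces a causally conditioned law of the product form $P(x_1^n||z_1^{n-d_1})P(x_2^n||z_2^{n-d_2})$, and the three directed-information quantities are functionals of this law and the channel alone (this is the role of results like \cite[Lemma 3]{Permuter}); without it the achieved rates need not land in the union defining $\hat{\mathcal{R}}_n$.

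The genuine soft spot is your achievability. With feedback, a random ``codeword drawn according to $P(x_1^n||z_1^{n-d_1})$'' is not a sequence but a code-tree --- a map from feedback histories to input symbols --- and ``joint directed typicality'' is not an off-the-shelf device: there is no standard AEP for the feedback-dependent joint laws at this level of generality, so the usual three-way MAC union bound does not go through as stated. The proof in \cite{Permuter_Weissman} avoids typicality entirely: it random-codes over code-trees, decodes by maximum likelihood, controls errors via Gallager-type exponents for the MAC, and then concatenates blocks, using sub/super-additivity and indecomposability to show that $\lim_{n}\hat{\mathcal{R}}_n$ exists and that the $O(\log|\mathcal{S}|/n)$ initial-state correction separating the per-$n$ inner and outer regions vanishes. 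You correctly anticipated the Fekete-type argument and the role of indecomposability, but your claim that ``every interior point of $\hat{\mathcal{R}}_n$ is achievable'' for fixed $n$ is too strong --- only the inner region with the initial-state penalty is achievable per block, with equality only in the limit. With the typicality step replaced by the code-tree/ML machinery, your outline becomes a faithful proof of the lemma, i.e., of something this paper itself never supplies.
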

In \cite[Theorem 13]{Permuter_Weissman} only the case where $d_1=d_2=1$ was considered, but the result extends straightforwardly to any delay $d_1$ and $d_2$.
The following theorem provides an alternative proof for Theorem \ref{Capacity region-  MAC with delayed CSI feedback t1} based on Lemma \ref{lemma1}.
\begin{theorem}
Let us denote ${\mathcal R}_n$ and  ${\mathcal R}$ to be the following regions in $\mathbb{R}_{+}^2$:
\begin{eqnarray}
{\mathcal R}_n=\bigcup_{P(x_1^n||s^{n-d_1})P(x_2^n||s^{n-d_2})}\left( \begin{array}{rcl}
R_1 \leq  \frac{1}{n}I(X_1^n \to Y^n,S^n ||X_2^{n}),\\
R_1 \leq  \frac{1}{n}I(X_2^n \to Y^n,S^n ||X_1^{n}),\\
R_1+R_2 \leq  \frac{1}{n}I((X_1,X_2)^n \to Y^{n},S^n).
\end{array}\right).
\end{eqnarray}
\begin{eqnarray}
\mathcal{R}=\bigcup_{P(u|\tilde{s}_1)P(x_{1}|\tilde{s}_1,u)P(x_{2}|\tilde{s}_1,\tilde{s}_2,u)} \left( \begin{array}{rcl}
R_{1}<I(X_{1};Y|X_{2},S,\tilde{S}_1,\tilde{S}_2,U)\\
R_{2}<I(X_{2};Y|X_{1},S,\tilde{S}_1,\tilde{S}_2,U)\\
R_{1}+R_{2}<I(X_{1},X_{2};Y|S,\tilde{S}_1,\tilde{S}_2,U),
\end{array}\right).
\end{eqnarray}
The capacity region for the FSM-MAC with CSI at the decoder and asymmetrical delayed  CSI  at the encoders with delays $d_1$ and $d_2$, as illustrated in Fig. \ref{Multiple-Access Channel with Receiver CSI and Asymmetrical Delayed Feedback}, is $\lim_{n \to \infty}{\mathcal R}_n={\mathcal R}$.
\end{theorem}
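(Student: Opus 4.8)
The plan is to establish the chain capacity $=\lim_{n\to\infty}\mathcal{R}_n=\mathcal{R}$ in two stages: first I would recast the delayed state information as time-invariant output feedback so that Lemma \ref{lemma1} applies verbatim, and then I would single-letterize the resulting multi-letter directed-information region to recover $\mathcal{R}$.

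\textbf{Stage 1 (reduction to Lemma \ref{lemma1}).} Although Lemma \ref{lemma1} concerns feedback that is a deterministic function of the channel \emph{output}, I would absorb the state into the output by introducing an augmented channel whose time-$i$ output is the pair $\tilde{Y}_i\triangleq(Y_i,S_i)$. Since the state obeys $P(s_{i+1}|s_i)$ independently of the inputs and past outputs, and $Y_i$ depends only on $(X_{1,i},X_{2,i},S_i)$, the augmented transition law factors as $P(\tilde{y}_i,s_{i+1}|x_{1,i},x_{2,i},s_i)=p(s_{i+1}|s_i)p(y_i|x_{1,i},x_{2,i},s_i)$, so the augmented channel is again a stationary, indecomposable FS-MAC without ISI and the hypotheses of Lemma \ref{lemma1} hold. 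I would then take the time-invariant feedback to be the deterministic function $z_{j,i}=S_i$ of $\tilde{Y}_i$, so that Encoder $j$ observing $z_{j,i-d_j}$ sees exactly $S_{i-d_j}$, matching the code description of Section \ref{s_preliminary}. Applying Lemma \ref{lemma1} to this augmented channel, with output $\tilde{Y}^n=(Y^n,S^n)$ and delays $d_1,d_2$, immediately yields that the capacity region equals $\lim_{n\to\infty}\mathcal{R}_n$, the causal-conditioning constraints $P(x_j^n||z_j^{n-d_j})$ becoming $P(x_j^n||s^{n-d_j})$.

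\textbf{Stage 2 (single-letterization $\lim_n\mathcal{R}_n=\mathcal{R}$).} I would prove both inclusions. For $\lim_n\mathcal{R}_n\subseteq\mathcal{R}$ I would expand, e.g., the sum-rate term as $I((X_1,X_2)^n\to Y^n,S^n)=\sum_{i=1}^n I(X_1^i,X_2^i;Y_i,S_i|Y^{i-1},S^{i-1})$; the $S_i$ contribution vanishes because the state is independent of the inputs and outputs given the past states, and the $Y_i$ contribution reduces, via the no-ISI property, to $H(Y_i|Y^{i-1},S^i)-H(Y_i|X_{1,i},X_{2,i},S_i)$. Introducing a uniform time-sharing variable $Q$ together with the identifications $X_1\triangleq X_{1,Q}$, $X_2\triangleq X_{2,Q}$, $Y\triangleq Y_Q$, $S\triangleq S_Q$, $\tilde{S}_1\triangleq S_{Q-d_1}$, $\tilde{S}_2\triangleq S_{Q-d_2}$ and $U\triangleq(S^{Q-d_1-1},Q)$ --- exactly as in the direct converse of Section \ref{CONVERSE} --- turns the normalized bounds into the single-letter expressions of $\mathcal{R}$. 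For the reverse inclusion $\mathcal{R}\subseteq\lim_n\mathcal{R}_n$ I would, given a single-letter kernel $P(u|\tilde{s}_1)P(x_1|\tilde{s}_1,u)P(x_2|\tilde{s}_1,\tilde{s}_2,u)$, construct causally conditioned input laws in which both encoders derive a common auxiliary $U_i$ from the shared delayed state $S^{i-d_1}$ (available to both since $d_1\ge d_2$), then draw $X_{1,i}$ from $P(x_1|S_{i-d_1},U_i)$ and $X_{2,i}$ from $P(x_2|S_{i-d_1},S_{i-d_2},U_i)$, and invoke the stationarity and ergodicity of the state process to show that the normalized directed-information terms converge to the single-letter mutual informations.

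\textbf{Main obstacle.} The delicate part is Stage 2, and specifically two matching points. In the converse direction one must verify that the choice $U=(S^{Q-d_1-1},Q)$ induces exactly the factorization claimed in $\mathcal{R}$, i.e.\ the Markov relations $U-\tilde{S}_1-\tilde{S}_2-S$ and the dependence of $X_1$ only on $(\tilde{S}_1,U)$ and of $X_2$ only on $(\tilde{S}_1,\tilde{S}_2,U)$, which hinges on $d_1\ge d_2$ together with the Markovity of $\{S_i\}$. In the achievability direction the subtlety is that $\mathcal{R}$ correlates $X_1$ and $X_2$ through the common $U$, whereas Lemma \ref{lemma1} only permits the product form $P(x_1^n||s^{n-d_1})P(x_2^n||s^{n-d_2})$; this is reconciled precisely because $U$ can be realized as a deterministic function of the delayed state $S^{i-d_1}$ that both encoders observe, so no extra common randomness is needed and the product-kernel constraint is respected. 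The remaining work --- existence of the limit, and closedness and convexity of the region after absorbing the time-sharing into $U$ --- is routine bookkeeping.
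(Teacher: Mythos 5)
Your Stage 1 (absorbing the state into the channel output and choosing the time-invariant feedback $z_{1,i}=z_{2,i}=s_i$ so that Lemma \ref{lemma1} gives capacity $=\lim_{n\to\infty}\mathcal{R}_n$) and your converse inclusion $\lim_{n\to\infty}\mathcal{R}_n\subseteq\mathcal{R}$ (cancellation of the $H(S_i|S^{i-1})$ terms, reduction via the no-ISI property, conditioning-reduces-entropy, the uniform time-sharing variable $Q$, and the identification $U=(S^{Q-d_1-1},Q)$ with the factorizations verified from $d_1\ge d_2$ and the Markovity of $\{S_i\}$) coincide with the paper's argument. The genuine gap is in the other inclusion, $\mathcal{R}\subseteq\lim_{n\to\infty}\mathcal{R}_n$: you assert that the auxiliary $U$ ``can be realized as a deterministic function of the delayed state $S^{i-d_1}$ that both encoders observe, so no extra common randomness is needed.'' This is unjustified and false in general. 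To reproduce a prescribed conditional law $P(u|\tilde{s}_1)$ as $U_i=g_i(S^{i-d_1})$ you must extract, conditionally on $S_{i-d_1}=\tilde{s}_1$, randomness from the earlier states $S^{i-d_1-1}$, and that requires the conditional law of the past given the present to have vanishing atoms --- something an irreducible aperiodic chain need not provide. The one-state chain, perfectly admissible under the theorem's hypotheses, is a counterexample: $S^{i-d_1-1}$ is deterministic, so no nondegenerate $U_i$ can be produced, yet $U$ remains indispensable in $\mathcal{R}$ as a convexifying/time-sharing variable (for a constant state, $\mathcal{R}$ is the usual convexified MAC region, which product input distributions alone do not exhaust). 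Even for chains with positive entropy rate you would owe a nontrivial extraction argument showing the induced distribution of $g_i(S^{i-d_1})$ matches $P(u|\tilde{s}_1)$ asymptotically; this is not ``routine bookkeeping.''

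The paper sidesteps all of this by emulating $U$ deterministically \emph{in time} rather than in the state: $\mathcal{R}_n$ permits nonstationary kernels $\{P(x_{1,i}|s_{i-d_1})P(x_{2,i}|s_{i-d_1},s_{i-d_2})\}_{i=1}^{n}$, so if $P(U=u|\tilde{S}_1=\tilde{s}_1)=k(u,\tilde{s}_1)/n$ is rational one simply assigns $k(u,\tilde{s}_1)$ of the $n$ time indices the fixed pair of kernels $P(x_1|\tilde{s}_1,u)P(x_2|\tilde{s}_1,\tilde{s}_2,u)$, approximating irrational weights by longer blocks; stationarity of $P(s_{i-d_1})$ then converts the time average $\frac{1}{n}\sum_{i=1}^{n}I(Y_i;X_{1,i}|X_{2,i},S_i,S_{i-d_1},S_{i-d_2})$ into the $U$-averaged single-letter expression, simultaneously for all three rate constraints, for \emph{any} irreducible aperiodic chain and with no randomness extraction. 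Your proposal becomes correct if you replace the ``deterministic function of $S^{i-d_1}$'' device by this per-index scheduling of kernels --- i.e., let $U$ enter through the time index on the achievability side, mirroring exactly how it exits through $U=(S^{Q-d_1-1},Q)$ on the converse side.
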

\begin{proof}\\
In order to adapt the model in Fig. \ref{f_1} to our model, we can consider the state information at the decoder as a part of the channel's output. Therefore, the capacity region is
\begin{eqnarray}
{\mathcal R_{feedback}}=\lim_{n\to \infty}\bigcup_{P(x_1^n||z^{n-d_1})P(x_2^n||z^{n-d_2})}\left( \begin{array}{rcl}
R_1 \leq  \frac{1}{n}I(X_1^n \to Y^n,S^n ||X_2^{n}),\\
R_1 \leq  \frac{1}{n}I(X_2^n \to Y^n,S^n ||X_1^{n}),\\
R_1+R_2 \leq  \frac{1}{n}I((X_1,X_2)^n \to Y^{n},S^n).\label{extended_model_capa}
\end{array}\right).
\end{eqnarray}
Now, by choosing the deterministic function of the output $z_{1,i}(y_i,s_i)=z_{2,i}(y_i,s_i)=s_i$, (\ref{extended_model_capa}) yields the capacity region for the FSM-MAC with CSI at the decoder and asymmetrical delayed  CSI  at the encoders as shown in Fig. \ref{Multiple-Access Channel with Receiver CSI and Asymmetrical Delayed Feedback}.
Note that $\mathcal R_{feedback}=\lim_{n \to \infty}{\mathcal R}_n$, hence the capacity region is $\lim_{n \to \infty}{\mathcal R}_n$.
In order to complete the proof we need to show that $\lim_{n \to \infty}{\mathcal R}_n={\mathcal R}$.
First let us show that $\lim_{n \to \infty}{\mathcal R}_n\supseteq{\mathcal R}$,
\begin{eqnarray}
{\mathcal R}_n&=&\bigcup_{P(x_1^n||s^{n-d_1})P(x_2^n||s^{n-d_2})}\left( \begin{array}{rcl}
R_1 \leq  \frac{1}{n}I(X_1^n \to Y^n,S^n ||X_2^{n}),\\
R_1 \leq  \frac{1}{n}I(X_2^n \to Y^n,S^n ||X_1^{n}),\\
R_1+R_2 \leq  \frac{1}{n}I((X_1,X_2)^n \to Y^{n},S^n).
\end{array}\right)\nonumber\\
&=&\bigcup_{P(x_1^n||s^{n-d_1})P(x_2^n||s^{n-d_2})}\left( \begin{array}{rcl}
R_1 \leq  \frac{1}{n}\sum_{i=1}^{n}I(X_{1}^i;Y_i,S_i |X_2^{i},Y^{i-1},S^{i-1}),\\
R_1 \leq  \frac{1}{n}\sum_{i=1}^{n}I(X_{2}^i;Y_i,S_i |X_1^{i},Y^{i-1},S^{i-1}),\\
R_1+R_2 \leq  \frac{1}{n}\sum_{i=1}^{n}I(X_{1}^i,X_2^{i};Y_i,S_i |Y^{i-1},S^{i-1}).
\end{array}\right).\nonumber
\end{eqnarray}
To bound  $R_1$, consider
\begin{eqnarray}
R_1&\leq&\frac{1}{n}\sum_{i=1}^{n}I(X_{1}^i;Y_i,S_i|X_2^{i},Y^{i-1},S^{i-1})\nonumber\\
&=&\frac{1}{n}\sum_{i=1}^{n}H(Y_i,S_i|X_{2}^i,Y^{i-1},S^{i-1})-H(Y_i,S_i|X_{1}^i,X_{2}^i,Y^{i-1},S^{i-1})\nonumber\\
&=&\frac{1}{n}\sum_{i=1}^{n}H(S_i|X_{2}^i,Y^{i-1},S^{i-1})+H(Y_i|X_{2}^i,Y^{i-1},S^{i})\nonumber\\
&&-\frac{1}{n}\sum_{i=1}^{n}H(S_i|X_{1}^i,X_{2}^i,Y^{i-1},S^{i-1})+H(Y_i|X_{1}^i,X_{2}^i,Y^{i-1},S^{i})\nonumber\\
&\stackrel{(a)}=&\frac{1}{n}\sum_{i=1}^{n}H(S_i|S^{i-1})+H(Y_i|X_{2}^i,Y^{i-1},S^{i})-H(S_i|S^{i-1})-H(Y_i|X_{1,i},X_{2,i},S_{i})\nonumber\\
&=&\frac{1}{n}\sum_{i=1}^{n}H(Y_i|X_{2}^i,Y^{i-1},S^{i})-H(Y_i|X_{1,i},X_{2,i},S_{i}).\nonumber
\end{eqnarray}
Where (a) follows from the fact that the channel is without ISI, and from the fact that the channel's output at time $i$ depends only on the state $S_i$, and the inputs $X_{1,i}$, $X_{2,i}$. We can bound $R_2$ and  $R_1+R_2$ in a similar way. Hence we obtain
\begin{eqnarray}
{\mathcal R}_n&=&\bigcup_{P(x_1^n||s^{n-d_1})P(x_2^n||s^{n-d_2})}\left( \begin{array}{rcl}
R_1 \leq \frac{1}{n}\sum_{i=1}^{n}H(Y_i|X_{2}^i,Y^{i-1},S^{i})-H(Y_i|X_{1,i},X_{2,i},S_{i}),\nonumber\\
R_2 \leq \frac{1}{n}\sum_{i=1}^{n}H(Y_i|X_{1}^i,Y^{i-1},S^{i})-H(Y_i|X_{1,i},X_{2,i},S_{i}),\nonumber\\
R_1+R_2 \leq \frac{1}{n}\sum_{i=1}^{n}H(Y_i|Y^{i-1},S^{i})-H(Y_i|X_{1,i},X_{2,i},S_{i}).\nonumber\\
\end{array}\right).\nonumber
\end{eqnarray}
Now using \cite[Lemma 3]{Permuter}, we have that $P(x_1^n||s^{n-d_1})P(x_2^n||s^{n-d_2})$ determines uniquely
$\Big\{P(x_{1,i}|x_{1}^{i-1},s^{i-d_1})$ $P(x_{2,i}|x_{2}^{i-1},s^{i-d_2})\Big\}_{i=1}^{n} $, hence,
\begin{eqnarray}
{\mathcal R}_n&=&\bigcup_{\{P(x_{1,i}|x_{1}^{i-1},s^{i-d_1})P(x_{2,i}|x_{2}^{i-1},s^{i-d_2})\}_{i=1}^{n}}\left( \begin{array}{rcl}
R_1 \leq \frac{1}{n}\sum_{i=1}^{n}H(Y_i|X_{2}^i,Y^{i-1},S^{i})-H(Y_i|X_{1,i},X_{2,i},S_{i}),\nonumber\\
R_2 \leq \frac{1}{n}\sum_{i=1}^{n}H(Y_i|X_{1}^i,Y^{i-1},S^{i})-H(Y_i|X_{1,i},X_{2,i},S_{i}),\nonumber\\
R_1+R_2 \leq \frac{1}{n}\sum_{i=1}^{n}H(Y_i|Y^{i-1},S^{i})-H(Y_i|X_{1,i},X_{2,i},S_{i}).\nonumber\\
\end{array}\right).\nonumber
\end{eqnarray}
Let us assume that $d_1\geq d_2$, furthermore, we restrict the inputs of the channel by assuming that $P(x_{1,i}|x_{1}^{i-1},s^{i-d_1})=P(x_{1,i}|s_{i-d_1})$, $P(x_{2,i}|x_{2}^{i-1},s^{i-d_2})=P(x_{2,i}|s_{i-d_1}, s_{i-d_2})$. Therefore,
\begin{eqnarray}
{\mathcal R}_n&\supseteq&\bigcup_{\{P(x_{1,i}|s_{i-d_1})P(x_{2,i}|s_{i-d_1},s_{i-d_2})\}_{i=1}^{n}}\left( \begin{array}{rcl}
R_1 \leq \frac{1}{n}\sum_{i=1}^{n}H(Y_i|X_{2}^i,Y^{i-1},S^{i})-H(Y_i|X_{1,i},X_{2,i},S_{i}),\nonumber\\
R_2 \leq \frac{1}{n}\sum_{i=1}^{n}H(Y_i|X_{1}^i,Y^{i-1},S^{i})-H(Y_i|X_{1,i},X_{2,i},S_{i}),\nonumber\\
R_1+R_2 \leq \frac{1}{n}\sum_{i=1}^{n}H(Y_i|Y^{i-1},S^{i})-H(Y_i|X_{1,i},X_{2,i},S_{i}).\nonumber\\
\end{array}\right).\nonumber
\end{eqnarray}
Since we assumed that $P(x_{1,i}|x_{1}^{i-1},s^{i-d_1})=P(x_{1,i}|s_{i-d_1})$, we have the following equalities,
\begin{eqnarray}
P(y_i|x_{2}^i,y^{i-1},s^i)&=&\sum_{x_{1,i}}P(x_{1,i}|x_{2}^i,y^{i-1},s^i)P(y_i|x_{1,i},x_{2}^i,y^{i-1},s^i)\nonumber\\
                            &\stackrel{(a)}=&\sum_{x_{1,i}}P(x_{1,i}|s_i,s_{i-d_1},s_{i-d_2})P(y_i|x_{1,i},x_{2,i},s_i,s_{i-d_1},s_{i-d_2})\nonumber\\
                          &=&P(y_i|x_{2,i},s_i,s_{i-d_1},s_{i-d_2})\label{restriction},
\end{eqnarray}
where (a) follows from the fact that the channel's output at time $i$ depends only on the state $S_i$, and the inputs $X_{1,i}$, $X_{2,i}$, and from the fact that $P(x_{1,i}|x_{2}^i,y^{i-1},s^i)=P(x_{1,i}|s^{i-d_1})=P(x_{1,i}|s_{i-d_1})$. From (\ref{restriction}) we get
\begin{eqnarray}
H(Y_i|X_{2}^i,Y^{i-1},S^{i})&=&H(Y_i|X_{2,i},S_i,S_{i-d_1},S_{i-d_2}).\nonumber
\end{eqnarray}
Similarly,
 \begin{eqnarray}
H(Y_i|X_{1}^i,Y^{i-1},S^{i})&=&H(Y_i|X_{1,i},S_i,S_{i-d_1},S_{i-d_2}). \nonumber\\
H(Y_i|Y^{i-1},S^{i})&=&H(Y_i|S_i,S_{i-d_1},S_{i-d_2}). \nonumber
\end{eqnarray}
Therefore,
\begin{eqnarray}
{\mathcal R}_n&\supseteq&\bigcup_{\{P(x_{1,i}|s_{i-d_1})P(x_{2,i}|s_{i-d_1},s_{i-d_2})\}_{i=1}^{n}}\left( \begin{array}{rcl}
R_1 \leq \frac{1}{n}\sum_{i=1}^{n}I(Y_i;X_{1,i}|X_{2,i},S_i,S_{i-d_1},S_{i-d_2})   ,\nonumber\\
R_2 \leq \frac{1}{n}\sum_{i=1}^{n}I(Y_i;X_{2,i}|X_{1,i},S_i,S_{i-d_1},S_{i-d_2}),\nonumber\\
R_1+R_2 \leq \frac{1}{n}\sum_{i=1}^{n}I(Y_i;X_{1,i},X_{2,i}|S_i,S_{i-d_1},S_{i-d_2}).\nonumber\\
\end{array}\right).\nonumber
\end{eqnarray}
Now, in order to obtain that $\lim_{n \to \infty}{\mathcal R}_n\supseteq{\mathcal R}$,  we need to show that
 \begin{eqnarray}
\mathcal{R}&\subseteq&\lim_{n\to \infty}\bigcup_{\{P(x_{1,i}|s_{i-d_1})P(x_{2,i}|s_{i-d_1},s_{i-d_2})\}_{i=1}^{n}}\left( \begin{array}{rcl}
R_1 \leq \frac{1}{n}\sum_{i=1}^{n}I(Y_i;X_{1,i}|X_{2,i},S_i,S_{i-d_1},S_{i-d_2})   ,\nonumber\\
R_2 \leq \frac{1}{n}\sum_{i=1}^{n}I(Y_i;X_{2,i}|X_{1,i},S_i,S_{i-d_1},S_{i-d_2}),\nonumber\\
R_1+R_2 \leq \frac{1}{n}\sum_{i=1}^{n}I(Y_i;X_{1,i},X_{2,i}|S_i,S_{i-d_1},S_{i-d_2}).\nonumber\\
\end{array}\right). \nonumber
\end{eqnarray}
Consider the region $\mathcal{R}$, an achievable region is uniquely determined for every fixed joint distribution $P(u|\tilde{s}_1)P(x_1|\tilde{s}_1,u)P(x_2|\tilde{s}_1,\tilde{s}_2,u)$. The rate $R_1$ is given by
 \begin{eqnarray}
 R_1&\leq&I(X_{1};Y|X_{2},S,\tilde{S}_1,\tilde{S}_2,U)\nonumber\\
 &=&\sum_{\tilde{s}_1}P(\tilde{s}_1)\sum_{u}P(u|\tilde{s}_1)I(X_{1};Y|X_{2},S,\tilde{S}_1=\tilde{s}_1,\tilde{S}_2,U=u).\label{rateu}
 \end{eqnarray}
 In addition, we have
 \begin{eqnarray}
\frac{1}{n}\sum_{i=1}^{n}I(Y_i;X_{1,i}|X_{2,i},S_i,S_{i-d_1},S_{i-d_2})&=&\frac{1}{n}\sum_{i=1}^{n}\sum_{s_{i-d_1}}P(s_{i-d_1})I(Y_i;X_{1,i}|X_{2,i},S_i,S_{i-d_1}=s_{i-d_1},S_{i-d_2})\nonumber\\
 &\stackrel{(a)}=&\sum_{\tilde{s}_1}P(\tilde{s}_1)\sum_{i=1}^{n}\frac{1}{n}I(Y_i;X_{1,i}|X_{2,i},S_i,S_{i-d_1}=\tilde{s}_1,S_{i-d_2}),\label{raten}
 \end{eqnarray}
 where (a) follows from the fact  that the distribution $P(s_{i-d_1})$ is stationary, therefore  $P(s_{i-d_1})=P(\tilde{s}_1)$.
 For every $U=u$ and $\tilde{S}_1=\tilde{s}_1$,
 if $P(U=u|\tilde{S}_1=\tilde{s}_1)$ is rational, i.e., $k(u,\tilde{s}_1)/n$, where $k(u,\tilde{s}_1) \in  \mathbb{N}$, then we can chose $k(u,\tilde{s}_1)$ terms from $\{P(x_{1,i}|s_{i-d_1})P(x_{2,i}|s_{i-d_1},s_{i-d_2})\}_{i=1}^{n}$ such that $P(x_{1,i}|s_{i-d_1})P(x_{2,i}|s_{i-d_1},s_{i-d_2})=P(x_1|\tilde{s}_1,u)P(x_2|\tilde{s}_1,\tilde{s}_2,u)$.
 If $P(U=u|\tilde{S}_1=\tilde{s}_1)$ is irrational, we can get arbitrarily close to $P(U=u|\tilde{S}_1=\tilde{s}_1)$ by
 using longer and longer block lengths.
 Therefore, using (\ref{rateu}) and (\ref{raten}) we have that when $n \rightarrow \infty $, for  every given joint distribution $P(u|\tilde{s}_1)P(x_1|\tilde{s}_1,u)P(x_2|\tilde{s}_1,\tilde{s}_2,u)$, we can choose  $\{P(x_{1,i}|s_{i-d_1})P(x_{2,i}|s_{i-d_1},s_{i-d_2})\}_{i=1}^{n}$ such that
 \begin{eqnarray}
 \lim_{n\to \infty}\frac{1}{n}\sum_{i=1}^{n}I(Y_i;X_{1,i}|X_{2,i},S_i,S_{i-d_1},S_{i-d_2})&=&I(X_{1};Y|X_{2},S,\tilde{S}_1,\tilde{S}_2,U).\nonumber
  \end{eqnarray}
 By using the same argument for $R_2$ and  for $R_1+R_2$, we get that for  every given joint distribution $P(u|\tilde{s}_1)P(x_1|\tilde{s}_1,u)P(x_2|\tilde{s}_1,\tilde{s}_2,u)$, we can chose  $\{P(x_{1,i}|s_{i-d_1})P(x_{2,i}|s_{i-d_1},s_{i-d_2})\}_{i=1}^{n}$ such that the following equalities hold simultaneously,
 \begin{eqnarray}
 \lim_{n\to \infty}\frac{1}{n}\sum_{i=1}^{n}I(Y_i;X_{1,i}|X_{2,i},S_i,S_{i-d_1},S_{i-d_2})&=&I(X_{1};Y|X_{2},S,\tilde{S}_1,\tilde{S}_2,U),\label{eq1}\\
  \lim_{n\to \infty}\frac{1}{n}\sum_{i=1}^{n}I(Y_i;X_{2,i}|X_{1,i},S_i,S_{i-d_1},S_{i-d_2})&=&I(X_{2};Y|X_{1},S,\tilde{S}_1,\tilde{S}_2,U),\label{eq2}\\
  \lim_{n\to \infty}\frac{1}{n}\sum_{i=1}^{n}I(Y_i;X_{1,i},X_{2,i}|S_i,S_{i-d_1},S_{i-d_2})&=&I(X_{1},X_{2};Y|S,\tilde{S}_1,\tilde{S}_2,U)\label{eq3}.
  \end{eqnarray}
 Using equations (\ref{eq1}), (\ref{eq2}), and (\ref{eq3}), we obtain
 \begin{eqnarray}
 \lim_{n \to \infty}{\mathcal R}_n\supseteq{\mathcal R}.\label{Rnbigger}
  \end{eqnarray}
  In order to complete the proof, we need to show that $\lim_{n \to \infty}{\mathcal R}_n\subseteq{\mathcal R}$.
  We have that,
\begin{eqnarray}
{\mathcal R}_n&=&\bigcup_{\{P(x_{1,i}|x_{1}^{i-1},s^{i-d_1})P(x_{2,i}|x_{2}^{i-1},s^{i-d_2})\}_{i=1}^{n}}\left( \begin{array}{rcl}
R_1 \leq \frac{1}{n}\sum_{i=1}^{n}H(Y_i|X_{2}^i,Y^{i-1},S^{i})-H(Y_i|X_{1,i},X_{2,i},S_{i}),\nonumber\\
R_2 \leq \frac{1}{n}\sum_{i=1}^{n}H(Y_i|X_{1}^i,Y^{i-1},S^{i})-H(Y_i|X_{1,i},X_{2,i},S_{i}),\nonumber\\
R_1+R_2 \leq \frac{1}{n}\sum_{i=1}^{n}H(Y_i|Y^{i-1},S^{i})-H(Y_i|X_{1,i},X_{2,i},S_{i}).\nonumber\\
\end{array}\right)\nonumber
\end{eqnarray}
Consider the rate $R_1$,
\begin{eqnarray}
R_1 &\leq& \frac{1}{n}\sum_{i=1}^{n}H(Y_i|X_{2}^i,Y^{i-1},S^{i})-H(Y_i|X_{1,i},X_{2,i},S_{i})\nonumber\\
    &\leq& \frac{1}{n}\sum_{i=1}^{n}H(Y_i|X_{2,i},S_{i},S_{i-d_2},S^{i-d_1})-H(Y_i|X_{1,i},X_{2,i},S_{i})\nonumber\\
    &=&\frac{1}{n}\sum_{i=1}^{n}I(Y_i;X_{1,i}|X_{2,i},S_{i},S_{i-d_2},S^{i-d_1})\nonumber.
\end{eqnarray}
We can bound $R_2$ and  $R_1+R_2$ in a similar way. Hence we get
\begin{eqnarray}
\mathcal{R}_n &\subseteq& \bigcup_{\{P(x_{1,i}|x_{1}^{i-1},s^{i-d_1})P(x_{2,i}|x_{2}^{i-1},s^{i-d_2})\}_{i=1}^{n}}\left( \begin{array}{rcl}
R_1 \leq \frac{1}{n}\sum_{i=1}^{n}I(Y_i;X_{1,i}|X_{2,i},S_{i},S_{i-d_2},S^{i-d_1}),\\
R_2 \leq \frac{1}{n}\sum_{i=1}^{n}I(Y_i;X_{2,i}|X_{1,i},S_{i},S_{i-d_2},S^{i-d_1}),\\
R_1+R_2 \leq \frac{1}{n}\sum_{i=1}^{n}I(Y_i;X_{1,i},X_{2,i}|S_{i},S_{i-d_2},S^{i-d_1}).\\
\end{array}\right).\label{Rncon}
\end{eqnarray}
 Now, consider the joint distribution $P(s_i,s_{i-d_2},s^{i-d_1},x_{1,i},x_{2,i},y_i)$,
 \begin{eqnarray}
 P(s_i,s_{i-d_2},s^{i-d_1},x_{1,i},x_{2,i},y_i)&=&P(s_i,s_{i-d_2},s^{i-d_1})P(x_{1,i}|s^{i-d_1})P(x_{2,i}|x_{1,i},s^{i-d_1},s_{i-d_2})P(y_i|x_{1,i},x_{2,i},s_i)\nonumber\\
                                            &\stackrel{(a)}=&P(s_i,s_{i-d_2},s^{i-d_1})P(x_{1,i}|s^{i-d_1})P(x_{2,i}|s^{i-d_1},s_{i-d_2})P(y_i|x_{1,i},x_{2,i},s_i),\nonumber
 \end{eqnarray}
 where (a) follows from the fact that,
 \begin{eqnarray}
  P(x_{2,i}|x_{1,i},s^{i-d_1},s_{i-d_2})&=&\sum_{M_2,s_{i-d_1+1}^{i-d_2-1}}P(M_2,s_{i-d_1+1}^{i-d_2-1}|x_{1,i},s^{i-d_1},s_{i-d_2})P(x_{2,i}|x_{1,i},s^{i-d_2},M_2)\nonumber\\
                                        &=&\sum_{M_2,s_{i-d_1+1}^{i-d_2-1}}P(M_2,s_{i-d_1+1}^{i-d_2-1}|s^{i-d_1},s_{i-d_2})P(x_{2,i}|s^{i-d_2},M_2)\nonumber\\
                                        &=&P(x_{2,i}|s^{i-d_1},s_{i-d_2}).\nonumber
 \end{eqnarray}
 Note that $R_1$, $R_2$, and $R_1+R_2$ are uniquely determined by the joint distribution
$\left\{P(s_i,s_{i-d_2},s^{i-d_1},x_{1,i},x_{2,i},y_i)\right\}_{i=1}^n$.
 In the joint distribution $P(s_i,s_{i-d_2},s^{i-d_1},x_{1,i},x_{2,i},y_i)$, we control only $P(x_{1,i}|s^{i-d_1})P(x_{2,i}|s^{i-d_1},s_{i-d_2})$, since
 the distributions  $P(s_i,s_{i-d_2},s^{i-d_1})$ and $P(y_i|x_{1,i},x_{2,i},s_i)$ are determined by the channel transition probability. Hence,
 \begin{eqnarray}
{\mathcal R}_n
\subseteq\bigcup_{W}\left( \begin{array}{rcl}
R_1 \leq \frac{1}{n}\sum_{i=1}^{n}I(Y_i;X_{1,i}|X_{2,i},S_{i},S_{i-d_1},S_{i-d_2},S^{i-d_1-1}),\\
R_2 \leq \frac{1}{n}\sum_{i=1}^{n}I(Y_i;X_{2,i}|X_{1,i},S_{i},S_{i-d_1},S_{i-d_2},S^{i-d_1-1}),\\
R_1+R_2 \leq \frac{1}{n}\sum_{i=1}^{n}I(Y_i;X_{1,i},X_{2,i}|S_{i},S_{i-d_1},S_{i-d_2},S^{i-d_1-1}).\\
\end{array}\right),\nonumber
\end{eqnarray}
where $W\triangleq\left\{P(x_{1,i}|s^{i-d_1})P(x_{2,i}|s^{i-d_1},s_{i-d_2})\right\}_{i=1}^n$.
 In the same way as we did in the proof of the converse ( Section \ref{CONVERSE}, equation  (\ref{Uconverse})), we can rewrite these equations with the new variable $Q$,
  where $Q=i\in \{1,2,...,n\}$ with probability $\frac{1}{n}$.
   Furthermore, we denote $ X_{1}\triangleq X_{1,Q}, X_{2} \triangleq X_{2,Q}, Y \triangleq Y_{Q}, S\triangleq S_{Q}, \tilde{S}_1\triangleq S_{Q-d_1},\tilde{S}_2\triangleq S_{Q-d_2}$, and  $U\triangleq (S^{Q-d_1-1},Q)$. Hence we derive that,
 \begin{eqnarray}
{\mathcal R}_n
&\subseteq&
\bigcup_{P(u|\tilde{s}_1)P(x_{1}|\tilde{s}_1,u)P(x_{2}|\tilde{s}_1,\tilde{s}_2,u)} \left( \begin{array}{rcl}
R_{1}<I(X_{1};Y|X_{2},S,\tilde{S}_1,\tilde{S}_2,U)\\
R_{2}<I(X_{2};Y|X_{1},S,\tilde{S}_1,\tilde{S}_2,U)\\
R_{1}+R_{2}<I(X_{1},X_{2};Y|S,\tilde{S}_1,\tilde{S}_2,U),
\end{array}\right).
\end{eqnarray}
Which completes the alternative proof of Theorem \ref{Capacity region-  MAC with delayed CSI feedback t1}.
\end{proof}
  \section{EXAMPLES}\label{EXAMPLES}
  In this section we apply the general results of Section \ref{Main results} to obtain the capacity region for a finite-state Gaussian MAC, and for the finite-state multiple-access fading channel.
  We derive optimization problems on the power allocation that maximizes  the capacity region for these channels. This power allocation would be the optimal power control policy for maximizing throughput in the presence of feedback delay.
\subsection{Capacity  Region for a Finite State Additive Gaussian MAC }\label{GAUSSIAN }
We now apply Theorem \ref{Capacity region-  MAC with delayed CSI feedback t1} to compute the capacity region of a power-constrained FS additive Gaussian noise (AGN) MAC, and illustrate the effect of the delayed CSI on the capacity region. For a finite state AGN MAC the channel output $Y_i$ at time $i$, given the channel inputs $X_{1,i},X_{2,i}$, is given by
\begin{eqnarray}
  Y_i &=&X_{1,i}+X_{2,i} + N_{S_i} ,
\end{eqnarray}
where $N_{S_i}$ is a zero-mean Gaussian random variable with variance depending on the state $S_i$ of the channel at time $i$. In addition to the channel output $Y_i$  the receiver has accesses to the state  $S_i$. The receiver feeds back the CSI to the transmitters through a noiseless feedback channel.
The CSI from the receiver is received at transmitter $1$ and transmitter $2$ after a time
delays of $d_1 ,d_2$ symbol durations, respectively. The state process is assumed to be Markov with steady state distribution $\pi(s)$ and one step transition matrix $K$. It is clear that the finite state AGN is an FSMC. While the capacity region formula derived in Section \ref{Main results} (Theorem \ref{Capacity region-  MAC with delayed CSI feedback t1}) was for  finite inputs and output alphabets, the result can be generalized to continuous alphabets with inputs constraints.
 First, we apply only the sum rate formula to explicitly  determine the sum rate of the finite state Markov AGN MAC with transmitters power constraints $\mathcal{P}_1$ and $\mathcal{P}_2$.
\begin{equation}\label{}
    R_1+R_2< \max_{p(u|\tilde{s}_1)p(x_1|\tilde{s}_1,u)p(x_2|\tilde{s}_1,\tilde{s}_2,u)}I(X_1,X_2;Y|S,\tilde{S}_1,\tilde{S}_2,U),
\end{equation}
subject to the power constraints,
\begin{eqnarray}
  &&\sum_{\tilde{s}_1}\pi(\tilde{s}_1)\sum_{u}P(u|\tilde{s}_1)E[X_1^2|\tilde{s}_1,u] \leq \mathcal{P}_1 ,\\
  &&\sum_{\tilde{s}_1}\pi(\tilde{s}_1)\sum_{\tilde{s}_2}P(\tilde{s}_2|\tilde{s}_1)\sum_{u}P(u|\tilde{s}_1)E[X_2^2|\tilde{s}_1,\tilde{s}_2,u] \leq \mathcal{P}_2.
  \end{eqnarray}

To compute the maximum sum rate explicitly, we have to first determine the distributions $P(x_1|\tilde{s}_1,u)$ and $P(x_2|\tilde{s}_1,\tilde{s}_2,u)$ for each $\tilde{S}_1$, $\tilde{S}_2$, and $U$.
Suppose $\mathcal{P}_1(\tilde{s}_1,u)$ , $\mathcal{P}_2(\tilde{s}_1,\tilde{s}_2,u)$ is the power allocated to states $(\tilde{s}_1,\tilde{s}_2)$ and $u$. Therefore the sum rate,
\begin{eqnarray}
  I(X_1,X_2;Y|S,\tilde{S}_1,\tilde{S}_2,U) &=& \sum_{\tilde{s}_1}\pi(\tilde{s}_1)\sum_{\tilde{s}_2}P(\tilde{s}_2|\tilde{s}_1)\sum_{s}P(s|\tilde{s}_2)\sum_{u}P(u|\tilde{s}_1)I(X_1,X_2;Y|s,\tilde{s}_1,\tilde{s}_2,u)\nonumber\\
   &\stackrel{(a)}=&\sum_{\tilde{s}_1}\pi(\tilde{s}_1)\sum_{\tilde{s}_2}P(\tilde{s}_2|\tilde{s}_1)\sum_{s}P(s|\tilde{s}_2)\sum_{u}P(u|\tilde{s}_1) \nonumber\\&&\times (h(X_1+X_2+N_s|s,\tilde{s}_1,\tilde{s}_2,u)-h(N_s|s)) \nonumber\\
   &\stackrel{(b)}\leq&\sum_{\tilde{s}_1}\pi(\tilde{s}_1)\sum_{\tilde{s}_2}P(\tilde{s}_2|\tilde{s}_1)\sum_{s}P(s|\tilde{s}_2)\sum_{u}P(u|\tilde{s}_1)\nonumber\\&&\times\frac{1}{2}\log \left(\frac{E[(X_1+X_2+N_s)^2|s,\tilde{s}_1,\tilde{s}_2,u]}{E[N_s^2|s]}  \right)\nonumber\\
   &\stackrel{(c)}=&\sum_{\tilde{s}_1}\pi(\tilde{s}_1)\sum_{\tilde{s}_2}P(\tilde{s}_2|\tilde{s}_1)\sum_{s}P(s|\tilde{s}_2)\sum_{u}P(u|\tilde{s}_1)\nonumber\\&&\times\frac{1}{2}\log \left(1+\frac{\mathcal{P}_1(\tilde{s}_1,u)+\mathcal{P}_2(\tilde{s}_1,\tilde{s}_2,u)}{\sigma_{s}^2}  \right)\nonumber\\
   &\stackrel{(d)}\leq&\frac{1}{2}\sum_{\tilde{s}_1}\pi(\tilde{s}_1)\sum_{\tilde{s}_2}P(\tilde{s}_2|\tilde{s}_1)\sum_{s}P(s|\tilde{s}_2)\log \left(1+\frac{\mathcal{P}_1(\tilde{s}_1)+\mathcal{P}_2(\tilde{s}_1,\tilde{s}_2)}{\sigma_{s}^2}\right),\label{FSM AGN MAX}
\end{eqnarray}
where \\
(a) follows from the fact that $N_s$ is independent of  $\tilde{S}_1,\tilde{S}_2,U$ given $S$.\\
(b) follows from the fact that Gaussian distribution has the largest entropy for a given variance.\\
(c) follows from the fact that $X_1$, $X_2$ are independent of $N_s$ and independent of each other given $S, \tilde{S}_1, \tilde{S}_2$, and $U$. Furthermore, we denote $\mathcal{P}_1(\tilde{s}_1)=E[X_1^2|s,\tilde{s}_1]$, and $\mathcal{P}_2(\tilde{s}_1,\tilde{s}_2,u)=E[X_2^2|\tilde{s}_1,\tilde{s}_2,u]$.   \\
(d) follows from Jensen's inequality. \\
Furthermore, we can achieve (\ref{FSM AGN MAX}) if we choose $X_1(\tilde{s}_1,u)$, to be zero-mean Gaussian with variance $\mathcal{P}_1(\tilde{s}_1)$, and $X_2(\tilde{s}_1,\tilde{s}_1,u)$ to be zero-mean Gaussian with variance $\mathcal{P}_2(\tilde{s}_1,\tilde{s}_2)$, both independent of $N_s$ and independent of each other. We now have the following result, For an FSM AGN MAC with average power constraints $\mathcal{P}_1$ and $\mathcal{P}_2$ and  CSI at the transmitters  with delays $d_1$ and $d_2$,
\begin{eqnarray}
  R_1+R_2 &=& \max_{\mathcal{P}_1(\tilde{s}_1),\mathcal{P}_2(\tilde{s}_1,\tilde{s}_2)} \frac{1}{2}\sum_{\tilde{s}_1}\pi(\tilde{s}_1)\sum_{\tilde{s}_2}P(\tilde{s}_2|\tilde{s}_1)\sum_{s}P(s|\tilde{s}_2)\log \left(1+\frac{\mathcal{P}_1(\tilde{s}_1)+\mathcal{P}_2(\tilde{s}_1,\tilde{s}_2)}{\sigma_{s}^2}\right)\nonumber \\
  &=&\max_{\mathcal{P}_1(\tilde{s}_1),\mathcal{P}_2(\tilde{s}_1,\tilde{s}_2)} \frac{1}{2}\sum_{\tilde{s}_1}\pi(\tilde{s}_1)\sum_{\tilde{s}_2}K^{d_1-d_2}(\tilde{s}_2,\tilde{s}_1)\sum_{s}K^{d_2}(s,\tilde{s}_2) \nonumber\\&&\times\log\left(1+\frac{\mathcal{P}_1(\tilde{s}_1)+\mathcal{P}_2(\tilde{s}_1,\tilde{s}_2)}{\sigma_{s}^2}\right),\label{R1R2maxas}
\end{eqnarray}
subject to the power constraints,
\begin{eqnarray}
  &&\sum_{\tilde{s}_1}\pi(\tilde{s}_1)\mathcal{P}_1(\tilde{s}_1) \leq \mathcal{P}_1 ,\\
  &&\sum_{\tilde{s}_1}\pi(\tilde{s}_1)\sum_{\tilde{s}_2}P(\tilde{s}_2|\tilde{s}_1)\mathcal{P}_2(\tilde{s}_1,\tilde{s}_2) \leq \mathcal{P}_2.
  \end{eqnarray}
  Similarly, we can derive maximization on $R_1$ and $R_2$, for $R_1$:
   \begin{eqnarray}
   R_1=\max_{\mathcal{P}_1(\tilde{s}_1)} \frac{1}{2}\sum_{\tilde{s}_1}\pi(\tilde{s}_1)\sum_{\tilde{s}_2}K^{d_1-d_2}(\tilde{s}_2,\tilde{s}_1)\sum_{s}K^{d_2}(s,\tilde{s}_2) \log\left(1+\frac{\mathcal{P}_1(\tilde{s}_1)}{\sigma_{s}^2}\right),\label{R1max}
    \end{eqnarray}
  subject to the power constraint,
\begin{eqnarray}
\sum_{\tilde{s}_1}\pi(\tilde{s}_1)\mathcal{P}_1(\tilde{s}_1) \leq \mathcal{P}_1 ,
 \end{eqnarray}
and for $R_2$:
\begin{eqnarray}
  \max_{\mathcal{P}_2(\tilde{s}_1,\tilde{s}_2)} \frac{1}{2}\sum_{\tilde{s}_1}\pi(\tilde{s}_1)\sum_{\tilde{s}_2}K^{d_1-d_2}(\tilde{s}_2,\tilde{s}_1)\sum_{s}K^{d_2}(s,\tilde{s}_2) \log\left(1+\frac{\mathcal{P}_2(\tilde{s}_1,\tilde{s}_2)}{\sigma_{s}^2}\right),\label{R2max}
\end{eqnarray}
subject to the power constraint,
\begin{eqnarray}
\sum_{\tilde{s}_1}\pi(\tilde{s}_1)\sum_{\tilde{s}_2}P(\tilde{s}_2|\tilde{s}_1)\mathcal{P}_2(\tilde{s}_1,\tilde{s}_2) \leq \mathcal{P}_2.
\end{eqnarray}
It is important to mention that in the general case the three equations (\ref{R1R2maxas}), (\ref{R1max}), and (\ref{R2max}) do not achieve their maximum in the same distribution, i.e., not in the same power allocation. In the same way we can derive the maximization problem for two special cases. The first case is $d=d_1=d_2$, since the delays are the same we denote $\tilde{S}=\tilde{S}_1=\tilde{S}_2$, hence we have,
\begin{eqnarray}
&& R_1=\max_{\mathcal{P}_1(\tilde{s})} \frac{1}{2}\sum_{\tilde{s}}\pi(\tilde{s})\sum_{s}K^{d}(s,\tilde{s})\log\left(1+\frac{\mathcal{P}_1(\tilde{s})}{\sigma_{s}^2}\right),\\
&& R_2=\max_{\mathcal{P}_2(\tilde{s})} \frac{1}{2}\sum_{\tilde{s}}\pi(\tilde{s})\sum_{s}K^{d}(s,\tilde{s})\log\left(1+\frac{\mathcal{P}_2(\tilde{s})}{\sigma_{s}^2}\right),\\
&&R_1+R_2=\max_{\mathcal{P}_1(\tilde{s}),\mathcal{P}_2(\tilde{s})} \frac{1}{2}\sum_{\tilde{s}}\pi(\tilde{s})\sum_{s}K^{d}(s,\tilde{s})\log\left(1+\frac{\mathcal{P}_1(\tilde{s})+\mathcal{P}_2(\tilde{s})}{\sigma_{s}^2}\right),\label{R1R2maxsy}
\end{eqnarray}
subject to the power constraints,
\begin{eqnarray}
  &&\sum_{\tilde{s}}\pi(\tilde{s})\mathcal{P}_1(\tilde{s}) \leq \mathcal{P}_1 ,\\
  &&\sum_{\tilde{s}}\pi(\tilde{s})\mathcal{P}_2(\tilde{s}) \leq \mathcal{P}_2 .
  \end{eqnarray}
The second case is $d_2\leq d_1=\infty$, let us denote $d=d_2$ and $\tilde{S}=\tilde{S}_2$, therefore we have,
\begin{eqnarray}
&& R_1= \frac{1}{2}\sum_{\tilde{s}}\pi(\tilde{s})\sum_{s}K^{d}(s,\tilde{s})\log\left(1+\frac{\mathcal{P}_1}{\sigma_{s}^2}\right),\\
&& R_2=\max_{\mathcal{P}_2(\tilde{s})} \frac{1}{2}\sum_{\tilde{s}}\pi(\tilde{s})\sum_{s}K^{d}(s,\tilde{s})\log\left(1+\frac{\mathcal{P}_2(\tilde{s})}{\sigma_{s}^2}\right),\\
&&R_1+R_2=\max_{\mathcal{P}_2(\tilde{s})} \frac{1}{2}\sum_{\tilde{s}}\pi(\tilde{s})\sum_{s}K^{d}(s,\tilde{s})\log\left(1+\frac{\mathcal{P}_1+\mathcal{P}_2(\tilde{s})}{\sigma_{s}^2}\right),\label{R1R2only one}
\end{eqnarray}
subject to the power constraints,
\begin{eqnarray}
  \sum_{\tilde{s}}\pi(\tilde{s})\mathcal{P}_2(\tilde{s}) &\leq& \mathcal{P}_2 .
  \end{eqnarray}
Now to gain some intuition on the capacity region, we consider the case when there are only two states. At any given time $i$ the channel is in one of two possible states $G$ or $B$. In the good state $G$, the channel is "good" and the noise variance is $\sigma_G^2$, and in the bad state $B$, the  channel is "bad" and the noise variance is $\sigma_B^2$,
where $\sigma_B^2>\sigma_G^2$. The state process is specified by the transition
probabilities given by
\begin{eqnarray}
  P(G|B)&=&g,\nonumber \\
  P(B|G)&=&b.\nonumber
  \end{eqnarray}
  The state process is illustrated in Fig. \ref{Two-state AGN channel}, the steady state distribution of the Markov chain is given by
  \begin{eqnarray}
  \pi(G)&=&\frac{g}{g+b},\nonumber \\
  \pi(G)&=&\frac{b}{b+g}.\nonumber
  \end{eqnarray}
\begin{figure}[h!]{
  \begin{center}
\psfrag{g}[][][1]{ $g$ }  \psfrag{b}[][][1]{$b$}
\psfrag{B}[][][1]{ $B$ }  \psfrag{G}[][][1]{$G$}
\psfrag{a}[][][1]{\ \ \ \  $1-g$ }  \psfrag{d}[][][1]{$1-b$}
\includegraphics[width=9cm]{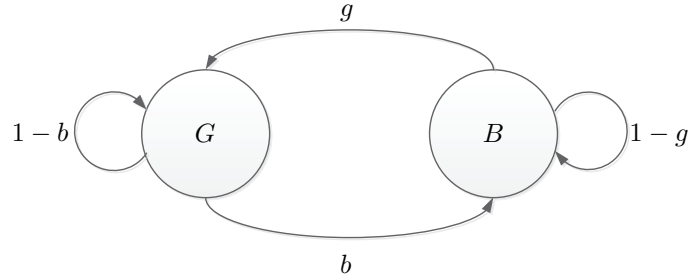}\ \ \ \ \ \ \ \
  \end{center}
\caption{Two-state AGN channel}
 \label{Two-state AGN channel}
}\end{figure}\\
By solving  the optimization problems (\ref{R1R2maxas}), (\ref{R1R2maxsy}), and (\ref{R1R2only one}) for the two state example, we present the maximum sum rate versus delay plot in Fig. \ref{Sum rate vs delay}, which shows  the effect of the CSI delay on the sum rate for  $\mathcal{P}_1=10, \mathcal{P}_2=10, \sigma_G^2=1, \sigma_B^2=100,g=0.1,b=0.1$. The details on solving the optimization problem for the two state example are presented in Appendix \ref{DETERMINATION OF THE TWO-STATE MAC CAPACITY REGION}.
\begin{figure}[h!]{
  \begin{center}
\psfrag{D}[][][0.8]{Delay $d_2$ (symbols)}  \psfrag{S}[][][0.8]{Sum rate (bits/symbol)} \psfrag{T}[][][0.8]{Sum rate vs.  delay $d_2$ ($d_1=\infty$)}\psfrag{Q}[][][0.8]{Delay $d$ (symbols)}\psfrag{E}[][][0.8]{Sum rate vs.  delay $d$ (symmetrical delay $d_1=d_2=d$)} \psfrag{Y}[][][0.8]{Delay $d_1$ (symbols)} \psfrag{U}[][][0.8]{Sum rate vs.  delay $d_1$ (asymmetrical delay $d_2=0$)}
\subfloat[]{\includegraphics[width=8.5cm]{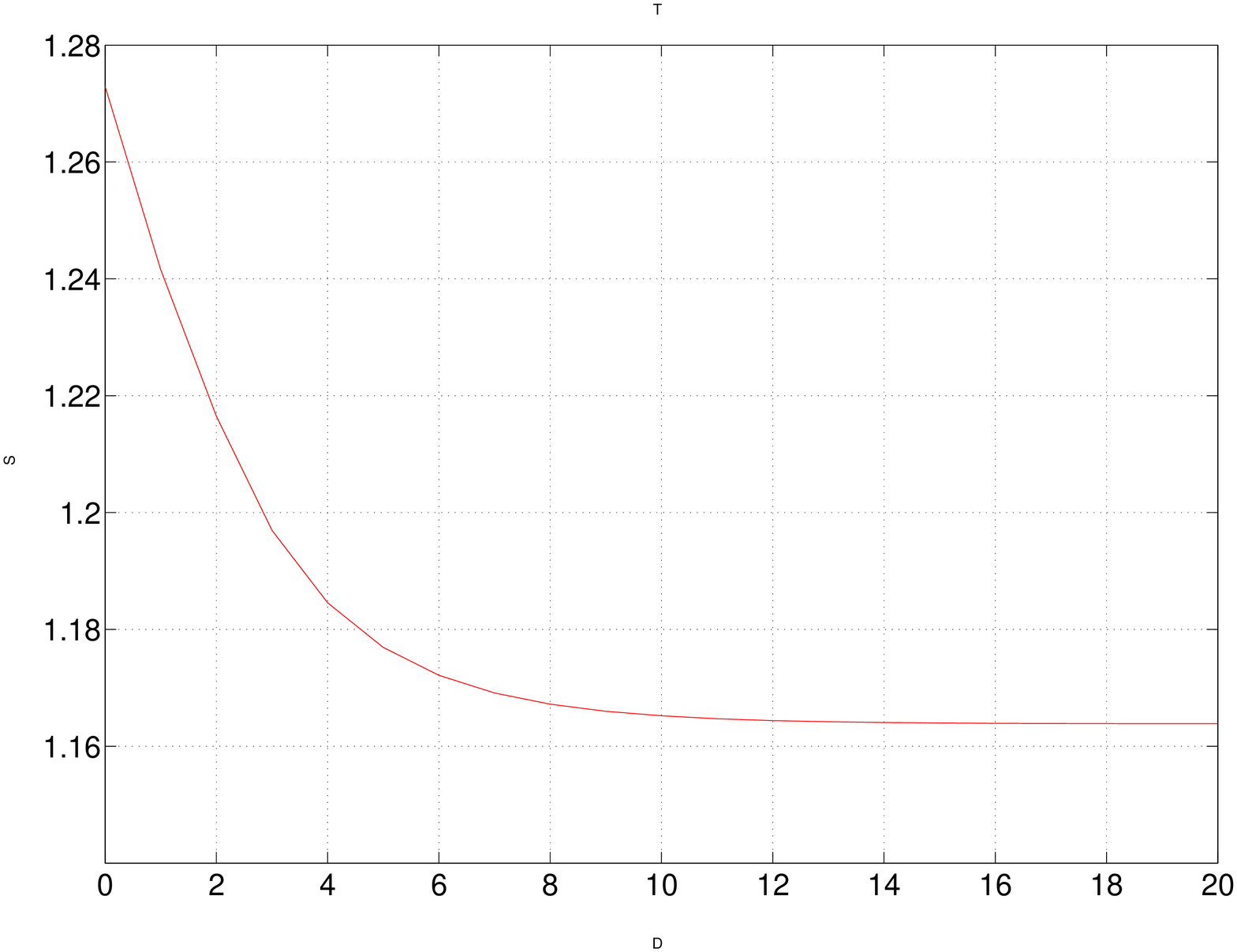}}\ \ \ \ \ \ \ \
\subfloat[]{\includegraphics[width=8.5cm]{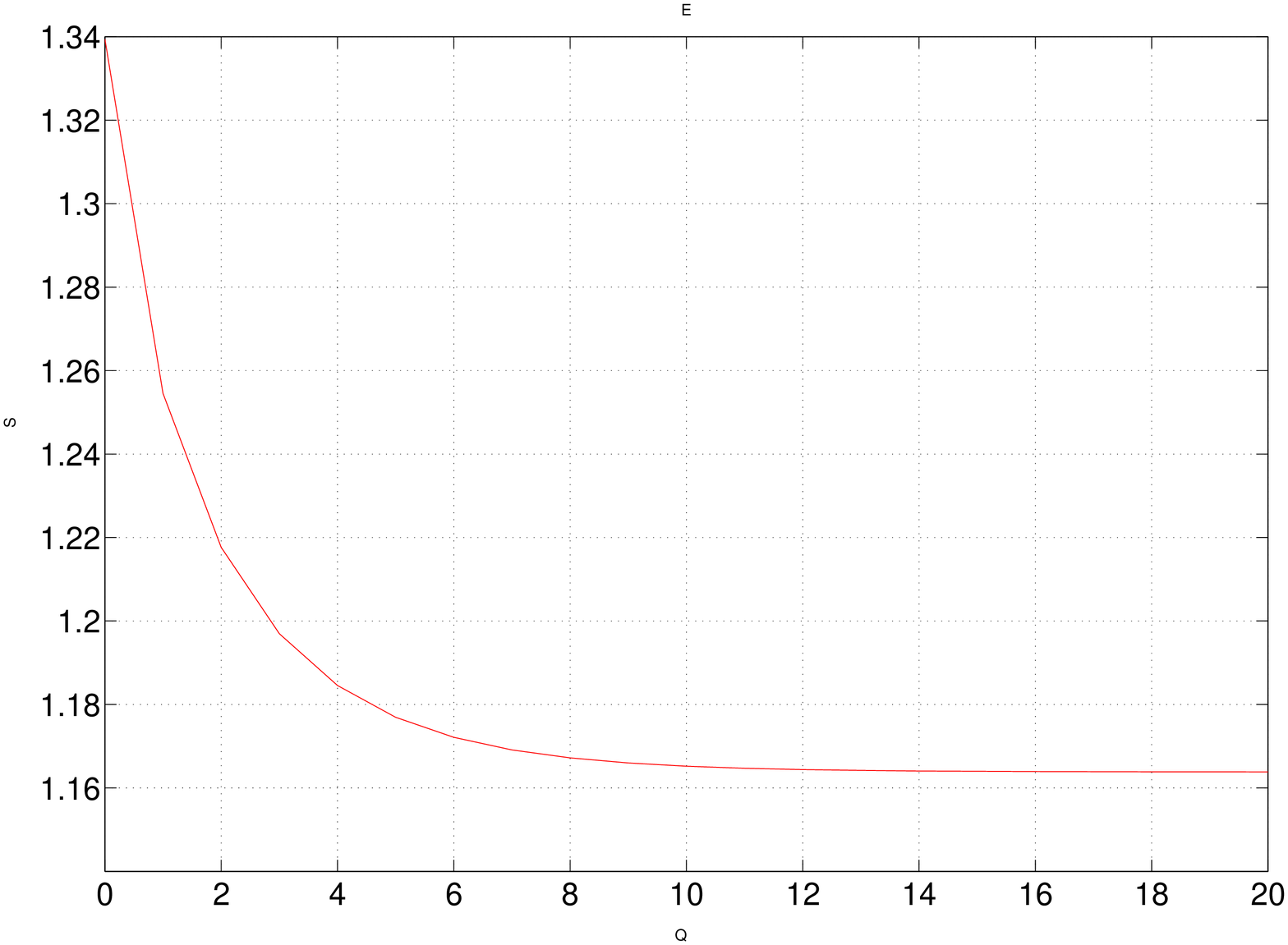}}
\subfloat[]{\includegraphics[width=8.5cm]{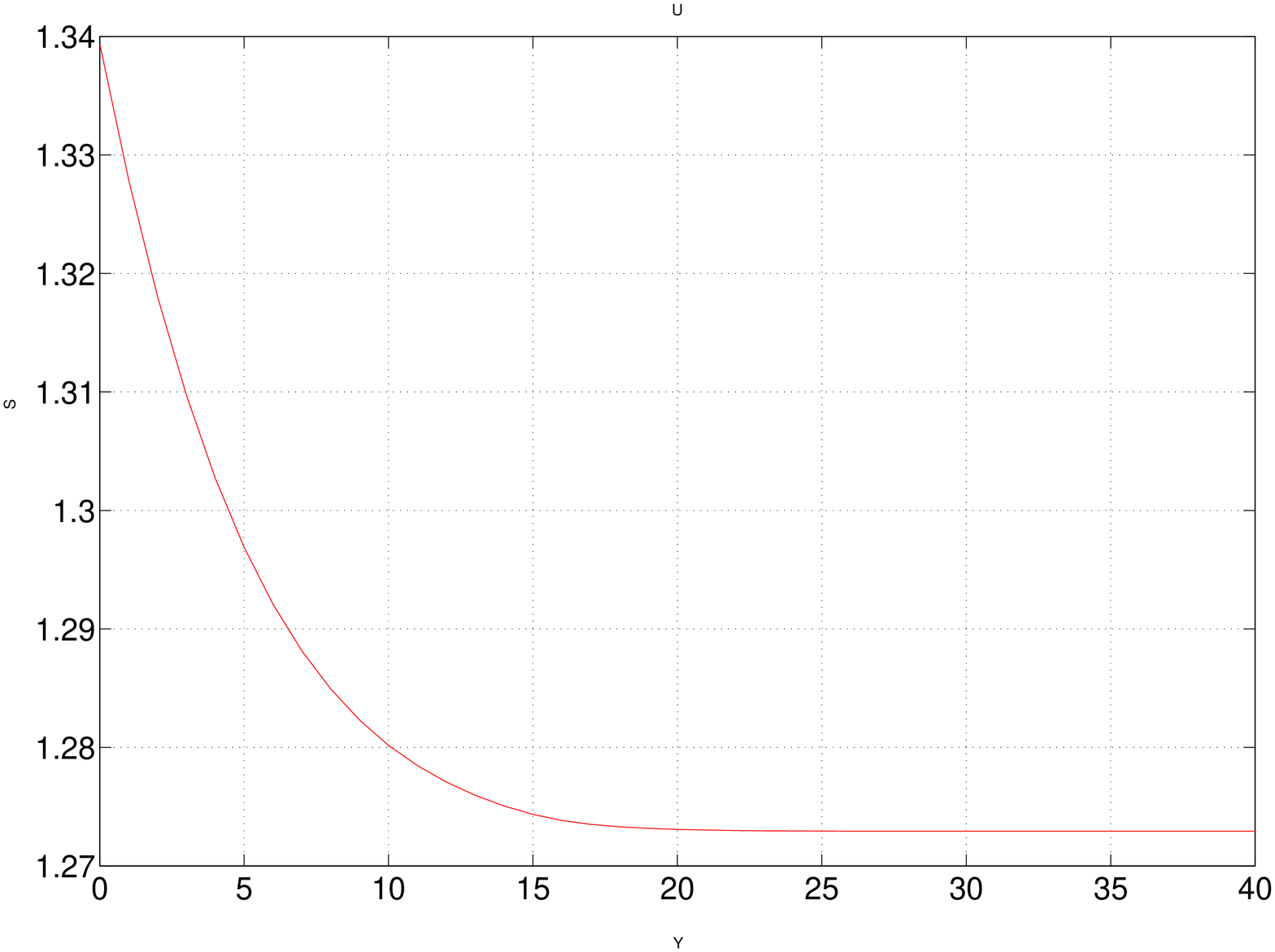}}
  \end{center}
\caption{The sum rate versus delay for the two state channel: (a) $d_2\leq d_1=\infty$,\ \ \  (b) $d_1=d_2$,\ \ \ (c) $0=d_2\leq d_1$.}
 \label{Sum rate vs delay}
}\end{figure}
\newpage
Perhaps it seems  that the  improvement  in the sum rate due to CSI is small, however, we should remember that when we encode large blocks,  this small improvement in the sum rate can be of importance. In addition, this improvement  in the sum rate due to CSI is for the specific example of two states AGN-MAC.
In Fig. \ref{power control policy versus delay}  we present the power control policy versus delay that achieves the maximum sum rates for the three cases.
\begin{figure}[h!]{
  \begin{center}
\psfrag{D}[][][0.8]{Delay $d_2$ (symbols)}  \psfrag{P}[][][0.8]{power} \psfrag{T}[][][0.8]{Power control policy vs.  delay $d_2$ ($d_1=\infty$)}\psfrag{B}[][][0.8]{$\mathcal{P}_2(\tilde{S}_2=B)$}\psfrag{G}[][][0.8]{$\mathcal{P}_2(\tilde{S}_2=G)$}
\psfrag{Q}[][][0.8]{Delay $d$ (symbols)}\psfrag{E}[][][0.8]{Power control policy vs.  delay  $d$ (symmetrical delay $d_1=d_2=d$)} \psfrag{X}[][][0.8]{$\mathcal{P}_1(\tilde{S}=B)=\mathcal{P}_2(\tilde{S}=B)$}
\psfrag{Z}[][][0.8]{$\mathcal{P}_1(\tilde{S}=G)=\mathcal{P}_2(\tilde{S}=G)$}
\psfrag{U}[][][0.8]{Delay $d_1$ (symbols)}\psfrag{M}[][][0.8]{Power control policy vs.  delay  $d_1$ (asymmetrical delay $d_2=0$)}
\psfrag{g}[][][0.6]{$\mathcal{P}_2(\tilde{S}_1=G,\tilde{S}_2=G)$}
\psfrag{j}[][][0.6]{$\mathcal{P}_2(\tilde{S}_1=B,\tilde{S}_2=G)$}
\psfrag{l}[][][0.6]{$\mathcal{P}_2(\tilde{S}_1=B,\tilde{S}_2=B)=\mathcal{P}_2(\tilde{S}_1=G,\tilde{S}_2=B)$}
\psfrag{h}[][][0.6]{$\mathcal{P}_1(\tilde{S}_1=G)$}
\psfrag{k}[][][0.6]{$\mathcal{P}_1(\tilde{S}_1=B)$}
\subfloat[]{\includegraphics[width=9cm]{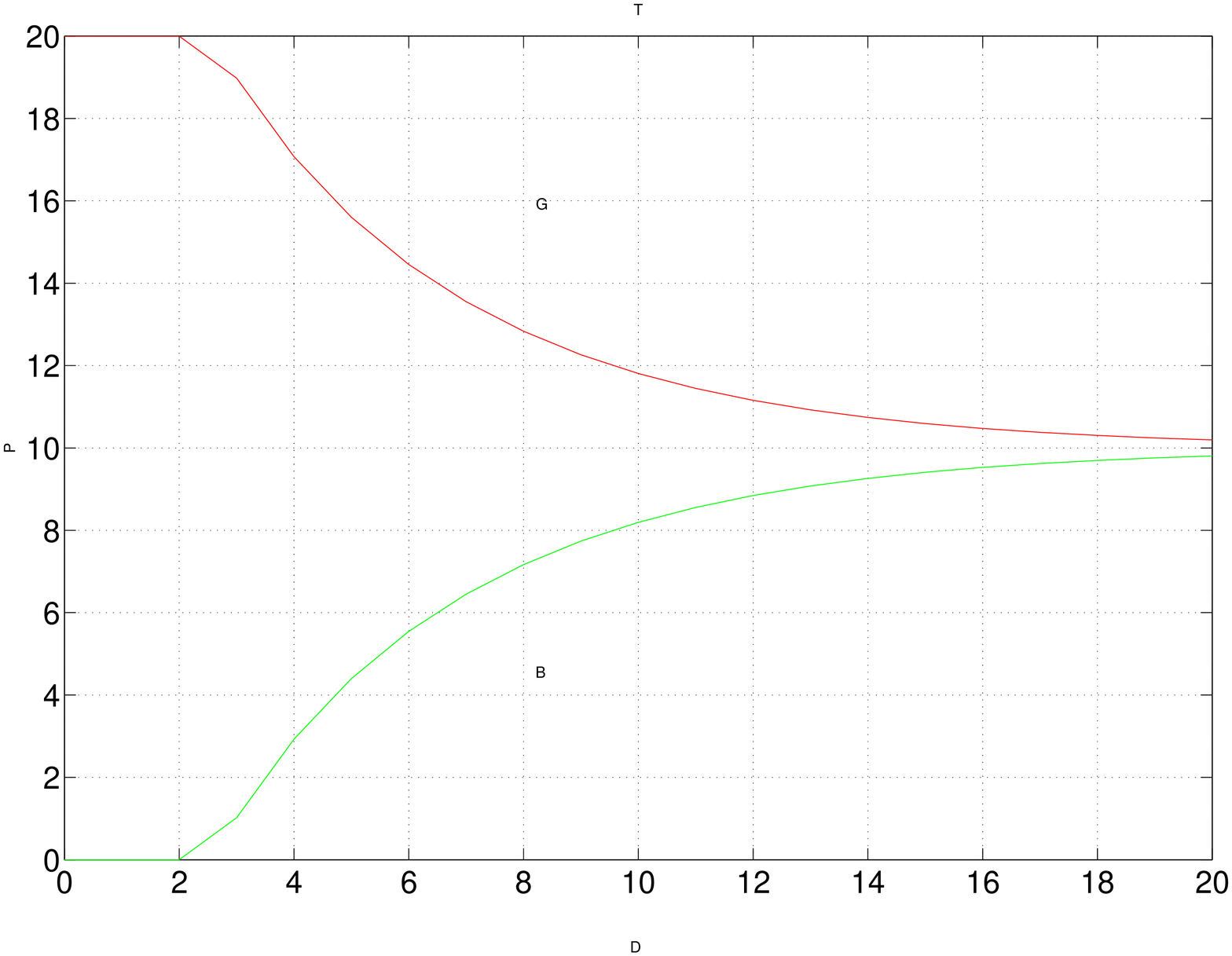}}\ \ \ \ \ \ \ \
\subfloat[]{\includegraphics[width=9cm]{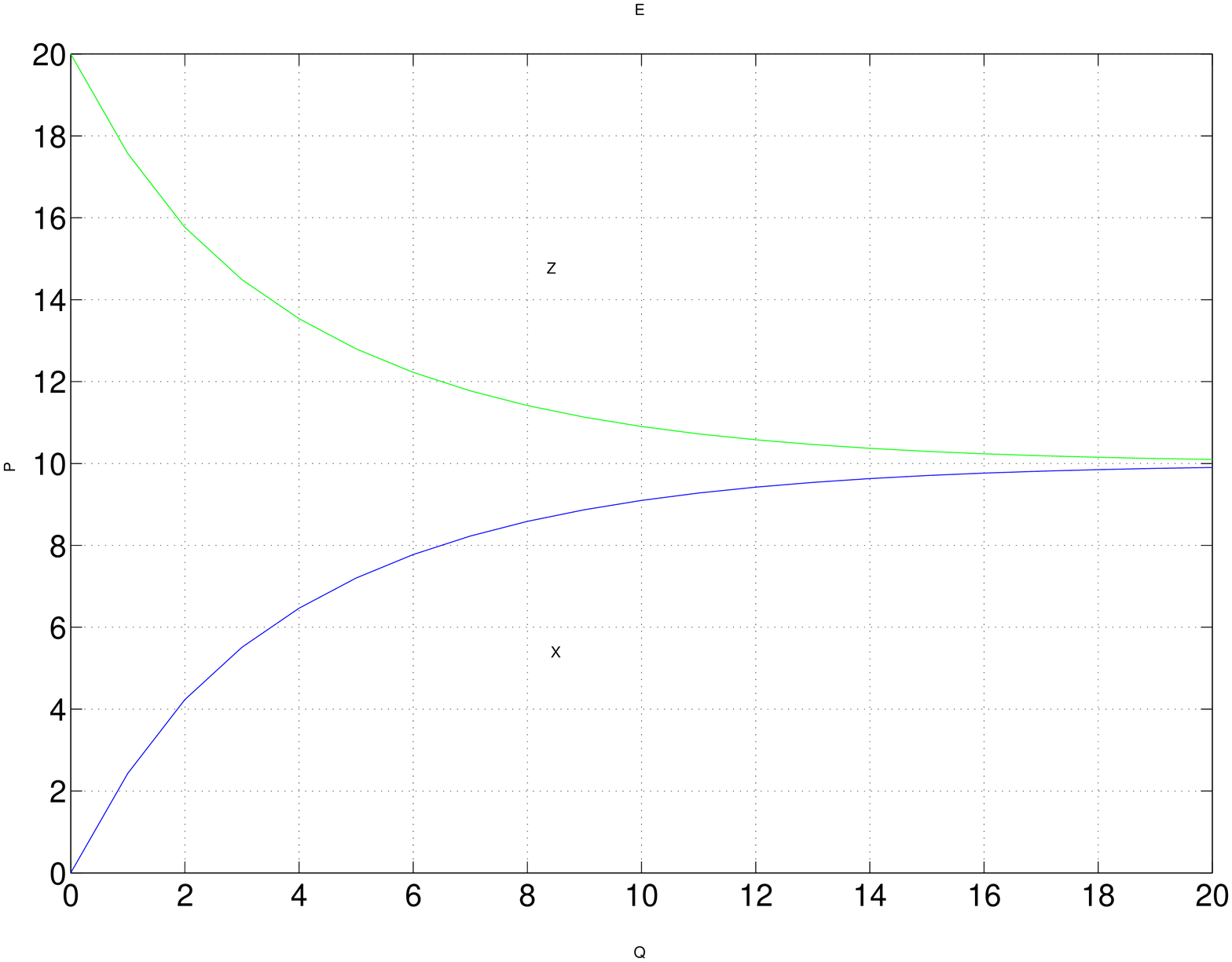}}
\subfloat[]{\includegraphics[width=9cm]{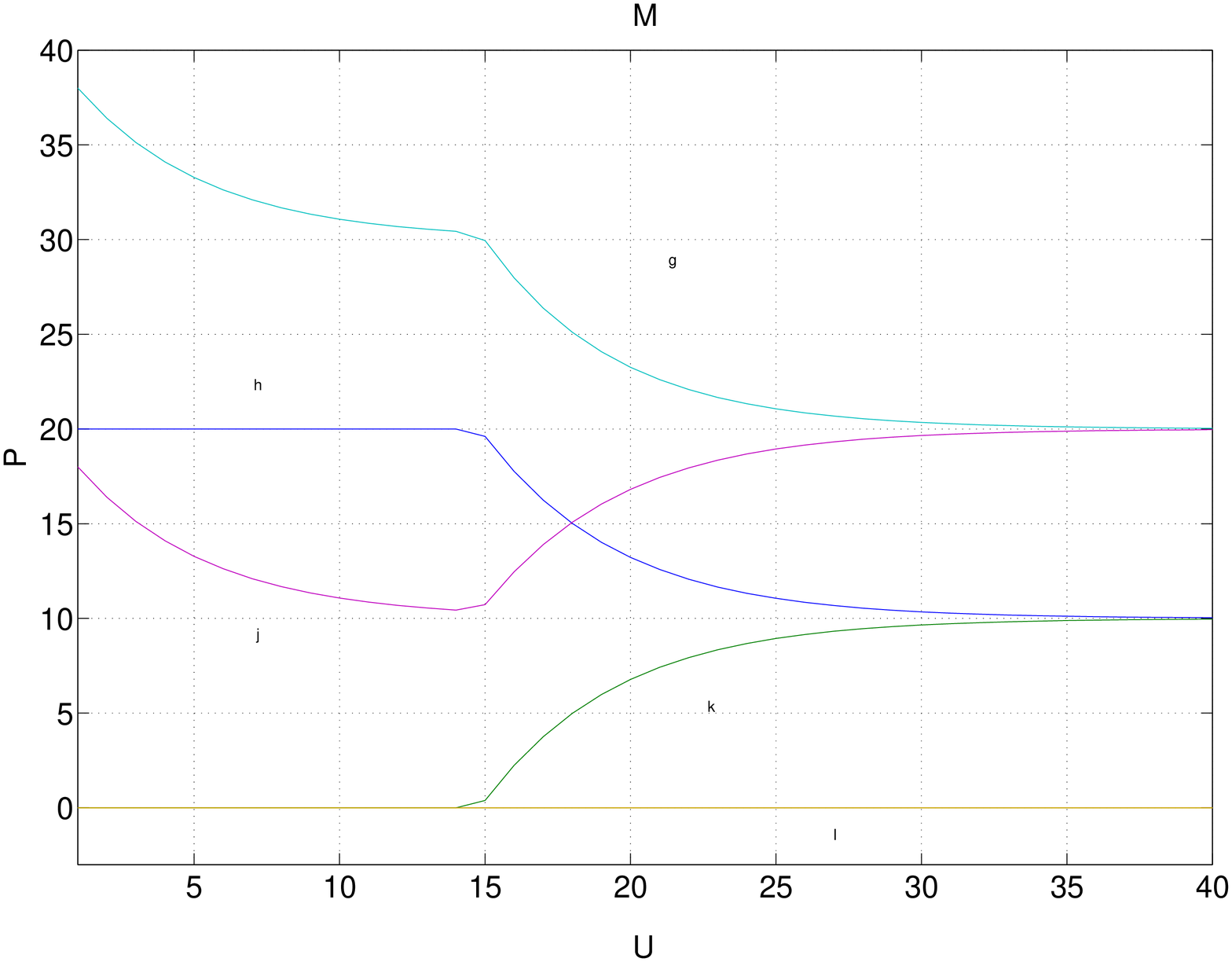}}
  \end{center}
\caption{The power control policy versus delay that achieves the maximum sum rate: (a) $d_2\leq d_1=\infty$,\ \ \  (b) $d_1=d_2$,\ \ \ (c) $0=d_2\leq d_1$.}
 \label{power control policy versus delay}
}\end{figure}
\newpage
Now, we present the capacity  rate region for the two states AGN-MAC in the asymmetrical case $d_1\geq d_2$ by solving numerically the following optimization problem for different values of $\alpha$,
\begin{eqnarray}
  &&\max_{R_1,R_2} \alpha R_1+R_2 ,\label{max_using_cvx}
\end{eqnarray}
subject to the constraints,
\begin{eqnarray}
  &&R_1\leq \frac{1}{2}\sum_{\tilde{s}_1}\pi(\tilde{s}_1)\sum_{\tilde{s}_2}K^{d_1-d_2}(\tilde{s}_2,\tilde{s}_1)\sum_{s}K^{d_2}(s,\tilde{s}_2) \log\left(1+\frac{\mathcal{P}_1(\tilde{s}_1)}{\sigma_{s}^2}\right),\\
   &&R_2\leq \frac{1}{2}\sum_{\tilde{s}_1}\pi(\tilde{s}_1)\sum_{\tilde{s}_2}K^{d_1-d_2}(\tilde{s}_2,\tilde{s}_1)\sum_{s}K^{d_2}(s,\tilde{s}_2) \log\left(1+\frac{\mathcal{P}_2(\tilde{s}_1,\tilde{s}_2)}{\sigma_{s}^2}\right),\\
   &&R_1+R_2\leq  \frac{1}{2}\sum_{\tilde{s}_1}\pi(\tilde{s}_1)\sum_{\tilde{s}_2}K^{d_1-d_2}(\tilde{s}_2,\tilde{s}_1)\sum_{s}K^{d_2}(s,\tilde{s}_2) \log\left(1+\frac{\mathcal{P}_1(\tilde{s}_1)+\mathcal{P}_2(\tilde{s}_1,\tilde{s}_2)}{\sigma_{s}^2}\right),\\
   &&\sum_{\tilde{s}_1}\pi(\tilde{s}_1)\mathcal{P}_1(\tilde{s}_1) \leq \mathcal{P}_1 ,\\
  &&\sum_{\tilde{s}_1}\pi(\tilde{s}_1)\sum_{\tilde{s}_2}P(\tilde{s}_2|\tilde{s}_1)\mathcal{P}_2(\tilde{s}_1,\tilde{s}_2) \leq \mathcal{P}_2.
\end{eqnarray}
In order to solve the optimization problem (\ref{max_using_cvx}) we used \texttt{CVX}, a package for specifying and solving convex optimization problems \cite{cvx}.
The capacity  rate region for $d_2=0$ and different values of $d_1$ are presented in Fig. \ref{Sum rate region asymmetrical}.
\begin{figure}[h!]{
  \begin{center}
\psfrag{A}[][][1]{ $R_2$ }  \psfrag{B}[][][1]{$R_1$}
\psfrag{T}[][][1]{ Capacity  rate region ($d_2=0$) }
\psfrag{Q}[][][1]{\ \ \ \ \ $d_1=0$}
\psfrag{W}[][][1]{\ \ \  \ \ \  \ \ \ \ \ \  \ $d_1=2\longrightarrow$}
\psfrag{E}[][][1]{ \ \ \  \ \ \  \ \ \  \ \ \  \ \ \   \ \ \ \ \  $d_1=4\longrightarrow$}
\psfrag{R}[][][1]{\ \ \ \ \  \ \ \  \ \ \  \ \ \  \ \ \  \ \ \ $d_1=100\longrightarrow$}
\includegraphics[width=12cm]{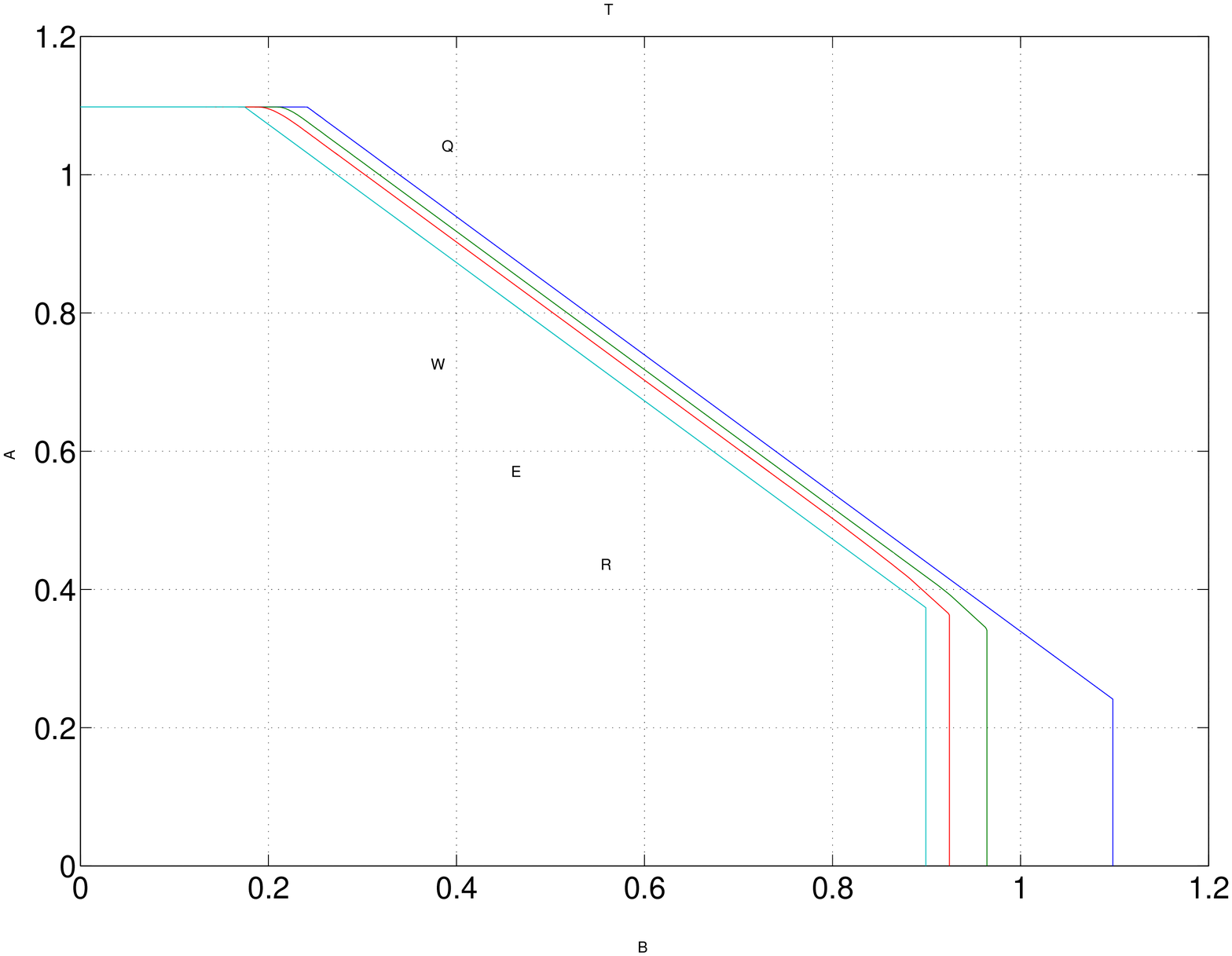}\ \ \ \ \ \ \ \
  \end{center}
\caption{Capacity  rate region for the two states AGN-MAC - asymmetrical case $d_2=0$.}
 \label{Sum rate region asymmetrical}
}\end{figure}\\
Similarly, we solve the optimization problem  for the  symmetrical case $d_1=d_2$, and for the case that transmitter $1$ does not have any CSI, i.e., $d_2<d_1=\infty$. The rate regions are illustrated in Fig. \ref{Sum rate region d_1=d_2}, and Fig. \ref{Sum rate region for the two states AGN-MAC-Transmitter $1$ doesn't have the CSI}, respectively.
\begin{figure}[h!]{
  \begin{center}
\psfrag{A}[][][1]{ $R_2$ }  \psfrag{B}[][][1]{$R_1$}
\psfrag{T}[][][1]{ Capacity rate region ($d_1=d_2$) }
\psfrag{Q}[][][1]{ \ \ \  $d=0$}
\psfrag{W}[][][1]{\ \ \  \ $d=2$}
\psfrag{E}[][][1]{\ \ \ \ $d=4\longrightarrow$}
\psfrag{R}[][][1]{\ \ \ \ \ \ $d=100\rightarrow$}
\includegraphics[width=12cm]{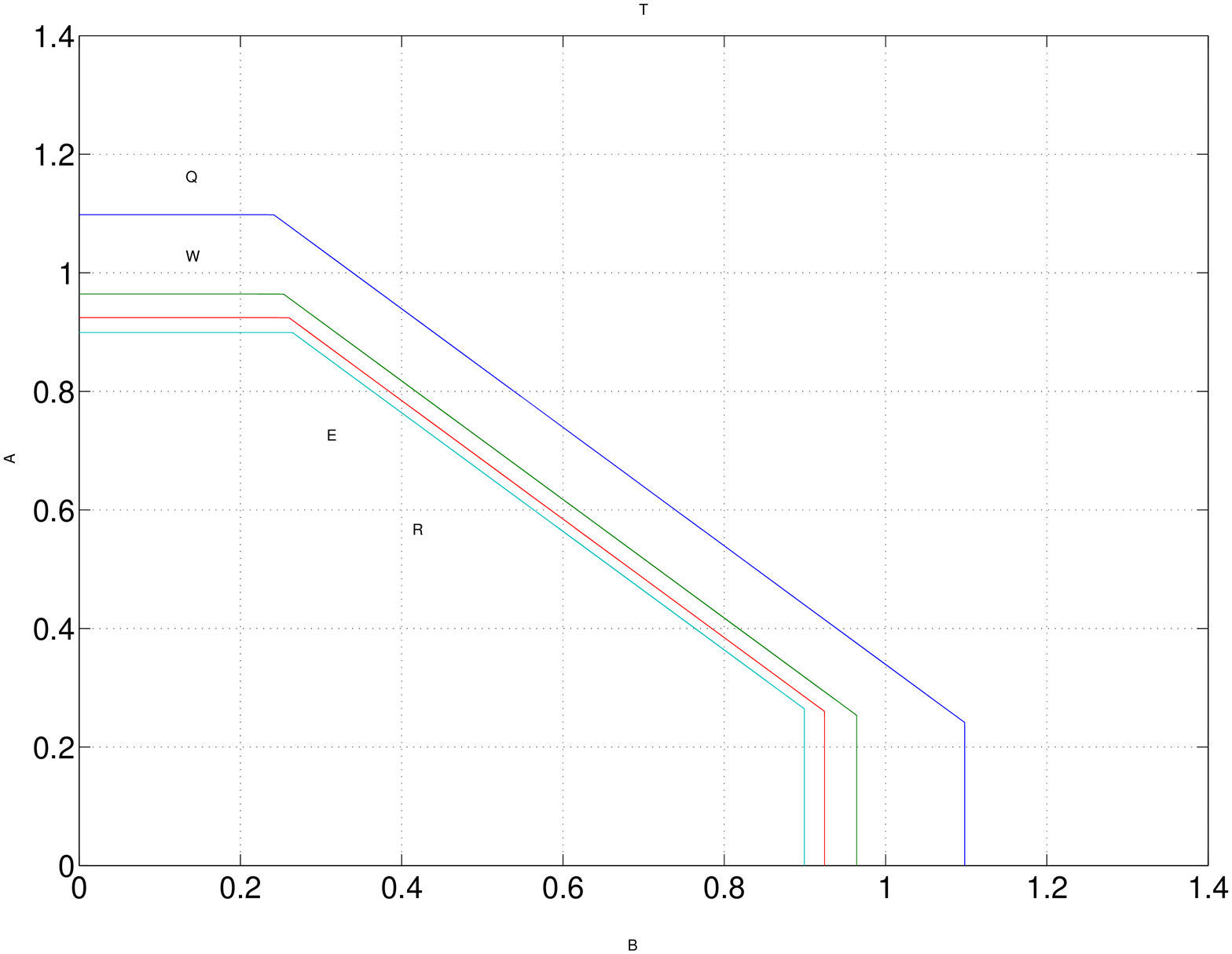}\ \ \ \ \ \ \ \
  \end{center}
\caption{Capacity  rate region for the two states AGN-MAC - symmetrical case $d=d_1=d_2$.}
 \label{Sum rate region d_1=d_2}
}\end{figure}\\
\begin{figure}[h!]{
  \begin{center}
\psfrag{A}[][][1]{ $R_2$ }  \psfrag{B}[][][1]{$R_1$}
\psfrag{T}[][][1]{ Capacity  rate region ($d_2\leq d_1=\infty$) }
\psfrag{Q}[][][1]{\ \ \ \ \ $d_2=0$}
\psfrag{W}[][][1]{\ \ \ \ \  $d_2=2$}
\psfrag{E}[][][1]{\ \ \ \ \ \ \ \ \ \ $d_2=4\longrightarrow$}
\psfrag{R}[][][1]{\ \ \ \ \ \ \ \ \ \ \ $d_2=100\rightarrow$}
\includegraphics[width=12cm]{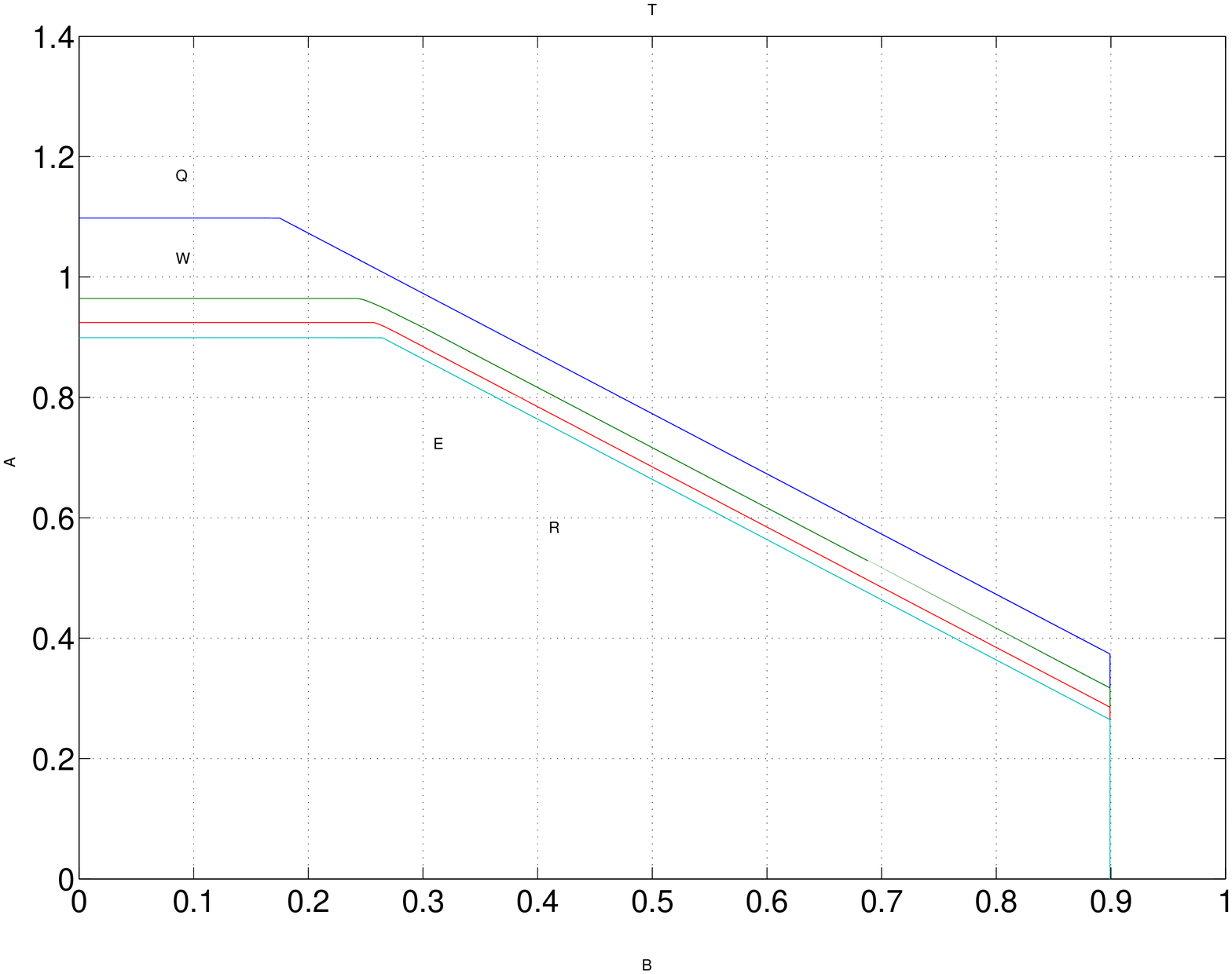}\ \ \ \ \ \ \ \
  \end{center}
\caption{Capacity  rate region for the two states AGN-MAC - Transmitter $1$ does not have the CSI $d_2\leq d_1=\infty$.}
 \label{Sum rate region for the two states AGN-MAC-Transmitter $1$ doesn't have the CSI}
}\end{figure}\\
\subsection{Capacity  Region for a Finite State Multiple-Access Fading Channel}\label{FADING_CHANNEL}
We  apply Theorem \ref{Capacity region-  MAC with delayed CSI feedback t1} to compute the capacity region of a power constrained FS Multiple-Access fading  channel, and illustrate the effect of the delayed CSI on the capacity region.
Consider the discrete-time multiple-access Gaussian channel,
\begin{eqnarray}
  Y_i = h_1(s_i)X_{1,i} + h_2(s_i)X_{2,i} + N_{S_i},
\end{eqnarray}
where $X_{1,i},X_{2,i}$ are the transmitted waveform, and $h_1(s_i),h_2(s_i)$ are the fading process of the users. The terms $h_1(s_i),h_2(s_i)$ are deterministic functions of $s_i$. The noise
$N_{S_i}$ is a zero-mean Gaussian random variable with variance depending on the state of the channel at time $i$.
Furthermore,
the users are subject to the average transmitter power constraints of $\mathcal{P}_1$, and $\mathcal{P}_2$.
The state process is assumed to be Markov with steady state distribution $\pi(s)$ and one step transition matrix $K$,
as described in Section \ref{s_preliminary}.
The FS Multiple-Access fading channel is illustrated in Fig. \ref{fading_channel}.
\begin{figure}[h!]{
  \begin{center}
\psfrag{a}[][][1]{ $X_1$ }  \psfrag{b}[][][1]{$X_2$}
\psfrag{c}[][][1]{\ \ \ \ $h_1(s)$ }  \psfrag{d}[][][1]{\ \ \ \ $h_2(s)$}
\psfrag{e}[][][1]{$Y$ }
\psfrag{f}[][][1]{\ \ $N_s$ }
\includegraphics[width=6cm]{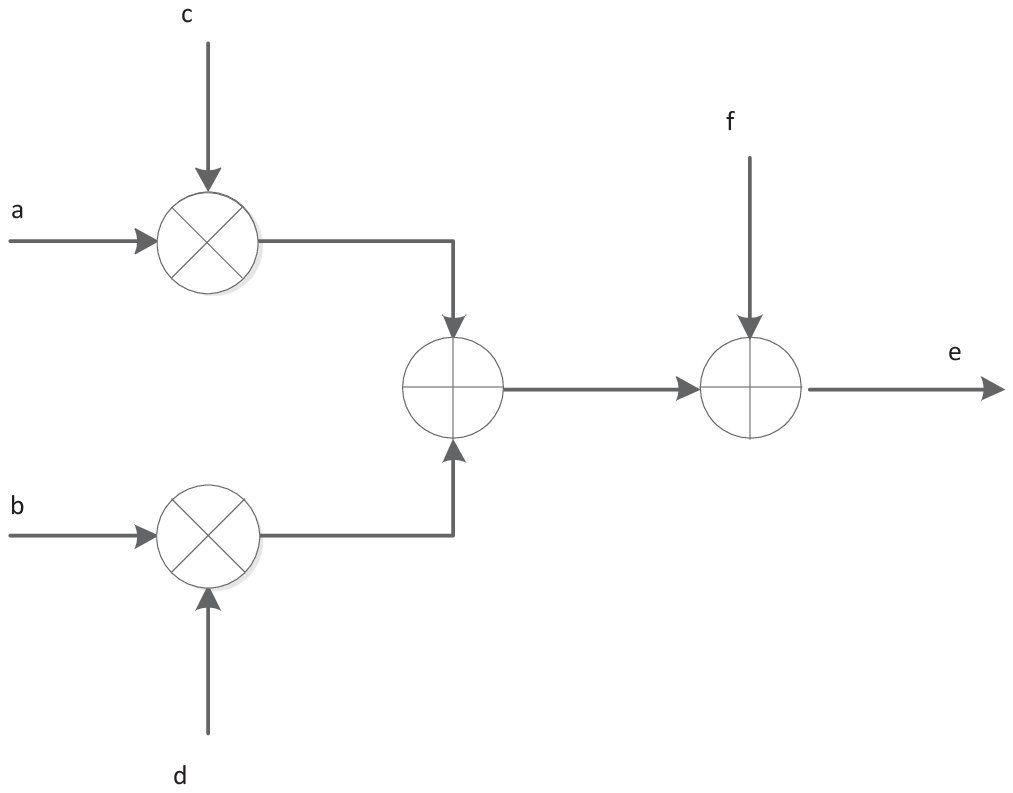}\ \ \ \ \ \ \ \
  \end{center}
\caption{The fading channel.}
 \label{fading_channel}
}\end{figure}
We apply the capacity region formula to explicitly determine the capacity region of the multiple-access Gaussian fading channel with transmitters power constraints $\mathcal{P}_1$ and $\mathcal{P}_2$.
 In a similar way to the FSM Additive Gaussian MAC, it can be shown that the capacity achieving distributions are
 $X_1(\tilde{s}_1,u)$  zero-mean Gaussian with variance $\mathcal{P}_1(\tilde{s}_1)$, and $X_2(\tilde{s}_1,\tilde{s}_1,u)$ zero-mean Gaussian with variance $\mathcal{P}_2(\tilde{s}_1,\tilde{s}_2)$, both independent of $N_s$ and independent of each other. We derive the following optimization problem,
 \begin{eqnarray}
   R_1&=&\max_{\mathcal{P}_1(\tilde{s}_1)} \frac{1}{2}\sum_{\tilde{s}_1}\pi(\tilde{s}_1)\sum_{\tilde{s}_2}K^{d_1-d_2}(\tilde{s}_2,\tilde{s}_1)\sum_{s}K^{d_2}(s,\tilde{s}_2) \log\left(1+\frac{h_1(s)^{2}\mathcal{P}_1(\tilde{s}_1)}{\sigma_{s}^2}\right),\label{R1maxfading}\\
   R_2&=&\max_{\mathcal{P}_2(\tilde{s}_1,\tilde{s}_2)} \frac{1}{2}\sum_{\tilde{s}_1}\pi(\tilde{s}_1)\sum_{\tilde{s}_2}K^{d_1-d_2}(\tilde{s}_2,\tilde{s}_1)\sum_{s}K^{d_2}(s,\tilde{s}_2) \log\left(1+\frac{h_2(s)^2\mathcal{P}_2(\tilde{s}_1,\tilde{s}_2)}{\sigma_{s}^2}\right),\label{R2maxfading}\\
   R_1+R_2 &=&\max_{\mathcal{P}_1(\tilde{s}_1),\mathcal{P}_2(\tilde{s}_1,\tilde{s}_2)} \frac{1}{2}\sum_{\tilde{s}_1}\pi(\tilde{s}_1)\sum_{\tilde{s}_2}K^{d_1-d_2}(\tilde{s}_2,\tilde{s}_1)\sum_{s}K^{d_2}(s,\tilde{s}_2) \nonumber\\&&\times\log\left(1+\frac{h_1(s)^{2}\mathcal{P}_1(\tilde{s}_1)+h_2(s)^{2}\mathcal{P}_2(\tilde{s}_1,\tilde{s}_2)}{\sigma_{s}^2}\right),\label{R1R2maxasfading}
\end{eqnarray}
subject to the power constraints,
\begin{eqnarray}
  &&\sum_{\tilde{s}_1}\pi(\tilde{s}_1)\mathcal{P}_1(\tilde{s}_1) \leq \mathcal{P}_1 ,\\
  &&\sum_{\tilde{s}_1}\pi(\tilde{s}_1)\sum_{\tilde{s}_2}P(\tilde{s}_2|\tilde{s}_1)\mathcal{P}_2(\tilde{s}_1,\tilde{s}_2) \leq \mathcal{P}_2.
  \end{eqnarray}
In the same way, we can derive the optimization  problem for
the  symmetrical case $d_1=d_2$, and for the case that transmitter $1$ does not have any CSI, i.e., $d_2<d_1=\infty$.
Let us solve the optimization  problems for the following FSM multiple-access fading channel examples:
\subsubsection{Example 1 {\it{(AGN switch channel)}}}
Consider the discrete-time multiple-access Gaussian two state switch channel as described in Fig. \ref{fading_channel_switch}.
We solve the optimization problem: $\max (\alpha R_1+R_2)$, for different values of $\alpha$ in the same way we did in the FS additive Gaussian noise (AGN) MAC example. In Fig. \ref{Sum rate region asymmetrical_fading}, \ref{Sum rate region d_1=d_2_fading}, and \ref{Sum rate region for the two states_switch-Transmitter $1$ doesn't have the CSI} we present the capacity rate region for $\mathcal{P}_1=10$, $\mathcal{P}_2=10$, $\sigma_G^2=1$, $\sigma_B^2=10$, $g=0.1$, $b=0.1$, $h_1(G)=1$, $h_1(B)=0$, $h_2(G)=0,h_2(B)=1$, in the following cases: asymmetrical, symmetrical, and the case that transmitter $1$ does not have any CSI.
\begin{figure}[h!]{
  \begin{center}
\psfrag{a}[][][1]{ $X_1$ }  \psfrag{b}[][][1]{$X_2$}
\psfrag{c}[][][1]{\ \ \ \ $N_G$ }  \psfrag{d}[][][1]{\ \ \ \ $N_B$}
\psfrag{e}[][][1]{$Y$ }
\psfrag{G}[][][1]{$S=G$}
\psfrag{B}[][][1]{$S=B$}
\includegraphics[width=6cm]{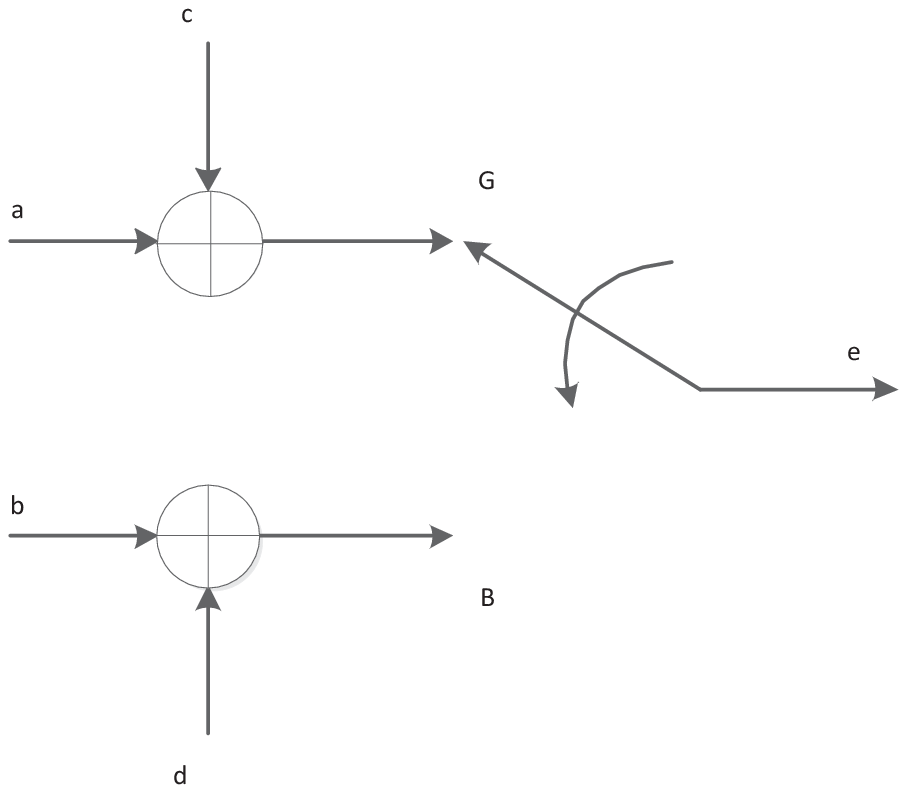}\ \ \ \ \ \ \ \
  \end{center}
\caption{The channel behaves like a switch, at any given time $i$ the channel is in one of two possible states $G$ or $B$, where $\sigma_B^2>\sigma_G^2$. The state process is illustrated in Fig. \ref{Two-state AGN channel}.}
 \label{fading_channel_switch}
}\end{figure}


\begin{figure}[ht!]{
  \begin{center}
\psfrag{A}[][][1]{ $R_2$ }  \psfrag{B}[][][1]{$R_1$}
\psfrag{T}[][][1]{ Capacity  rate region ($d_2=0$) }
\psfrag{Q}[][][1]{\ \ \ \ \ \  \ \ $d_1=0$}
\psfrag{W}[][][1]{\ \ \ \ \ \ $d_1=2$}
\psfrag{E}[][][1]{\ \  $d_1=4\longrightarrow$}
\psfrag{R}[][][1]{  \ \ $d_1=100\rightarrow$}
\includegraphics[width=12cm]{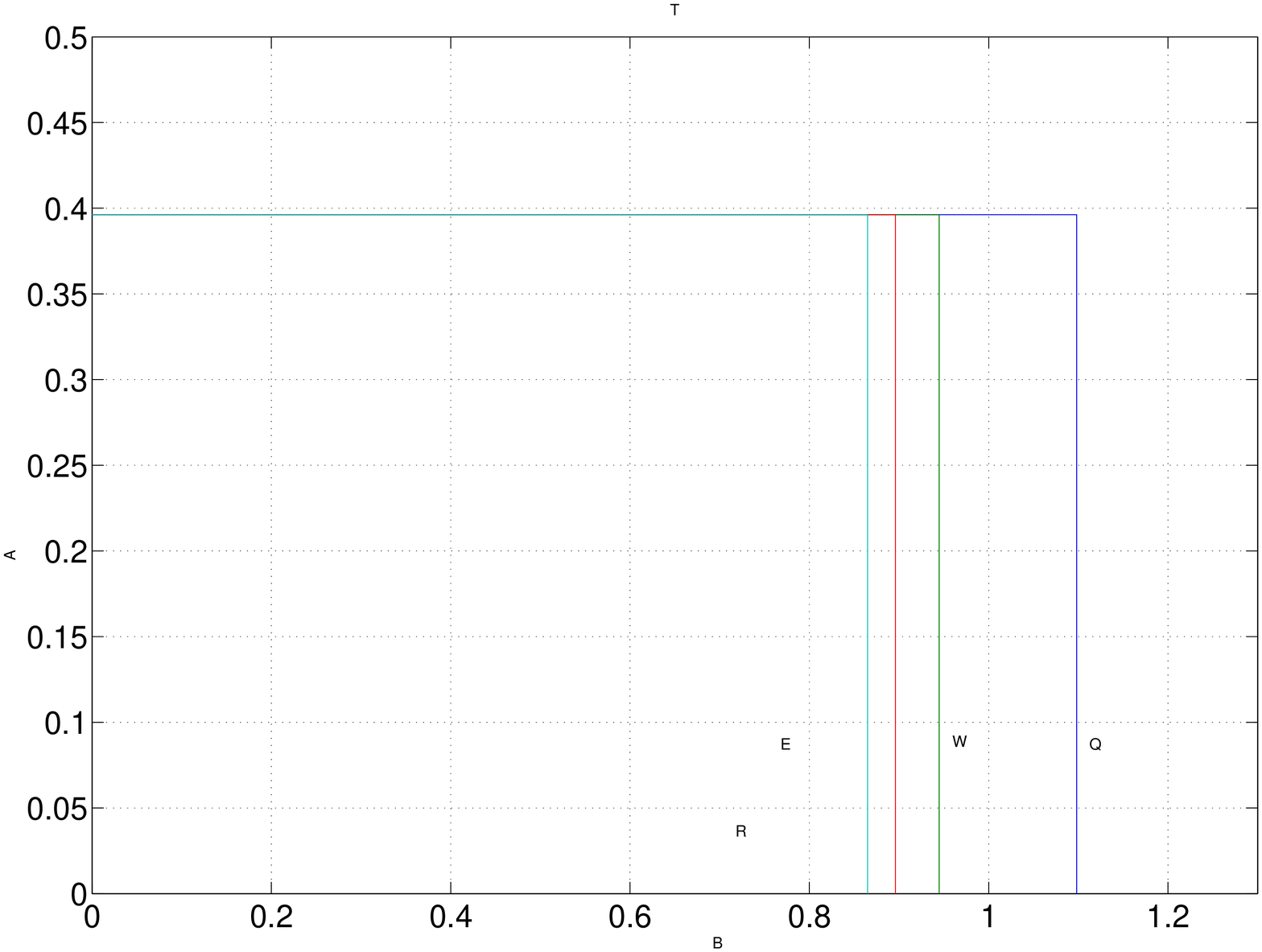}\ \ \ \ \ \ \ \
  \end{center}
\caption{Capacity  rate region for the two states switch channel - asymmetrical case $d_2=0$.}
 \label{Sum rate region asymmetrical_fading}
}\end{figure}
\begin{figure}[ht!]{
  \begin{center}
\psfrag{A}[][][1]{ $R_2$ }  \psfrag{B}[][][1]{$R_1$}
\psfrag{T}[][][1]{ Capacity  rate region ($d_1=d_2$) }
\psfrag{Q}[][][1]{ \ \ \  $d=0$}
\psfrag{W}[][][1]{\ \ \  $d=2$}
\psfrag{E}[][][1]{\ \ \ \ \ \ \ \ $d=4\longrightarrow$}
\psfrag{R}[][][1]{ \ \ $d=100\rightarrow$}
\includegraphics[width=12cm]{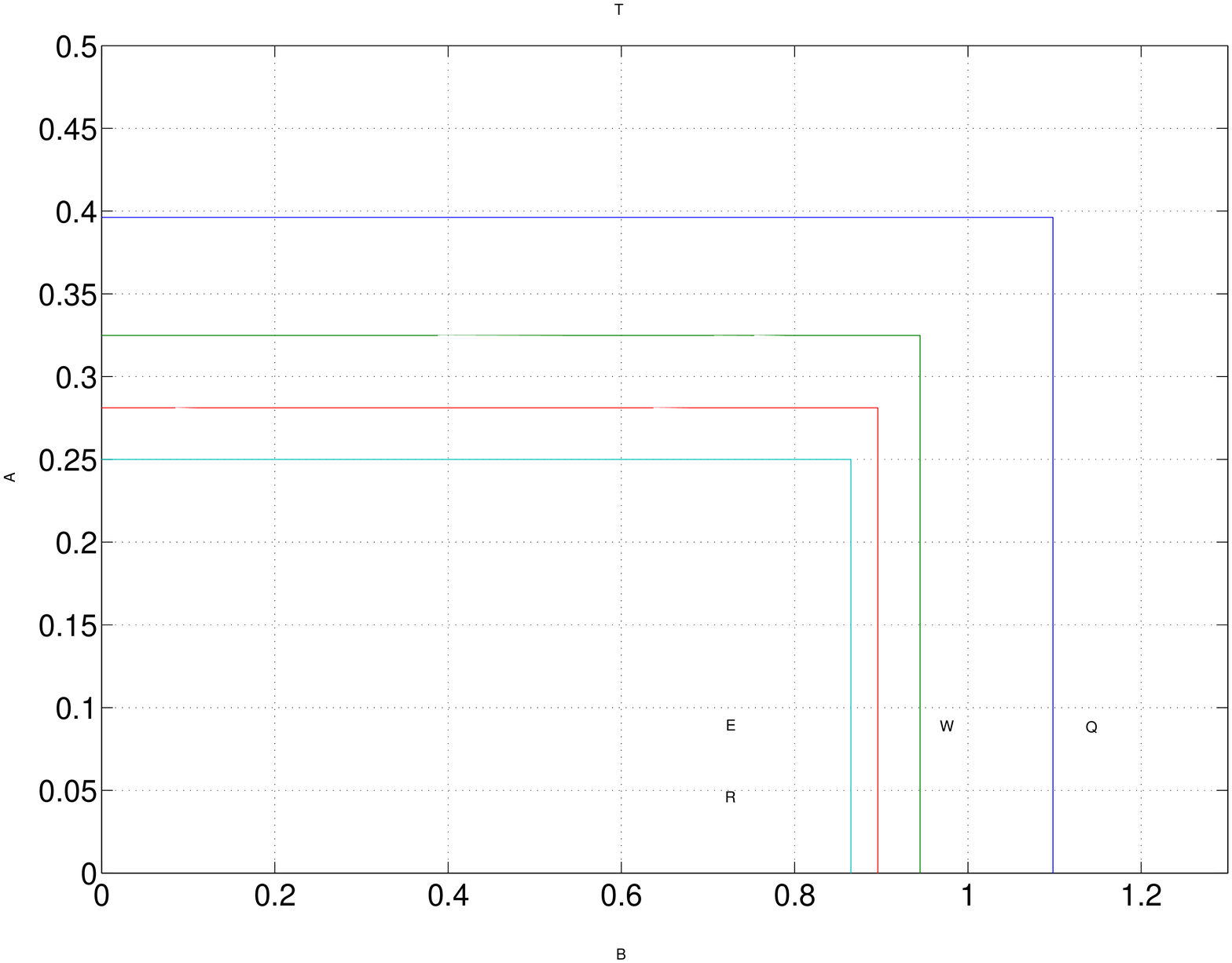}\ \ \ \ \ \ \ \
  \end{center}
\caption{Capacity  rate region for the two states switch channel - symmetrical case $d=d_1=d_2$.}
 \label{Sum rate region d_1=d_2_fading}
}\end{figure}
\newpage
\begin{figure}[ht!]{
  \begin{center}
\psfrag{A}[][][1]{ $R_2$ }  \psfrag{B}[][][1]{$R_1$}
\psfrag{T}[][][1]{ Capacity  rate region ($d_2\leq d_1=\infty$) }
\psfrag{Q}[][][1]{\ \ \ \ \ $d_2=0$}
\psfrag{W}[][][1]{\ \ \ \ \  $d_2=2$}
\psfrag{E}[][][1]{\ \ \ \ \ $d_2=4$}
\psfrag{R}[][][1]{\ \ \ \ \ \ \ $d_2=100$}
\includegraphics[width=12cm]{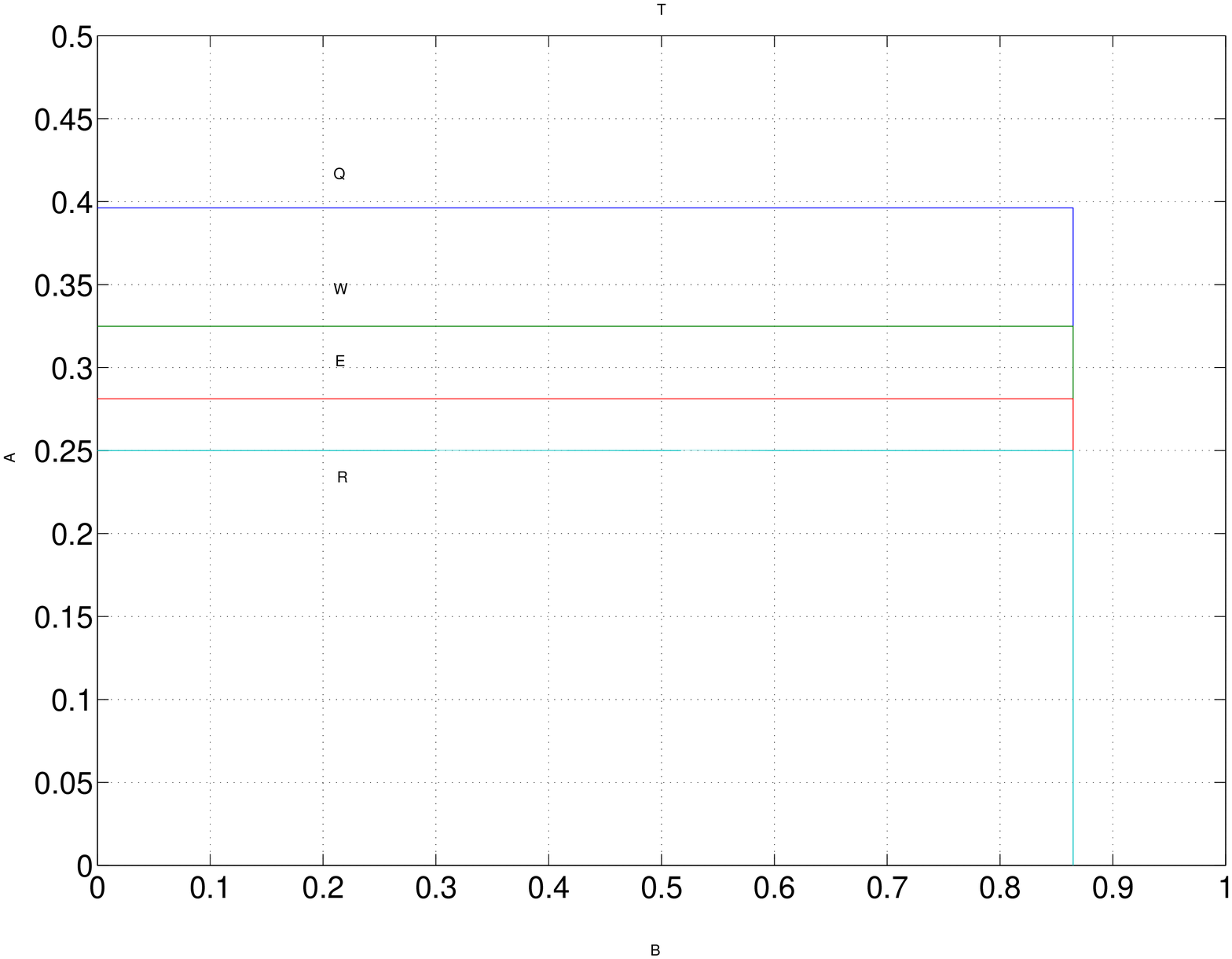}\ \ \ \ \ \ \ \
  \end{center}
\caption{Capacity  rate region for the two states switch channel - Transmitter $1$ does not have the CSI $d_2\leq d_1=\infty$.}
 \label{Sum rate region for the two states_switch-Transmitter $1$ doesn't have the CSI}
}\end{figure}
As one can see from Fig. \ref{Sum rate region asymmetrical_fading}, \ref{Sum rate region d_1=d_2_fading}, and \ref{Sum rate region for the two states_switch-Transmitter $1$ doesn't have the CSI} the capacity rate region shape indicates that the users do not interrupt each other, so each  of them can transmit at its own maximal rate independently of the other user. This makes perfect sense, since the transmission of each one of them is dependent only on the switch and not on the other's transmission.
\subsubsection{Example 2 {\it{(Multiple-Access fading channel)}}}
Consider the power constrained FS Multiple-Access fading  channel as illustrated in Fig.\ref{fading_channel} with only
two states: $S=1$, $S=2$. The state process is Markov and illustrated in Fig. \ref{Two-state AGN channel}, with a slight change, instead of denoting the states "good" and "bad" we use $S=1$, $S=2$.
We solve the optimization problem: $\max (\alpha R_1+R_2)$, for different values of $\alpha$ in the same way we did before. In Fig. \ref{Sum rate region asymmetrical_fading_notsw}, \ref{Sum rate region d_1=d_2_fading_notsw}, and \ref{Sum rate region for the two states_not_switch-Transmitter $1$ doesn't have the CSI} we present the capacity rate region for $\mathcal{P}_1=10$, $\mathcal{P}_2=10$, $\sigma_{s=1}^2=\sigma_{s=2}^2=1$, $g=0.1$, $b=0.1$, $h_1(s=1)=1$, $h_1(s=2)=0.5$, $h_2(s=1)=0.5$, $h_2(s=2)=1$.
\begin{figure}[ht!]{
  \begin{center}
\psfrag{A}[][][1]{ $R_2$ }  \psfrag{B}[][][1]{$R_1$}
\psfrag{T}[][][1]{ Capacity  rate region ($d_2=0$) }
\psfrag{Q}[][][1]{\ \ \ \  $d_1=0$}
\psfrag{W}[][][1]{\ \ \ \ \ \ \ \ $d_1=2\longrightarrow$}
\psfrag{E}[][][1]{\ \ \ \ \ \    $d_1=4\longrightarrow$}
\psfrag{R}[][][1]{\ \ \ \  $d_1=100\rightarrow$}
\includegraphics[width=12cm]{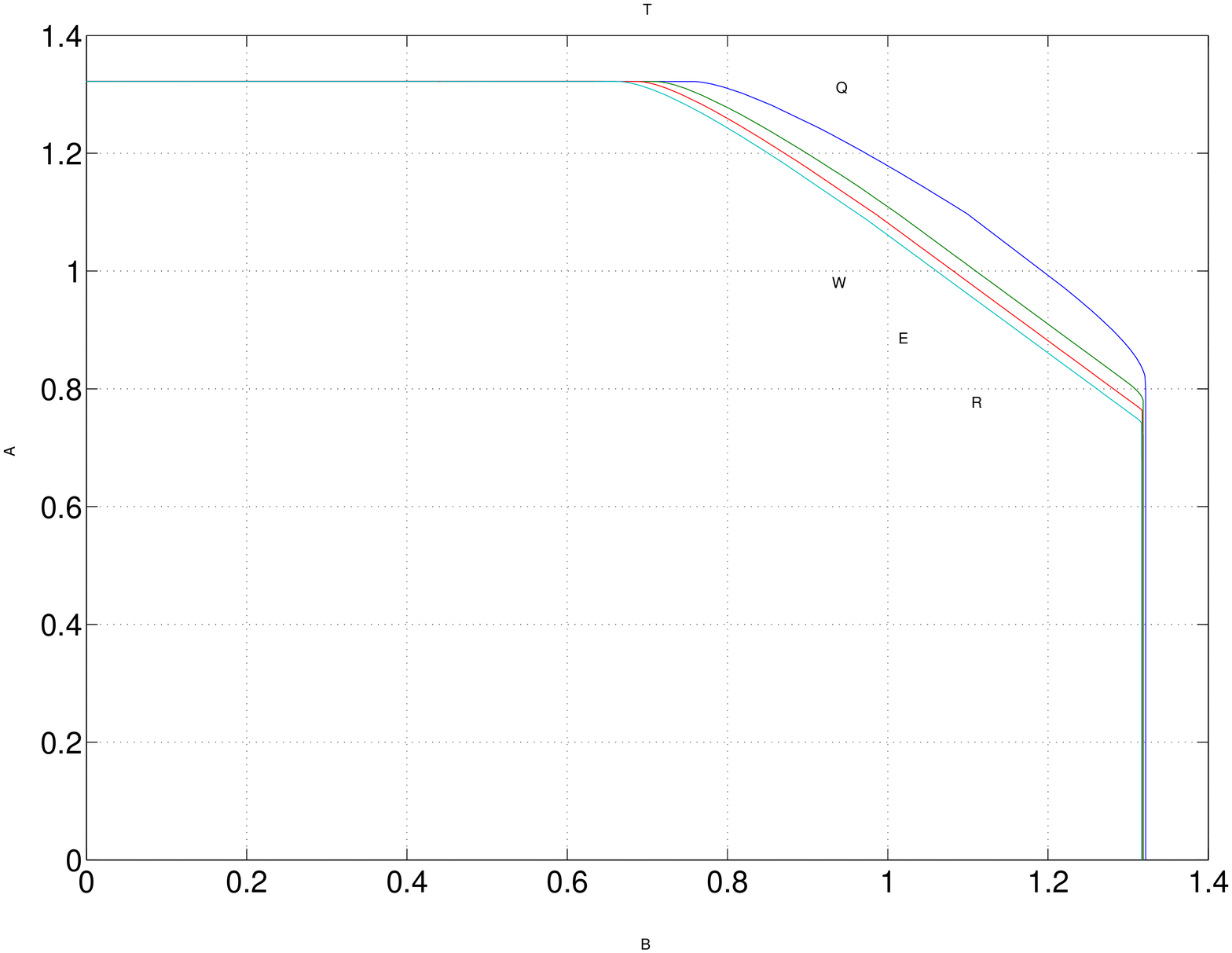}\ \ \ \ \ \ \ \
  \end{center}
\caption{Capacity  rate region for the two states fading channel - asymmetrical case $d_2=0$.}
 \label{Sum rate region asymmetrical_fading_notsw}
}\end{figure}
\begin{figure}[ht!]{
  \begin{center}
\psfrag{A}[][][1]{ $R_2$ }  \psfrag{B}[][][1]{$R_1$}
\psfrag{T}[][][1]{ Capacity  rate region ($d_1=d_2$) }
\psfrag{Q}[][][1]{ \ \  $d=0$}
\psfrag{W}[][][1]{\ \ \ \ $d=2$}
\psfrag{E}[][][1]{\ \ \ \ \ \ \ \ \ $d=4\longrightarrow$}
\psfrag{R}[][][1]{\ \ \  $d=100\rightarrow$}
\includegraphics[width=12cm]{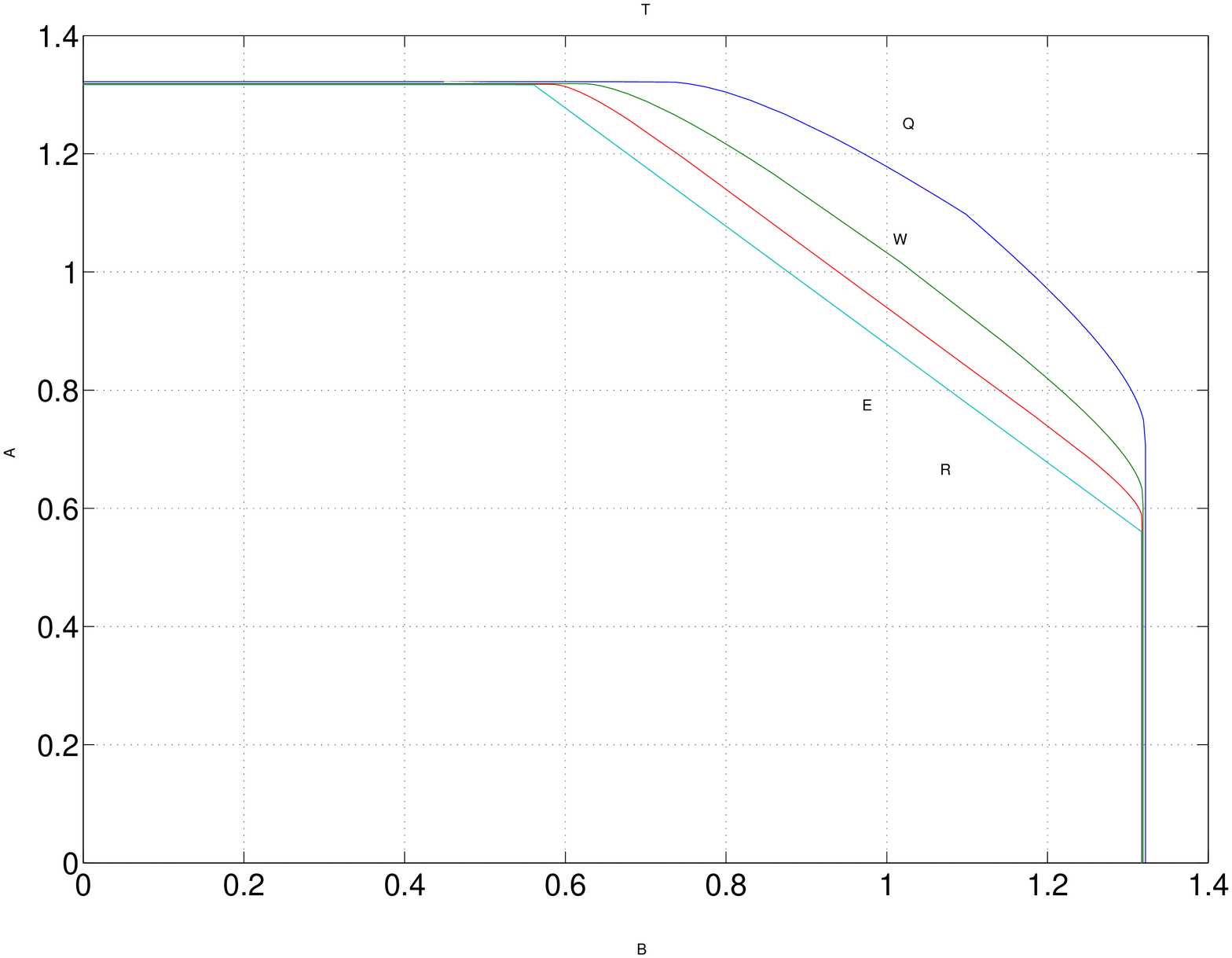}\ \ \ \ \ \ \ \
  \end{center}
\caption{Capacity  rate region for the two states fading channel - symmetrical case $d=d_1=d_2$.}
 \label{Sum rate region d_1=d_2_fading_notsw}
}\end{figure}

\begin{figure}[ht!]{
  \begin{center}
\psfrag{A}[][][1]{ $R_2$ }  \psfrag{B}[][][1]{$R_1$}
\psfrag{T}[][][1]{ Capacity  rate region ($d_2\leq d_1=\infty$) }
\psfrag{Q}[][][1]{\ \ \ \ \ \ \  $d_2=0$}
\psfrag{W}[][][1]{ \ \ $d_2=2\longrightarrow$}
\psfrag{E}[][][1]{\ \ \ \ \ \ \ \    $d_2=4\longrightarrow$}
\psfrag{R}[][][1]{ \ \ \  \ \ \ \ \ \   $d_2=100\rightarrow$}
\includegraphics[width=12cm]{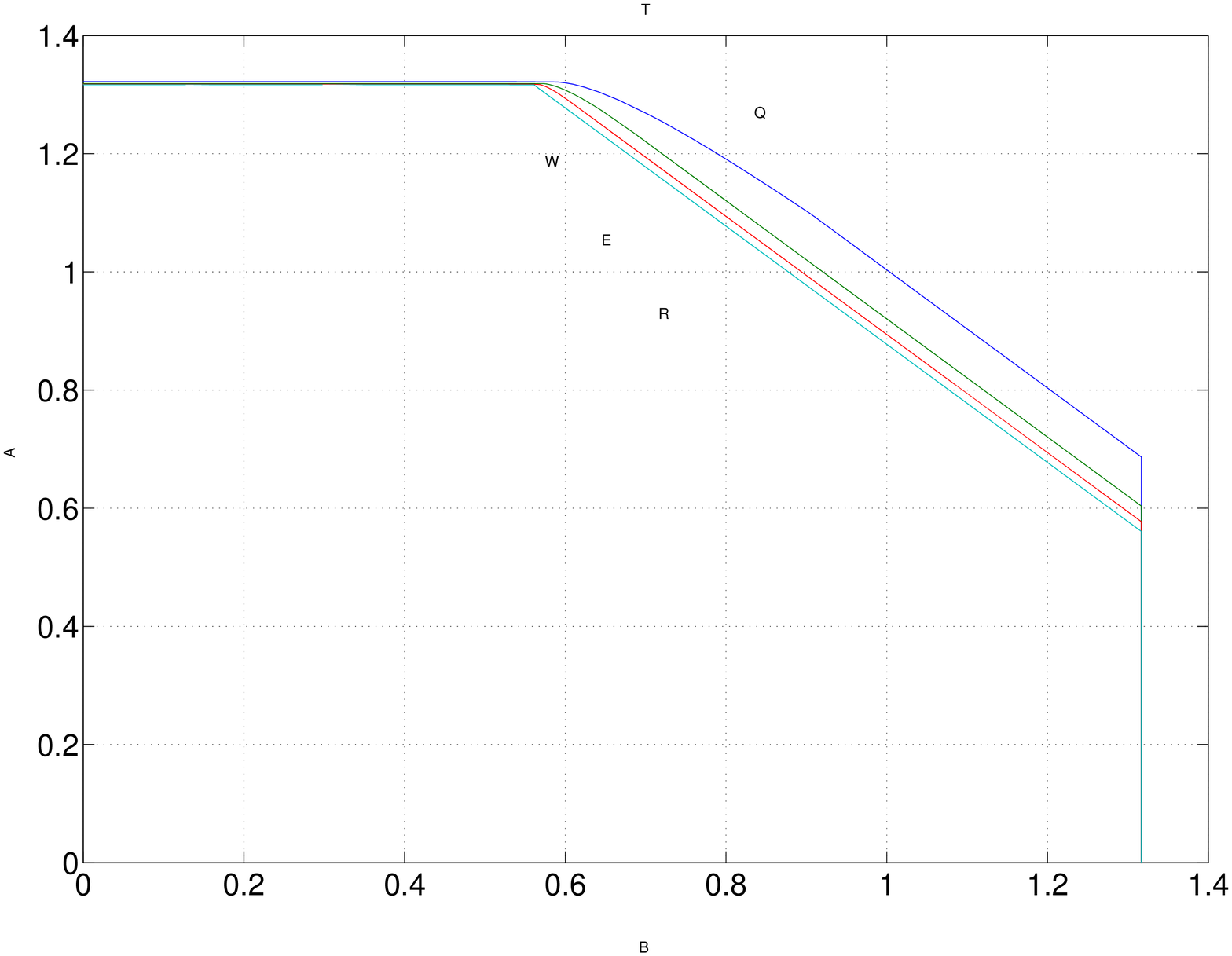}\ \ \ \ \ \ \ \
  \end{center}
\caption{Capacity  rate region for the two states fading channel - Transmitter $1$ does not have the CSI $d_2\leq d_1=\infty$.}
 \label{Sum rate region for the two states_not_switch-Transmitter $1$ doesn't have the CSI}
}\end{figure}
\newpage

\section{SUMMARY}\label{summery}
The requirement for high rates multi-user communications systems is constantly increasing,
so it becomes essential to achieve capacity
by deriving the benefit from the channel structure. Motivated by
this we studied the problem of finite-state MAC,
where the channel state is a Markov process, the transmitters have access to
delayed state information, and channel
state information is available at the receiver. The delays
of the channel state information is assumed to be asymmetric
at the transmitters. We obtained a
computable characterization of the capacity region for this channel.
 We provide the upper bound on the capacity region  and  the proof of the achievability, which is based on multiplexing coding. In addition, we provide alternative proof for the capacity region.
 The alternative proof is based on a multi-letter expression for the capacity region of FS-MAC with time-invariant feedback.
 Then we apply the result to derive power control strategies to maximize the capacity region for finite-state additive Gaussian MAC, and for the multiple-access fading channel.
 The results and the insight in this paper are an intermediate step toward understanding  network communication with delayed state information.
\appendices
\section{CARDINALITY BOUND OF THE AUXILIARY RANDOM VARIABLE $U$}\label{cardinality_proof}
Let us prove now the cardinality bound for Theorem 1, which is derived directly from the Fenchel - Eggleston - Carath\'eodry theory \cite{cardinality}. Let us denote the set ${\cal Z}$ to be ${\cal Z} \triangleq {\cal X}_1 \times {\cal X}_2 \times  {\cal S} \times \tilde{\cal S}_1 \times \tilde{\cal S}_2$, let ${\cal P}({\cal Z})$ be the set of PMFs on ${\cal Z}$, and let ${\cal P}({\cal Z} | {\cal U}) \subseteq {\cal P}({\cal Z})$ be a collection of PMFs $p(z|u)$ on ${\cal Z}$ indexed by $u \in \cal U$. Let $g_j,\ j=1, \dots, k$ be continues functions on ${\cal P}({\cal Z}|{\cal U})$. Then, for any $U \sim F_U(u)$, there exists a finite random variables $U' \sim p(u')$ taking at most $k$ values in $\cal U$ such that
\begin{eqnarray}
    \mathbb{E} \Big[g_j(p_{Z|U}(z|U))\Big] &=& \int_{\cal U} g_j(p_{Z|U}(z|u)){\rm d}F(u)\\
    &=& \sum_{u'} g_j(p_{Z|U}(z|u'))p(u').
\end{eqnarray}
Let us denote,
\begin{eqnarray}
    g_1 \big( p(z|u)\big) &=& I(X_1;Y|X_2, S, \tilde{S}_1, \tilde{S}_2, U=u)\\
    g_2 \big( p(z|u)\big) &=& I(X_2;Y|X_1, S, \tilde{S}_1, \tilde{S}_2, U=u)\\
    g_3 \big( p(z|u)\big) &=& I(X_1,X_2;Y|S, \tilde{S}_1, \tilde{S}_2, U=u),
\end{eqnarray}
then, by using the given technique, we can see that $|{\cal U}| \leq 3$.
By utilizing the same technique, and similar considerations, we can bound the cardinality of the auxiliary variable in Theorem 2 to be $|{\cal U}| \leq 3$ and the cardinality of the auxiliary variable in Theorem 3 to be  $|{\cal Q}| \leq 3$.
\section{PROOF OF THEOREM \ref{Capacity region-  MAC with delayed CSI feedback only to one encoder}}\label{PROOF OF THEOREM3}
The proof of Theorem  \ref{Capacity region-  MAC with delayed CSI feedback only to one encoder} is similar to the case where the CSI is available at  the decoder and asymmetrical delayed  CSI is available at the encoders with delays $d_1$ and $d_2$ ($d_1\geq d_2$), only  now $d_1\rightarrow\infty$. We give here the proof of the converse, and only a brief outline of the  achievability proof. Since only encoder $2$ has the CSI we denote $d=d_2$ and $\tilde{S}=\tilde{S}_2$.
\subsection{Converse Theorem  \ref{Capacity region-  MAC with delayed CSI feedback only to one encoder}}
Given an  achievable rate $(R_{1},R_{2})$ we need to show that there exists joint distribution of the form
$P(s,\tilde{s})P(q)P(x_{1}|q)P(x_{2}|\tilde{s},q)P(y|x_{1},x_{2},s)$ such that,
\begin{eqnarray}
R_{1}<I(X_{1};Y|X_{2},S,\tilde{S},Q),\nonumber \\
R_{2}<I(X_{2};Y|X_{1},S,\tilde{S},Q),\nonumber \\
R_{1}+R_{2}<I(X_{1},X_{2};Y|S,\tilde{S},Q),\nonumber
\end{eqnarray}
where $Q$ is an random variable with a cardinality bound  $|{\cal Q}| \leq 3$. The proof of the cardinality bound is similar to the proof in Appendix \ref{cardinality_proof}.
Since $(R_1,R_2)$ is an achievable pair-rate, there exists a code $(n,2^{nR_1},2^{nR_2},d)$ with a probability of error  $P_{e}^{(n)}$ arbitrarily small. By Fano's inequality,
\begin{eqnarray}
H(M_1,M_2|Y^n,S^n)\leq n(R_1+R_2)P_{e}^{(n)}+H(P_{e}^{(n)})\triangleq n\varepsilon_n,
\end{eqnarray}
and it is clear that $\varepsilon_n\rightarrow0$ as $P_{e}^{(n)}\rightarrow\infty$. Then we have
\begin{eqnarray}
H(M_1|Y^n,S^n)\leq H(M_1,M_2|Y^n,S^n)\leq \varepsilon_n, \\
H(M_2|Y^n,S^n)\leq H(M_1,M_2|Y^n,S^n)\leq \varepsilon_n.
\end{eqnarray}

We can now bound the rate $R_1$ as
  \begin{eqnarray}
  nR_{1}&=& H(M_1)  \nonumber\\
        &=& H(M_1)+H(M_1|Y^{n},S^{n})-H(M_1|Y^{n},S^{n})                     \nonumber\\
        &\stackrel{(a)}\leq& I(M_1;Y^n,S^{n})+n \varepsilon_n                    \nonumber\\
        &\stackrel{(b)}=& I(M_1;Y^n|S^{n})+I(M_1;S^{n})+n \varepsilon_n     \nonumber\\
        &\stackrel{(c)}=& I(M_1;Y^n|S^{n})+n \varepsilon_n                       \nonumber\\
        &\stackrel{(d)}=&  I(X_{1}^{n};Y^n|S^{n})+n \varepsilon_n             \nonumber\\
        &=& H(X_{1}^{n}|S^{n})-H(X_{1}^{n}|Y^n,S^{n})+n \varepsilon_n        \nonumber\\
        &\stackrel{(e)}=& H(X_{1}^{n}|X_{2}^{n},S^{n})-H(X_{1}^{n}|Y^n,S^{n})+n \varepsilon_n       \nonumber\\
        &\stackrel{(f)}\leq& H(X_{1}^{n}|X_{2}^{n},S^{n})-H(X_{1}^{n}|Y^n,X_{2}^{n},S^{n})+n \varepsilon_n  \nonumber\\
        &=& I(X_{1}^{n};Y^n|X_{2}^{n},S^{n})+n \varepsilon_n        \nonumber\\
        &=& H(Y^{n}|X_2^{n},S^{n})-H(Y^{n}|X_1^{n},X_2^{n},S^{n})+n \varepsilon_n  \nonumber\\
        &=& \sum_{i=1}^{n} H(Y_{i}|Y^{i-1},X_{2}^{n},S^{n})- H(Y_{i}|Y^{i-1},X_{1}^{n},X_{2}^{n},S^{n})
        +n \varepsilon_n \nonumber\\
        &\stackrel{(g)}\leq& \sum_{i=1}^{n} H(Y_{i}|X_{2,i},S_{i},S_{i-d})- H(Y_{i}|Y^{i-1},X_{1}^{n},X_{2}^{n},S^{n})
        +n \varepsilon_n \nonumber\\
        &\stackrel{(h)}=&  \sum_{i=1}^{n} H(Y_{i}|X_{2,i},S_{i},S_{i-d})- H(Y_{i}|X_{1,i},X_{2,i},S_{i},S_{i-d})
        +n \varepsilon_n \nonumber\\
        &=& \sum_{i=1}^{n}I(Y_{i};X_{1,i}|X_{2,i},S_{i},S_{i-d})+n \varepsilon_n , \nonumber
  \end{eqnarray}
  where\\
  (a) follows from Fano's inequality.\\
  (b) follows from chain rule.\\
  (c) follows from the fact that  $M_1$ and $S^{n}$ are independent.\\
  (d) follows from the fact that $X_{1}^{n}$ is a deterministic function of $(M_1,S^n)$ and the Markov chain $(M_1,S^n) - (X_{1}^n,S^n) - Y^n$.\\
  (e) follows from the fact that $X_{1}^{n}$ and $M_2$ are independent, and the fact that $X_{2}^{n}$ is a deterministic  function of $(M_2,S^n)$. Therefore,  $X_{1}^{n}$ and $X_{2}^{n}$ are independent given $S^{n}$.\\
  (f) and (g) follow from the fact that conditioning reduces entropy.\\
  (h) follows from the fact  that the channel output at time $i$ depends only on the state $S_i$ and the the inputs $X_{1,i}$ and $X_{2,i}$.\\
  Hence, we have
  \begin{eqnarray}
  R_{1}\leq \frac{1}{n}\sum_{i=1}^{n}I(Y_{i};X_{1,i}|X_{2,i},S_{i},S_{i-d})+ \varepsilon_n.  \label{r1t3}.
  \end{eqnarray}
Similarly, we have
  \begin{eqnarray}
  R_{2}\leq \frac{1}{n}\sum_{i=1}^{n}I(Y_{i};X_{2,i}|X_{1,i},S_{i},S_{i-d})+ \varepsilon_n.  \label{r2t3},
\end{eqnarray}
and the sum rate,
\begin{eqnarray}
   R_{1}+R_{2}\leq \frac{1}{n}\sum_{i=1}^{n}I(Y_{i};X_{1,i},X_{2,i}|S_{i},S_{i-d})+ \varepsilon_n. \label{r1+r2t3}
  \end{eqnarray}
  The expressions in (\ref{r1t3}), (\ref{r2t3}), and (\ref{r1+r2t3}) are the average of the mutual informations calculated at
  the empirical distribution in column $i$ of the codebook. We can rewrite these equations with the new variable Q,
  where $Q=i\in \{1,2,...,n\}$ with probability $\frac{1}{n}$. The equations become
  \begin{eqnarray}
  R_{1}&\leq& \frac{1}{n}\sum_{i=1}^{n}I(Y_{i};X_{1,i}|X_{2,i},S_{i},S_{i-d})+ \varepsilon_n \nonumber\\
        &=&  \frac{1}{n}\sum_{i=1}^{n}I(Y_{Q};X_{1,Q}|X_{2,Q},S_{Q},S_{Q-d},Q=i)+ \varepsilon_n \nonumber\\
        &=& I(Y_{Q};X_{1,Q}|X_{2,Q},S_{Q},S_{Q-d},Q)+ \varepsilon_n.
  \end{eqnarray}
  Now let us denote $ X_{1}\triangleq X_{1,Q}, X_{2} \triangleq X_{2,Q}, Y \triangleq Y_{Q}, S\triangleq S_{Q}$, and $\tilde{S}\triangleq S_{Q-d}$.
  \\we have
   \begin{eqnarray}
   R_{1}&\leq& I(X_{1};Y|X_{2},S,\tilde{S},Q)+ \varepsilon_n , \nonumber\\
   R_{2} &\leq& I(X_{2};Y|X_{1},S,\tilde{S},Q) + \varepsilon_n, \nonumber\\
   R_{1}+R_{2}&\leq& I(X_{1},X_{2};Y|S,\tilde{S},Q) + \varepsilon_n. \nonumber
   \end{eqnarray}
Now we need to show the following Markov relations hold:
  \begin{enumerate}
  \item $P(q|s,\tilde{s})=P(q)$ .
  \item $P(x_1|s,\tilde{s},q)=P(x_1|q)$.
  \item $P(x_2|x_1,s,\tilde{s},q)=P(x_2|\tilde{s},q)$.
  \item $P(y|x_1,x_2,s,\tilde{s},q)=P(y|x_1,x_2,s)$.
  \end{enumerate}
  We prove the above using the following claims:
  \begin{enumerate}
  \item  follows from the fact that $Q$ and the state process $S^n$ are  independent.
  \item  follows from the fact that $X_{1,i}=f_{1,i}(M_1)$ and that $M_1$ and $S^{n}$ are independent.

  \item  follows from the fact that $M_2$ and $(M_1,S^{n})$ are independent, and the fact that state process is a Markov chain, Hence
  \begin{eqnarray}
   P(m_{2},s^{i-d}|s_{i},s_{i-d},m_{1})&=&P(m_2,s^{i-d}|s_{i-d}).\nonumber
  \end{eqnarray}
  Therefore, we have the Markov chain $(M_{2},S^{i-d})-S_{i-d}-(M_1,S_i)$. Since $X_{1,i}=f_{1,i}(M_1)$ and $X_{2,i}=f_{2,i}(M_2,S^{i-d})$, where $f_{1,i},f_{2,i}$ are deterministic functions, we get the following Markov chain,
   \begin{eqnarray}
  X_{2,i}-(M_{2},S^{i-d})-S_{i-d}-(M_1,S_i)-X_{1,i}.
  \end{eqnarray}
  Therefore,
  \begin{eqnarray}
  P(x_{2,i}|x_{1,i},s_{i},s_{i-d})&=&P(x_{2,i}|s_{i-d}). \nonumber
  \end{eqnarray}
  Since this is true for all $i$,
  \begin{eqnarray}
  P(x_{2,q}|x_{1,q},s_{q},s_{q-d},q)&=&P(x_{2,q}|s_{q-d},q).\nonumber
  \end{eqnarray}
  We have $P(x_2|x_1,s,\tilde{s},q)=P(x_2|\tilde{s},q)$.
  \item  follows from the fact that the channel output at time $i$ depends only on the state $S_i$ and the the inputs $X_{1,i}$ and $X_{2,i}$.\\
  \end{enumerate}
  Hence, taking the limit as $n \rightarrow\infty$, $P_{e}^{(n)}\rightarrow 0$, we have the following converse:
  \begin{eqnarray}
   R_{1}&\leq& I(X_{1};Y|X_{2},S,\tilde{S},Q), \nonumber\\
   R_{2} &\leq& I(X_{2};Y|X_{1},S,\tilde{S},Q), \nonumber\\
   R_{1}+R_{2}&\leq& I(X_{1},X_{2};Y|S,\tilde{S},Q), \nonumber
  \end{eqnarray}
  for some choice of joint distribution $P(s,\tilde{s})P(q)P(x_{1}|q)P(x_{2}|\tilde{s},q)P(y|x_{1},x_{2},s)$
  and for some choice of random variable $Q$ defined on $|{\cal Q}| \leq 3$. This completes the proof of the converse.
\subsection{Achievability Theorem  \ref{Capacity region-  MAC with delayed CSI feedback only to one encoder}}
To prove the achievability of the capacity region, we need to show that for a fixed $P(x_{1})P(x_{2}|\tilde{s})$ and $(R_1,R_2)$ that satisfy,
   \begin{eqnarray}
   R_{1}&\leq& I(X_{1};Y|X_{2},S,\tilde{S}), \nonumber\\
   R_{2} &\leq& I(X_{2};Y|X_{1},S,\tilde{S}), \nonumber\\
   R_{1}+R_{2}&\leq& I(X_{1},X_{2};Y|S,\tilde{S}), \nonumber
  \end{eqnarray}
  there exists a sequence of $(n,2^{nR_{1}},2^{nR_{2}},d)$ codes where $P_{e}^{(n)}\rightarrow0$ as $n\rightarrow\infty$.
  Without loss of generality we assume that the finite-state space $\mathcal{S}=\left\{1,2,...,k\right\}$, and that the steady state probability   $\pi(l)>0$ for all $l\in \mathcal{S}$.

\emph{Encoder 1}: construct $2^{nR_1}$ independent codewords $X_{1}^n(i)$ where $i\in \left\{1,2,..,2^{nR_1} \right\}$ of length $n$, generate each symbol i.i.d., $X_1^n(i)\sim \prod_{l=1}^{n} P(X_{1,l})$.

 \emph{Encoder 2}: construct $k$  codebooks $\mathcal{C}^{\tilde{s}}_{2}$  (where the subscript is for Encoder $2$) for all $\tilde{S}\in \mathcal{S}$, when in each codebook $\mathcal{C}^{\tilde{s}}_{2}$ there are $2^{n_{2}(\tilde{s})R_{2}(\tilde{s})}$ codewords, where $n_{2}(\tilde{s})=(P(\tilde{S}=\tilde{s})-\epsilon')n$, for $\epsilon'>0$. Every codeword $\mathcal{C}^{\tilde{s}}_{2}(i)$ where $i\in \{1,2,..., 2^{n_{2}(\tilde{s})R_{2}(\tilde{s})}\}$ has a length of $n_{2}(\tilde{s})$ symbols.
 Each codeword  from the $\mathcal{C}^{\tilde{s}}_{2}$  codebook is built $X^{\tilde{s}}_{2}\thicksim$ i.i.d. $ P(x^{\tilde{s}}_{2}|\widetilde{S}=\tilde{s})$ (where the subscript is for Encoder $2$).
 A message $M_2$ is chosen according to a uniform distribution $\Pr (M_2=m_2)=2^{-nR_2}$, $m_2\in\left\{1,2,...,2^{nR_2}\right\}$.
 Every message $m_2$ is mapped into $k$ sub messages  $\mathcal{V}_{2}(m_2)=\left\{V^{1}_{2}(m_2),V^{2}_{2}(m_2),...,V^{k}_{2}(m_2)\right\}$ (one message from each codebook). Hence, every message  $m_{2}$ is specified by a $k$ dimensional vector.
 For a fix block length $n$, let $N_{\tilde{s}}$ be the number of times during the $n$ symbols
for which the feedback information at  encoder $2$ regarding the
channel state is $\tilde{S}=\tilde{s}$.
 Every time that the delayed CSI is $\tilde{S}=\tilde{s}$, encoder $2$ sends the next symbol from $\mathcal{C}^{\tilde{s}}_{2}$ codebook.
 Since $N_{\tilde{s}}$ is not necessarily
  equivalent to $n_2(\tilde{s})$, an error is declared if $N_{\tilde{s}}<n_2(\tilde{s})$, and the code is zero-filled
if $N_{\tilde{s}}>n_2(\tilde{s})$.
 Therefore we can send total of
 $2^{nR_{2}}=2^{\sum_{\tilde{s}\in \mathcal{S} }n_{2}(\tilde{s})R_2(\tilde{s})}$ messages.

\emph{Decoding}: we use successive decoding, similar to the decoding in section \ref{ACHIEVABILITY }.
It can be shown that  the probability of error, conditioned on a particular codeword being sent, goes to zero if the conditions of the following  are met:
\begin{eqnarray}
   R_{1}&\leq& I(X_{1};Y|X_{2},S,\tilde{S}), \nonumber\\
   R_{2} &\leq& I(X_{2};Y|X_{1},S,\tilde{S}), \nonumber\\
   R_{1}+R_{2}&\leq& I(X_{1},X_{2};Y|S,\tilde{S}).\nonumber
  \end{eqnarray}
  The above bound shows that the average probability of error, which by symmetry is equal to the probability for an individual pair of codewords $(m_1,m_2)$, averaged over all choices of codebooks in the random code construction, is arbitrarily small. Hence there exists at least one code $(n,2^{nR_{1}},2^{nR_{2}},d)$ with an arbitrarily small probability of error. To complete the proof we use time-sharing to allow any $(R_1,R_2)$ in the convex hull to be achieved.
\section{DETERMINATION OF THE TWO-STATE MAC CAPACITY REGION}\label{DETERMINATION OF THE TWO-STATE MAC CAPACITY REGION}
\vspace{-3 mm}
For simplicity we give here  the solution to the constrained optimization only for the symmetrical case, i.e., both CSI delays are the same ($d_1=d_2$), the solution of the other cases are obtained in a similar way. The optimization problem is:
\begin{eqnarray}
R_1+R_2=\max_{\mathcal{P}_1(\tilde{s}),\mathcal{P}_2(\tilde{s})} \frac{1}{2}\sum_{\tilde{s}}\pi(\tilde{s})\sum_{s}K^{d}(s,\tilde{s})\log\left(1+\frac{\mathcal{P}_1(\tilde{s})+\mathcal{P}_2(\tilde{s})}{\sigma_{s}^2}\right),
\end{eqnarray}
subject to the power constraints,
\begin{eqnarray}
  &&\sum_{\tilde{s}}\pi(\tilde{s})\mathcal{P}_1(\tilde{s}) \leq \mathcal{P}_1 ,\\
  &&\sum_{\tilde{s}}\pi(\tilde{s})\mathcal{P}_2(\tilde{s}) \leq \mathcal{P}_2 ,\\
  &&\mathcal{P}_1(\tilde{s})\geq 0   \ \  \forall \tilde{s},\\
  &&\mathcal{P}_2(\tilde{s})\geq 0   \ \  \forall \tilde{s}.
  \end{eqnarray}
The solution can be obtained by the Lagrange multiplier method. Since the objective function is monotonically increasing with  respect $\mathcal{P}_1$, and $\mathcal{P}_2$, it follows that the maximum is achieved when
 \begin{eqnarray}
  &&\sum_{\tilde{s}}\pi(\tilde{s})\mathcal{P}_1(\tilde{s}) = \mathcal{P}_1 ,\label{constraint1} \\
  &&\sum_{\tilde{s}}\pi(\tilde{s})\mathcal{P}_2(\tilde{s}) = \mathcal{P}_2 .\label{constraint2}
  \end{eqnarray}
Since $\log$ is a concave function, and  $\pi(\tilde{s}),K^{d}(s,\tilde{s})\geq 0$.  We get that objective function is concave in both variables $\mathcal{P}_1(\tilde{s})$, and $\mathcal{P}_2(\tilde{s})$. Also the constraints functions (\ref{constraint1}), and (\ref{constraint2}) are affine. So we can use the Kuhn-Tucker conditions \cite[Chapter 5.3.3] {Boyd} as a sufficient conditions to solve the optimization problem. Application of the Kuhn-Tucker conditions gives the following conditions of optimality:                                                                                                                                                                                                                                                                                                                                                                                                                                                                                                                                                                                                                                                                                                                                                                                                                                                                                                         \begin{eqnarray}
&&\frac{1}{2}\sum_{s}\frac{K^{d}(s,\tilde{s}_{i})}{\sigma_{s}^{2}+\mathcal{P}_1^{*}(\tilde{s}_{i})+\mathcal{P}_2^{*}(\tilde{s}_{i})}\leq \nu_1  \ \ \ \ , \ \ \ \forall \tilde{s}_{i}\in \{s_1,s_2,..,s_k\}, \label{kkt1} \\
&&\frac{1}{2}\sum_{s}\frac{K^{d}(s,\tilde{s}_{i})}{\sigma_{s}^{2}+\mathcal{P}_1^{*}(\tilde{s}_{i})+\mathcal{P}_2^{*}(\tilde{s}_{i})}\leq \nu_2  \ \ \ \ , \ \ \ \forall \tilde{s}_{i}\in \{s_1,s_2,..,s_k\},\label{kkt2} \\
&&\sum_{\tilde{s}}\pi(\tilde{s})\mathcal{P}_1^{*}(\tilde{s}) = \mathcal{P}_1 , \\
&&\sum_{\tilde{s}}\pi(\tilde{s})\mathcal{P}_2^{*}(\tilde{s}) = \mathcal{P}_2 ,
\end{eqnarray}
with equality in (\ref{kkt1}) whenever  $\mathcal{P}_1^{*}(\tilde{s}_i)\geq 0$, and equality in (\ref{kkt2}) whenever  $\mathcal{P}_2^{*}(\tilde{s}_i)\geq 0$. For the two state Gaussian MAC example in Section \ref{GAUSSIAN } we have,
\begin{eqnarray}
K^d=
\begin{bmatrix}
  1-\frac{g}{g+b}\left(1-(1-g-b)^d\right) & \frac{g}{g+b}(1-(1-g-b)^d) \\
  \frac{b}{b+g}(1-(1-g-b)^d) & 1-\frac{b}{b+g}(1-(1-g-b)^d) \\
\end{bmatrix}
\end{eqnarray}
Now the solution to the constrained optimization problem is obtained by finding $\mathcal{P}_1^{*}(\tilde{s}_i)$, and $\mathcal{P}_2^{*}(\tilde{s}_i)$ that satisfy the Kuhn-Tucker  conditions. For simplicity, in order to solve the optimization problem we used \texttt{CVX}, a package for specifying and solving convex optimization problems \cite{cvx}.

\end{document}